\title{Deleting and Testing Forbidden Patterns in Multi-Dimensional Arrays} 
\author{
Omri Ben-Eliezer\thanks{Blavatnik School of Computer Science, Tel Aviv University, Tel Aviv, Israel \tt{omrib@mail.tau.ac.il}} \and Simon Korman\thanks{Computer Science Department, University of California at Los-Angeles {\tt{simon.korman@gmail.com}}} \and Daniel Reichman\thanks{Electrical Engineering and Computer Science, University of California at Berkeley {\tt daniel.reichman@gmail.com}}}
\newcommand{\Mod}[1]{\ (\text{mod}\ #1)}
\newcommand{\ourParagraph}[1] {\vspace{5pt}\noindent\textbf{\large #1}\vspace{3pt}\\}
\theoremstyle{plain}
\newtheorem{theorem}{Theorem}
\newtheorem{lemma}[theorem]{Lemma}
\newtheorem{corollary}[theorem]{Corollary}
\newtheorem{claim}[theorem]{Claim}
\newtheorem{observation}[theorem]{Observation}
\theoremstyle{definition}
\newtheorem{definition}[theorem]{Definition}
\theoremstyle{remark}
\newtheorem*{remark*}{Remark}
\begin{document}

\maketitle

\begin{abstract}
Understanding the local behaviour of structured multi-dimensional data is a fundamental problem in various areas of computer science. As the amount of data is often huge, it is desirable to
obtain sublinear time algorithms, and specifically property testers, to understand local properties of the data.

We focus on the natural local problem of testing \emph{pattern freeness}:
given a large $d$-dimensional array $A$ and a fixed $d$-dimensional pattern $P$ over a finite alphabet $\Gamma$,
we say that $A$ is \emph{$P$-free} if it does not contain a copy of the forbidden pattern $P$ as a consecutive subarray.
The distance of $A$ to $P$-freeness is the fraction its entries that need to be modified to make it $P$-free.
For any $\epsilon \in [0,1]$ and any large enough pattern $P$ -- other than a very small set of exceptional patterns --
we design a \emph{tolerant tester} that distinguishes between the case that the distance is at least $\epsilon$
and the case that it is at most $a_d \epsilon$, with query complexity and running time $c_d \epsilon^{-1}$,
where $a_d < 1$ and $c_d$ depend only on the dimension $d$.
For the 1-dimensional case, we provide a linear time algorithm for computing the distance from $P$-freeness.

To analyze the testers we establish several combinatorial results, including the following $d$-dimensional \emph{modification lemma}, which might be of independent interest: for any large enough $d$-dimensional pattern $P$ over any alphabet (excluding a small set of exceptional patterns for the binary case), and any $d$-dimensional array $A$ containing a copy of $P$, one can delete this copy by modifying one of its locations without creating new $P$-copies in $A$.

Our results address an open question of Fischer and Newman, who asked whether there exist efficient testers for properties related to tight substructures in multi-dimensional structured data. They serve as a first step towards a general understanding of local properties of multi-dimensional arrays.
\end{abstract} 


\section{Introduction}
Pattern matching is the algorithmic problem of finding occurrences of a fixed pattern in a given string. This problem appears in many settings and has applications in diverse domains such as computational biology, computer vision, natural language processing and web search. There has been extensive research concerned with developing algorithms that search for patterns in strings, resulting with a wide range of efficient algorithms \cite{Boyer,knuth,Galil,Czumaj,Navarro,Lecroq}. Higher-dimensional analogues where one searches for a $d$-dimensional pattern in a $d$-dimensional array have received attention as well. For example, the 2D case arises in analyzing aerial photographs \cite{Amir0,Amir1} and the 3D case has applications in medical imaging.

Given a string $S$ of length $n$ and a pattern $P$ of length $k\le n$, any algorithm which determines whether $P$ occurs in $S$ has running time $\Omega(n)$ \cite{Cole,Rivest} and a linear lower bound carries over to higher dimensions. For the 2D and 3D case, when an $n \times n$ image is concerned, algorithms whose run time is $O(n^2)$ are known \cite{Amir1}. These algorithms have been generalized to the 3D case to yield running time of $O(n^3)$ \cite{Galil3D}. Finally it is also known (e.g., \cite{Kark}) that for the $d$-dimensional case it is possible to solve the pattern matching problem in time $O(d^2 n^d\log m )$ (where the pattern is an array of size $m^d$). It is natural to ask which tasks of this type can be performed in sublinear (namely $o(n^d)$) time for $d$-dimensional arrays.

The field of \emph{property testing} \cite{Goldreich,Rubinfeld} deals with decision problems regarding discrete objects (e.g., graphs, functions, images) that either have a certain property $P$ or are \emph{far} from satisfying $P$. Here, we are interested in deciding quickly whether a given $d$-dimensional array $A$ is far from \emph{not} containing a fixed $d$-dimensional pattern $P$.
\emph{Tolerant} property testing \cite{Parnas} is a useful extension of the standard notion, in which the tester needs to distinguish between objects that are \emph{close} to satisfying the property to those that are \emph{far} from satisfying it. 


A $d$-dimensional $k_1 \times \ldots \times k_d$ array $A$ over
an alphabet $\Gamma$ is a function from $[k_1] \times \ldots \times [k_d]$ to $\Gamma$.
For simplicity of presentation, all results in this paper will be presented for cubic arrays in which $k_1 = \ldots = k_d$,
but they generalize to non-cubic arrays in a straightforward manner.
We consider the (tolerant) \emph{pattern-freeness problem} where one needs to distinguish between the case that a given $d$-dimensional array $A$ is
$\epsilon_1$-close to being $P$-free for a fixed pattern $P$, and the case that $A$ is $\epsilon_2$-far from being $P$-free, where $\epsilon_1 < \epsilon_2$.
An $(\epsilon_1, \epsilon_2)$-tester $Q$ for this problem is a randomized algorithm that is given access to an array $A$, as well as its size and proximity parameters $0 \leq \epsilon_1 < \epsilon_2 < 1$. $Q$ needs to distinguish with probability at least $2/3$ between the case that $A$ is is $\epsilon_1$-close to being $P$-free and the case that $A$ is $\epsilon_2$-far from being $P$-free. The query complexity of $Q$ is the number of queries it makes in $A$. 

Our interest in the pattern-freeness problem stems from several applications. In certain scenarios of interest, we might be interested in identifying quickly that an array is far from not containing a given pattern. For the one dimensional case, being far from not containing a given text may indicate a potential anomaly which requires attention (e.g., an offensive word in social network media), hence such testing algorithms may provide useful in anomaly detection. Many computer vision methods for classifying images are feature based: hence being far from containing a certain pattern associated with a feature may be useful in rejection methods that enable to quickly discard images that do not possess a certain visual property.

Beyond practical applications, devising property testing algorithms for the pattern freeness problem is of theoretical interest. In the first place, it leads to a combinatorial characterization of the distance from being $P$-free. Such a characterization has proved fruitful in graph property testing \cite{Alon,Large} where celebrated graph removal lemmas were developed en route of
devising algorithms for testing subgraph freeness. We encounter a similar phenomena in studying patterns and arrays: at the core of our approach for testing pattern freeness lies a \emph{modification lemma} for patterns which we state next. We believe that this Lemma may be of independent interest and find applications beyond testing algorithms. Later we show one such application: computing the exact distance of a (one dimensional) string from being $P$-free can be done in linear time.

For a pattern $P$ of size $k\times k\times \ldots \times k$, any of its entries that is in $\{0,k-1\} \times \ldots \times \{0, k-1\}$ is said to be a \emph{corner} of $P$.
We say that $P$ is \emph{almost homogeneous} if all of its entries but one are equal, and the different entry lies in a corner of $P$. Finally, $P$ is \emph{removable} (with respect to the
alphabet $\Gamma$) if for any $d$-dimensional array $A$ over $\Gamma$ and any copy of $P$ in $A$, one can
destroy the copy by modifying one of its entries without creating new $P$-copies in $A$.
The modification lemma states that for any $d$, and any large enough pattern $P$, when the alphabet is binary it holds that $P$ is removable \emph{if and only if} it is not almost homogeneous, and when the alphabet is not binary, $P$ is removable provided that it is large enough.



Recent works \cite{Berman,Berman2} have obtained tolerant testers for visual properties. As observed in \cite{Berman,Berman2}, tolerance is an attractive property for testing visual properties as real-world images are often noisy.
With the modification Lemma at hand, we show that when $P$ is removable, the (relative) \emph{hitting number} of $P$ in $A$, which is the minimal size of a set of entries that intersects all $P$-copies in $A$ divided by $|A|$, differs from the distance of $A$ from $P$-freeness by a multiplicative factor that depends only on $d$ (and not on $P$ or $A$). 
This relation allows us to devise very fast $(5^{-d}\epsilon ,\epsilon)$-tolerant testers for $P$-freeness, as the hitting number of $P$ in $A$ can be well approximated using only a very small sample of blocks of entries from $A$.
The query complexity of our tester is $O(C_d/\epsilon)$, where $C_d$ is a positive constant depending only on the dimension $d$ of the array. Note that our characterization in terms of the hitting number is crucial: merely building on the fact that $A$ contains many occurrences of $P$ (as can be derived directly from the modification lemma) and randomly sampling $O(1/\epsilon)$ possible locations in $A$, checking whether the sub-array starting at these locations equals $P$ would lead to query complexity of $O(k^d/\epsilon)$. Note that our tester is optimal (up to a multiplicative factor that depends on $d$), as any tester for this problem makes $\Omega(1/\epsilon)$ queries.

The one dimensional setting, where one seeks to determine quickly whether a string $S$ is $\epsilon$-far from being $P$-free is of particular interest. We are able to leverage the modification Lemma and show that the distance of a string $S$ from being $P$-free for a fixed pattern $P$ (that is not almost homogeneous) is \emph{exactly equal} to the hitting number of $P$ in $A$. For an arbitrary constant $0<c<1$, this characterization allows us to devise a $((1-c)\epsilon, \epsilon)$-tolerant tester making $O_c(\epsilon^{-1})$ queries for this case.
For the case of almost homogeneous patterns, and an arbitrary constant $c>0$ , we devise a $((1/16+c)\epsilon, \epsilon)$-tolerant tester that makes $O_c(1/\epsilon)$ queries.
Whether tolerant testers exist for almost homogenous patterns of dimension larger than $1$ is an open question.

Moreover, the characterization via the hitting number implies an $O(n+k)$ algorithm that calculates (exactly) the distance of $A$ from being $P$-free where $P$ is an \emph{arbitrary} pattern (that may be almost homogeneous). We are not aware of a previous algorithm for the distance computation problem.
Unlike the one-dimensional case, in $d$ dimensions we do not know of a clean combinatorial description of the distance to being $P$-free for higher dimension. Furthermore, it can be shown via a direct reduction from covering problems in the plane \cite{Fowler}, that for dimension $d>1$ there exists patterns $P$ for which calculating the distance to $P$-freeness is NP-hard.
\section{Related Work}
The problem of testing pattern freeness is related to the study of testing subgraph-freeness (see, for example, \cite{Bipartite,Large}). This line of work examines how one can test quickly whether a given graph $G$ is $H$-free or $\epsilon$-far from being $H$-free, where $H$ is a fixed subgraph. In this problem, a graph is $\epsilon$-far from being $H$-free if at least an $\epsilon$-fraction of its edges and non-edges need to be altered in order to ensure that the resulting graph does not contain $H$ as a (not necessarily induced) subgraph. A key component in these works are removal lemmas: typically such lemmas imply that if $G$ is $\epsilon$-far from being $H$-free, it necessarily contains a ``large" number of copies of $H$. Perhaps the best example for this phenomena is the triangle removal lemma which asserts that for every $\epsilon \in (0,1)$, there exists $\delta=\delta(\epsilon)>0$ such that if an $n$-vertex graph $G$ is $\epsilon$-far from being triangle free, then $G$ contains at least $\delta n^3$ triangles (see e.g., \cite{AS} and the reference within).

Alon \emph{et. al.} showed \cite{Alon} that regular languages over $\{0,1\}$ are strongly testable. Testing pattern-freeness (1-dimensional, binary alphabet, constant pattern length $k$) is a special case of the former, since the language of all strings avoiding a fixed pattern is regular. The query complexity of their tester is $O\left(\frac{c}{\epsilon} \cdot \ln^3(\frac{1}{\epsilon})\right)$, where $c$ is a constant that depends on the minimal size of a DFA $A_L$, that accepts the regular language $L$. It is shown in \cite{Alon} that $c$ can be taken to be $O(s^3)$ where $s$ is the size of $A_L$. In the case of the regular language considered here a simple pumping-lemma inspired argument shows that $s \geq \Omega(k)$. 
Hence the upper bound on testing pattern freeness implied by their algorithm is $O\left(\frac{ k^3}{\epsilon} \cdot \ln^3(\frac{1}{\epsilon})\right)$.
Our 1D tester solves a very restricted case of the problem the tester of \cite{Alon} deals with, but it achieves a better query complexity of $O(1/\epsilon)$ in this setting.
Moreover, our tester is much simpler and can be applied in the more general high dimensional setting, or when the pattern length $k$ is allowed to grow as a function of the string length $n$.

The problem of testing \emph{submatrix freeness} was investigated in \cite{NewmanStoc,Newman,Induced,Rozenberg,ABen}. As opposed to our case, which is concerned with \emph{tight} submatrices, all of these results deal with submatrices that are not necessarily tight (i.e.\@ the rows and the columns need not be consecutive). Quantitatively, the submatrix case is very different from our case: in our case $P$-freeness can be testable using $O(\epsilon^{-1})$ queries, while in the submatrix case, for a binary submatrix of size $k \times k$ a lower bound of $\epsilon^{-\Omega(k^2)}$ on the needed number of queries is easy to obtain, and in the non-binary case there exist $2 \times 2$ matrices for which there exists a super polynomial lower bound of $\epsilon^{\Omega(\log 1/\epsilon)}$.

The 2D part of our work adds to a growing literature concerned with testing properties of images \cite{sofya,Ron,Berman}.
Ideas and techniques from the property testing literature have recently been used in the fields of computer vision and pattern recognition \cite{Kleiner,Korman}.



\section{Notation and definitions}\label{sec:def}
With slight abuse of notation, for a positive integer $n$ we let $[n]$ denote the set $\{0,\ldots,n-1\}$ and we write
$[n]^d = [n] \times \ldots \times [n]$.

Recall that a $d$-dimensional (cubic) array $A$ over
an alphabet $\Gamma$ is a function from $[k]^d$ to $\Gamma$.
The $x = (x_1, \ldots, x_d)$ \emph{entry} of $A$, denoted by $A_x$, is the value of the function $A$ at location $x$.
Let $P$ be a  $(k,d)$-array over an alphabet $\Gamma$ of size at least two.
We say that a $d$-dimensional array $A$ \emph{contains} a copy of
$P$ (or a $P$-copy) \emph{starting in location} $x = (x_1, \ldots, x_d)$ if for any $y \in [k]^d$ we have
$A_{x + y} = P_{y}$. Finally, $A$ is $P$-free if it does not contain copies of $P$.


A property $\mathcal{P}$ of $d$-dimensional arrays is simply a family of such arrays over an alphabet $\Gamma$.
For an array $A$ and a property $\mathcal{P}$, the \emph{absolute distance} $d_{\mathcal{P}}(A)$ of $A$ to $\mathcal{P}$ is the minimal number of entries that one needs to change in $A$ to get an array from $\mathcal{P}$. The \emph{relative distance} of $A$ to $\mathcal{P}$ is $\delta_{\mathcal{P}}(A) = d_{\mathcal{P}}(A) / |A|$, where clearly $0 \leq \delta_\mathcal{P}(A) \leq 1$ for any nontrivial $\mathcal{P}$ and $A$.
We say that $A$ is $\epsilon$-close ($\epsilon$-far) to $\mathcal{P}$ if $\delta_{\mathcal{P}}(A) \leq \epsilon$ ($\geq \epsilon$).

In this paper we will consider the property of $P$-freeness, which consists of all $P$-free arrays.
The absolute and relative distance to $P$-freeness will be denoted by $d_P(A)$ and $\delta_P(A)$, respectively.

For an array $A$ and a pattern $P$ we will call a set of entries in $A$ whose modification can turn it to be $P$-free a \emph{deletion set} and therefore it is natural to call $d_P(A)$ (the absolute distance of $A$ to $P$-freeness) the \emph{deletion number}, since it is the size of a minimal deletion set. In a similar manner, for a given set of entries in $A$, if every $P$-copy in $A$ contains at least one of these entries, we call it a \emph{hitting set} and we call the size of a minimal hitting set the \emph{hitting number}, denoted by $h_P(A)$.
For all notations here and above, in the 1-dimensional case we will replace $A$ by $S$ (for String).

We will need several definitions from \cite{Parnas} as well.
Let $\mathcal{P}$ be a property of arrays and let $h_1, h_2: [0,1] \to [0,1]$ be two monotone increasing functions.
An $(h_1, h_2)$-\emph{distance approximation} algorithm for $\mathcal{P}$ is given query access to an unknown array $A$.
The algorithm outputs an estimate $\hat{\delta}$ to $\delta_P(A)$, such that with probability at least $2/3$ it holds that
$h_1(\delta_P(A)) \leq \hat{\delta} \leq h_2(\delta_P(A))$.
Finally, for a property $\mathcal{P}$ and for $0 \leq \epsilon_1 < \epsilon_2 \leq 1$, an $(\epsilon_1, \epsilon_2)$-\emph{tolerant tester}
for $\mathcal{P}$ is given query access to an array $A$. The tester \emph{accepts} with probability at least $2/3$ if
$A$ is $\epsilon_1$-close to $\mathcal{P}$, and \emph{rejects} with probability at least $2/3$ if $A$ is $\epsilon_2$-far from $\mathcal{P}$.
In the `standard' notion of property testing, $\epsilon_1 = 0$. Thus, any tolerant tester is also a tester in the standard notion.
Finally, we define the additive (multiplicative) \emph{tolerance} of the tester above as $\epsilon_2 - \epsilon_1$ ($\epsilon_2 / \epsilon_1$ respectively).


\section{Main Results}


The modification lemma result is central in the study of minimal deletion sets. It classifies the possible patterns into ones that are removable and ones that are not. The result that the vast majority of patterns are removable is used extensively throughout the paper in the design and proofs of algorithms for efficient testing of pattern freeness (in 1 and higher dimensions) as well as for the exact computation of the deletion number in 1-dimension.

Our 1-dimensional modification lemma (Lemma~\ref{thm:modification_1D}) gives the following full characterization of 1-dimensional patterns (i.e. strings). A binary pattern is removable \emph{if and only if} it not almost homogeneous, while \emph{any} pattern over a larger alphabet is removable. The multidimensional version of the lemma (Lemma~\ref{thm:modification_1D}) makes the exact same classification, but for $(k,d)$-arrays for which $k\ge3\cdot 2^d$.

The fact that most patterns are removable is very important for analyzing the deletion number (which is the distance to pattern freeness). As an example, a simple observation is that a removable pattern appears at least $d_P(A)$ times (possibly with overlaps) in the array $A$, which implies an $\epsilon$-tester that can simply check for the presence of the pattern in $1/\epsilon$ random locations in the array at a sample complexity of $O(k/\epsilon)$.

Another important part of our work makes explicit connections between the deletion number and the hitting number for both 1 and higher dimensions. These are needed in order to get improved testers (e.g. for getting rid of $k$ in the sample complexity) in $d$-dimensions as well as for linear time computation of the distance (deletion number) in 1-dimension.

For the 1-dimensional case we show that the deletion number $d_P(S)$ equals the hitting number $h_P(S)$, which leads to an exact computation of $d_P(S)$ in time $O(n+k)$ (Theorem~\ref{thm:approx1D}) as well as a tolerant testers for Pattern Freeness: An $(\epsilon_1, \epsilon_2)$-tolerant tester for any $0 \leq \epsilon_1 < \epsilon_2 \leq 1$ at a complexity of $O(\epsilon_2^2 / (\epsilon_2 - \epsilon_1)^3)$ (Theorem~\ref{thm:test_1D}) as well as an
$((1-\tau)\epsilon, \epsilon)$-tolerant tester for a fixed $\tau > 0$ and any $0 < \epsilon \leq 1$ at a complexity of $O(\epsilon^{-1} \tau^{-3})$ (Corollary~\ref{cor:test_1D.multiplicative}).

For higher dimensions, we show (Lemma~\ref{lem:hitting_set_2D}) that $h_P(A) \leq d_P(A) \leq \alpha_d h_P(A) \leq \alpha_d k^{-d}$, a bound that relates the hitting number $h_P(A)$ and the deletion number $d_P(A)$ through a constant $\alpha_d = 4^d + 2^d$ that depends only on the dimension $d$. This bound enables a $((1-\tau)^d \alpha_d^{-1}\epsilon, \epsilon)$-tolerant tester making $C_\tau \epsilon^{-1}$ queries, where $C_\tau = O(1 / \tau^d (1 - (1-\tau)^d)^2)$ (Theorem~\ref{thm:test_highD}).


In the 1-dimensional setting we also provide dedicated algorithms to handle the almost homogeneous (non-removable) patterns, achieving an $O(n)$ algorithm for computing the deletion number (Theorem~\ref{thm:compute_almost_homo}) as well as a $(\epsilon/(16+c), \epsilon)$-tolerant tester, for any constant $c > 0$, at a complexity of $\alpha_c \epsilon^{-1}$ queries, where $\alpha_c$ depends only on $c$ (Theorem~\ref{thm:test_almost_homo}).

Finally, we provide a lower bound of $\Omega(1/\epsilon)$ (Theorem~\ref{thm:LB}, Appendix~\ref{sec.lower.bound}) for any tester of pattern freeness. Unlike the previous lower bound of $\Omega(1/\epsilon)$ \cite{Alon} on testing regular languages, ours extends to dimensions higher than 1 and applies to the case where $k$ may depend on $n$.

Our main results are summarized in Table~\ref{table.results}.


\begin{table}[h!] \vspace{-0pt}
  \centering
  \addtolength{\tabcolsep}{-1pt}{\small
  \begin{tabular}{|c|l|cccc|}
    \hline
    \multirow{2}{*}{dim.} & \multirow{2}{*}{template type}
    &\small{deletion number}&\multirow{2}{*}{modification lemma}&\small{tester}&\small{query} \\
    & &\small{computation}&&\small{tolerance}&\small{complexity} \\
    \hline
    \hline
    \multirow{2}{*}{\small{1D}}
    & \small{{general}}
    & \small{$O(n+k)$} & \small{removable for any $k$} & $1/(1-\tau)$ & $O(1/\epsilon\tau^{3})$  \\
    & \small{{almost homog.}}
    & $O(n+k)$ & not removable for any $k$ & $(16+c)$ & $\alpha_c/ \epsilon$   \\
    \hline
    \multirow{2}{*}{\small{$2^+$D}}
    & \small{{general}}
    & NP-Hard & \small{removable for $k>3\cdot2^d$} & $(1-\tau)^{-d} \alpha_{d}$ & $\beta_{d,\tau}/\epsilon$  \\
    & \small{{almost homog.}}
    & $-$ & not removable for any $k$ & $-$ & $-$     \\
    \hline
  \end{tabular}}\vspace{2pt}
  \caption{\textbf{Summary of results.} $0 < \tau < 1$ and $c > 0$ are arbitrary constants. $\alpha_c$ is a constant that depends only on $c$. $\beta_{d,\tau}$ is a constant that depends only on $d$ and $\tau$. 'modification lemma' specifies if patterns are classified as removable or not. the 'tester tolerance' is multiplicative} \vspace{-9pt}
  \label{table.results}
\end{table}

\section{Modification Lemma}\label{sec.modification.lemmas}


\begin{theorem}[Modification Lemma]\label{thm:modification}
Let $d > 1$ and let $P$ be a $(k,d)$-array over the alphabet $\Gamma$ where $k \geq 3 \cdot 2^{d}$.
\begin{enumerate}
\item If $|\Gamma|=2$ then $P$ is removable if and only if it is not almost homogeneous.
\item If $|\Gamma|\geq 3$ then $P$ is removable.
\end{enumerate}
\end{theorem}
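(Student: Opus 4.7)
The plan is to prove the two directions of the binary characterization separately, with the $|\Gamma| \geq 3$ case following as a short additional observation.

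For the easy direction (almost homogeneous implies not removable, binary case), I would generalize the 1D bad example $A = \alpha^{2k-2}\beta\beta$ for $P = \alpha^{k-1}\beta$. Write $P$ with the single odd entry $\beta$ at corner $y^*$ and all other entries $\alpha$, and construct $A$ with a single $P$-copy at some interior position $x_0$, augmented by (i) an $\alpha$-filled region on the orthant opposite to $y^*$ and (ii) the natural $d$-dimensional analogue of the mirror $\beta$ placed just past $y^*$ in the direction of $y^*$. Then for any entry $y \neq y^*$, modifying $A_{x_0+y}$ from $\alpha$ to $\beta$ completes a new $P$-copy at shift $\Delta = y - y^*$ (the new $\beta$ plays the role of $y^*$ in the new copy); modifying $A_{x_0+y^*}$ from $\beta$ to $\alpha$ completes a new $P$-copy at a short positive shift, using the mirror structure. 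An entry-by-entry check mirroring the 1D argument shows each of the $k^d$ candidate modifications creates a new copy.

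For the hard direction, fix $A$ with a $P$-copy at $x_0$, and for each nonzero shift $\Delta$ define the mismatch set
\[
M(\Delta) = \{w \in [k]^d : w+\Delta \in [k]^d,\ P_w \neq P_{w+\Delta}\}.
\]
A direct comparison of the candidate shifted copy at $x_0 + \Delta$ with the original on their overlap, taking the worst case over the values of $A$ outside the original copy, shows that modifying $A_{x_0+y}$ to $b \neq P_y$ creates a new $P$-copy at $x_0+\Delta$ only if $y - \Delta \in [k]^d$, $M(\Delta) \subseteq \{y - \Delta\}$, and $b = P_{y - \Delta}$. Shifts with $M(\Delta) = \emptyset$ are harmless because self-matching forces $P_{y - \Delta} = P_y$, so $b \neq P_y$ already breaks the third condition; shifts with $|M(\Delta)| \geq 2$ are harmless because a single modification cannot repair two mismatches. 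Only shifts with $|M(\Delta)| = 1$ are dangerous, each forbidding exactly one location $y = w + \Delta$ where $M(\Delta) = \{w\}$. For $|\Gamma| \geq 3$, if $P$ does not use every symbol of $\Gamma$ we can set $b$ to any unused symbol, making $b \neq P_{y - \Delta}$ for every $\Delta$; otherwise $P$ uses $\geq 3$ distinct values, hence is automatically not almost homogeneous, and the structural argument below applies.

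To finish the binary case, I would restrict attention to $y \in \{0, k-1\}^d$, the $2^d$ corners of $[k]^d$; for such a corner, the shifts $\Delta$ with $y - \Delta \in [k]^d$ lie in a single orthant. The structural claim is that if every corner $y^c$ is forbidden --- each admitting some $\Delta^c$ in its orthant with $M(\Delta^c) = \{y^c - \Delta^c\}$ --- then $P$ must be almost homogeneous. Each such equation asserts that $P$ agrees with its $\Delta^c$-translate on a region of size $\geq (k - \|\Delta^c\|_\infty)^d$ except at one pinpointed location. Using a $d$-dimensional Fine--Wilf-style propagation argument, these $2^d$ near-periodicities can be intersected to pin the unique discrepancy of each to a single corner of $P$ and force all other entries to coincide, contradicting non-almost-homogeneity. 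The bound $k \geq 3 \cdot 2^d$ enters here precisely to ensure that the overlap regions are large enough for the periodicity information to propagate across all of $[k]^d$. This Fine--Wilf propagation step --- turning many ``near-periodic with one defect'' constraints on $P$ into a single global structural conclusion --- is the main obstacle of the proof.
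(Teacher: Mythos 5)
Your reduction to the binary case for $|\Gamma| \geq 3$ and your construction of a bad array for almost homogeneous binary patterns are essentially the paper's; the paper merges the two rarest symbols into one class and checks that the resulting binary pattern is not almost homogeneous, and for the easy direction it uses a clean $(2k,d)$-array with exactly two $1$-entries. The mismatch-set formalism $M(\Delta)$ and the trichotomy $|M(\Delta)|\in\{0,1,\geq 2\}$ are also a correct and reasonable reorganization of what it means for a single flip to create a shifted copy (with ``worst-case exterior'' correctly quantifying the adversarial choice of $A$). Up to here the proposal is sound, if phrased differently from the paper.

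The hard direction, however, has a genuine flaw: the restriction to corner flip locations is not valid, and the ``all corners forbidden $\Rightarrow$ almost homogeneous'' structural claim is false. Take $d=2$, $k\geq 12$, and let $P$ be the $k\times k$ binary array with $P_{(k-1,0)}=P_{(k-1,k-1)}=1$ and all other entries $0$. This $P$ has two distinct defects and is therefore not almost homogeneous, yet every corner is forbidden by a dangerous shift. For $(0,0)$ take $\Delta=-(k-1,k-1)$: the overlap is the single cell $(k-1,k-1)$ and $P_{(k-1,k-1)}=1\neq 0=P_{(0,0)}$, so $M(\Delta)=\{(k-1,k-1)\}=\{(0,0)-\Delta\}$. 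For $(0,k-1)$ take $\Delta=(-(k-1),k-1)$: the overlap is the single cell $(k-1,0)$ and $P_{(k-1,0)}=1\neq 0=P_{(0,k-1)}$. For $(k-1,0)$ take $\Delta=(1,-1)$: the only mismatch in the overlap is at $(k-2,1)$ (where $P=0$ but $P_{(k-1,0)}=1$), so $M(\Delta)=\{(k-2,1)\}=\{(k-1,0)-\Delta\}$; symmetrically $(k-1,k-1)$ is forbidden via $\Delta=(1,1)$. One can check directly that the safe flip locations for this $P$ are the \emph{non-corner} boundary points $(k-1,j)$ with $1\leq j\leq k-2$, so the pattern is removable exactly as the theorem promises, but not via any corner. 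The underlying problem is that your dangerous shifts can be as large as $k-1$ in each coordinate, in which case the overlap region is tiny (even a single cell) and the ``near-periodicity'' constraint is a single inequality that propagates nowhere; the bound $k\geq 3\cdot 2^d$ does nothing to rule this out. This also makes the proposed Fine--Wilf propagation step --- which you already flag as unproven --- unsalvageable without a separate mechanism that first forces the relevant shifts to be small. The paper obtains exactly such a mechanism by a very different route: rather than reasoning about mismatch sets per shift, it observes that if $P$ is not removable then flipping any of the $k^d$ bits of the template produces a new copy, defines the injection $m$ from flipped bits to new-copy locations in a $(2k-1)^d$ box, and uses a pigeonhole on axis-aligned lines to extract, for each dimension $i$, a small shift $\Delta_i<k/3$ for which at most two $(i,\Delta_i)$-related pairs are jumps. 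That quantitative control on small shifts is what the corner/Fine--Wilf approach is missing, and without it the hard direction does not go through.
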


\begin{remark*}
Theorem \ref{thm:modification} states that any large enough binary pattern which is not almost homogeneous is removable.
The requirement that the pattern is large enough is crucial, as the $2 \times \ldots \times 2$ pattern $P$ satisfying $P_x = 0$
for any $x = (x_1, \ldots, x_d)$ with $x_1 = 0$ and $P_x = 1$ otherwise is not removable even though it is not almost homogeneous.
To see this, consider the following $4 \times \ldots \times 4$ array $A$:
$M_x = 0$ if either $x_1 = 0$, or $x_1 = 1$ and $x_i \in \{1,2\}$ for any $2 \leq i \leq d$, or $x_1 = 2$ and $x_i \in \{0,3\}$ for some $2 \leq i \leq d$. For any other value of $x$, $M_x = 1$.
Note that $A$ contains a copy of $P$ starting at $(1, \ldots, 1)$, but flipping any bit in this copy creates a new $P$-copy in $A$.
Still, the size of the counterexample is only $2 \times \ldots \times 2$ while in the statement of Theorem \ref{thm:modification}, the dependence is exponential in $d$. It will be interesting to understand what is the correct order of magnitude of the dependence of $k$ on $d$.
\end{remark*}

\begin{proof}[Proof of Theorem \ref{thm:modification}]
The second statement of the theorem can be easily derived from the first statement;
If $P$ does not contain all letters in $\Gamma$ then it is clearly removable, as changing any of its entries to any of the missing letters cannot create new $P$-copies. Otherwise, we can reduce the problem to the binary case: let $\sigma_1, \sigma_2$ be the letters in $\Gamma$ that appear the smallest number of times in $P$. Consider the following $(k,d)$-array $P'$ over $\{0,1\}$: $P'_x = 0$ if $P_x \in \{\sigma_1, \sigma_2\}$
and $P'_x = 1$ otherwise. Observe that $P'$ is not almost homogeneous, implying that it is removable. It is not hard to verify now that $P$ is removable as well.

In what follows, we will prove the first statement. If $P$ is binary and almost homogeneous then it
is not removable: Without loss of generality $P_{(0,\ldots,0)}=1$ and $P_x = 0$ for any $x \neq (0, \ldots, 0)$.
Consider a $(2k,d)$-array $A$ such that $M_{(0, \ldots, 0)} = M_{(1, \ldots, 1)} = 1$ and $A=0$ elsewhere. Clearly, modifying any bit of the $P$-copy starting at $(1, \ldots, 1)$ creates a new copy of $P$ in $A$, so $P$ is not removable.

The rest of the proof is dedicated to the other direction.
Suppose that $P$ is a binary $(k,d)$-array that is not removable. We would like to show that $P$ must be almost homogeneous.
As $P$ is not removable, there exists a binary array $A$ containing a copy of $P$ that such that flipping
any single bit in this copy creates a new copy of $P$ in $A$. This copy of $P$ will be called the \emph{template} of $P$ in $A$.

Clearly, all of the new copies created by flipping bits in the template must intersect the template, so we may assume that $A$ is of size $(3k-2)^d$ and that the template starts in location $k = (k-1,\ldots, k-1)$.

For convenience, let $I = [k]^d$ denote the set of indices of $P$.
For any $x \in I$ let $\bar{x} = x + k$; $\bar{x}$ is the location in $A$ of bit $x$ of the template.

Roughly speaking, our general strategy for the proof would be show that there exists at most two "special" entries in $P$ such that when we flip a bit in the template, creating a new copy of $P$ in $A$, the flipped bit usually plays the role of one of the special entries in the new copy.
We will then show that in fact, there must be exactly one special entry, which must lie in a corner of $P$, and that all non-special entries are equal while the special entry is equal to their negation. This will finish the proof that $P$ is almost homogeneous.

\begin{definition}
Let $i \leq d$ and let $\delta$ be positive integers. Let $x = (x_1, \ldots, x_d)$ and $y = (y_1, \ldots, y_d)$ be $d$-dimensional points.
The pair $(x,y)$ is \emph{$(i, \delta)$-related} if $y_i - x_i = \delta$ and $y_j = x_j$ for any $j \neq i$.
An $(i, \delta)$-related pair $(x,y)$ is said to be an \emph{$(i, \delta)$-jump} in $P$ if $P_x \neq P_y$.
\end{definition}

\begin{figure} [h!] \vspace{1pt}
\begin{center}
\small\addtolength{\tabcolsep}{3pt}
\hspace{-52pt}
\begin{tabular}{p{0.41\textwidth}p{0.4\textwidth}}
\vspace{-180pt}
\caption{\textbf{Illustration for Lemma~\ref{lem:related_pairs}}. A 2-dimensional example, where $i$ is the vertical coordinate: Flipping the bit (of the template $P$) at location $\bar{a}$ creates the $P$-copy $Q^a$ at location $m(a)$. Similarly, the copy $Q^b$ is created at location $m(b)$. Note that the pair of points $(\bar{x},\bar{y})$ (which is $(x,y)$ in $P$) and the copy locations pair $(m(a),m(b))$ are both $(i, \Delta_i)$-related. The values $P_x$ and $P_y$ ($M_{\bar{x}}$ and $M_{\bar{y}}$) must be equal.}
&
\includegraphics[width = 0.53\textwidth]{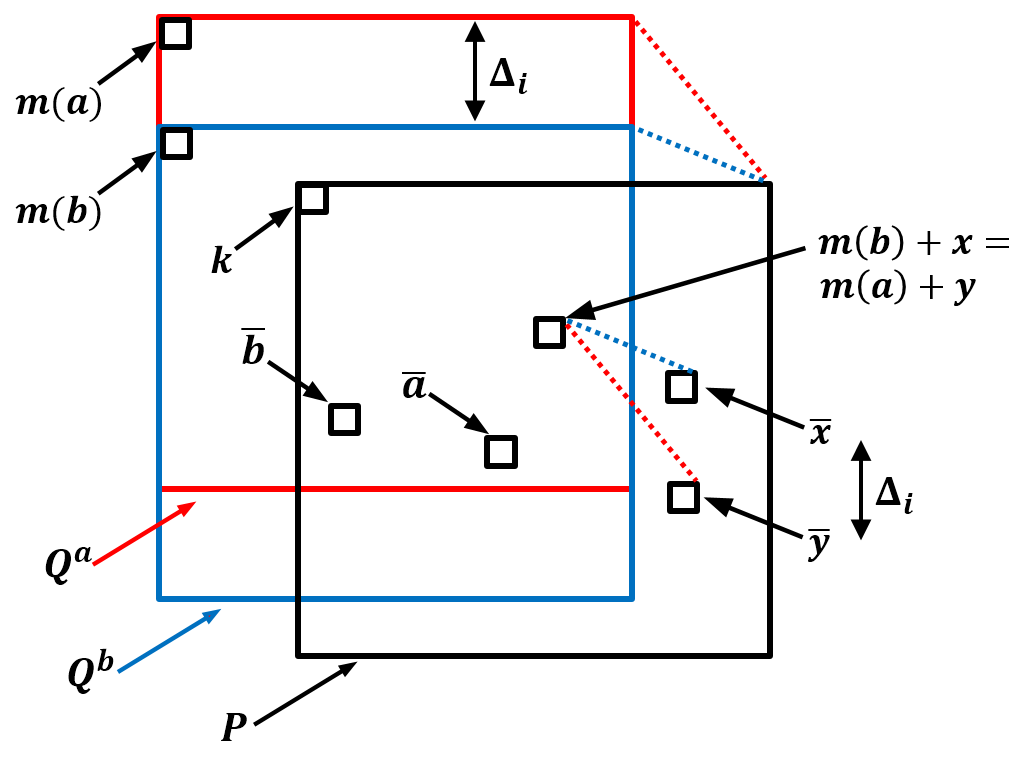}
\end{tabular}
\end{center}
\vspace{-19pt}
\end{figure}

\begin{lemma} \label{lem:related_pairs}
For any $1 \leq i \leq d$ there exists $0 < \Delta_i < k/3$ such that at most two of the $(i, \Delta_i)$-related pairs of points
from $I$ are $(i, \Delta)$-jumps in $P$.
\end{lemma}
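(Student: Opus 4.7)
The plan is to first associate to each $a \in I$ a \emph{new-copy location} $m(a) \in [0,2k-2]^d \setminus \{\bar{0}\}$: since $P$ is not removable, flipping the template entry $\bar{a}$ must produce a $P$-copy in the array that did not occur in the pre-flip $A$, and being new such a copy must contain $\bar{a}$. Let $s(a) := \bar{a} - m(a) \in [k]^d$ denote the role the flipped entry plays in its new copy. A first observation, which I will use throughout, is that the map $a \mapsto m(a)$ is injective: if $m(a) = m(a')$ for distinct $a, a'$, then the copy-shaped region at $m(a)$ in the original $A$ would be forced to disagree with $P$ at exactly $\bar{a}$ from one viewpoint and at exactly $\bar{a'} \neq \bar{a}$ from the other, which is impossible.

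The second step is a pigeonhole along direction $i$. I bucket the $k^d$ distinct points $\{m(a)\}_{a \in I}$ by the projection $m_{-i}(a) \in [0,2k-2]^{d-1}$ onto the coordinates other than $i$. With at most $(2k-1)^{d-1}$ buckets, the fullest contains at least $\lceil k^d/(2k-1)^{d-1}\rceil$ points, and the short estimate
\[
\frac{k^d}{(2k-1)^{d-1}} \;=\; k\cdot\left(\frac{1}{2 - 1/k}\right)^{d-1} \;>\; \frac{k}{2^{d-1}} \;\geq\; 6,
\]
valid for $k \geq 3\cdot 2^d$, shows this bucket contains at least $7$ points. These share a common $m_{-i}$ but have distinct integer $m_i$-values in $[0, 2k-2]$, so two of them, call them $m(a)$ and $m(a')$, satisfy $0 < \Delta_i := m_i(a') - m_i(a) \leq (2k-2)/6 < k/3$; in particular, $(m(a), m(a'))$ is $(i, \Delta_i)$-related.

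The third step, which I expect to carry the main bookkeeping burden, transports this closeness back into a statement about $P$. In the original $A$, the copy-shaped region at $m(a)$ equals $P$ except at the single entry $\bar{a}$, and analogously for $m(a')$ except at $\bar{a'}$. Parametrising the overlap of the two copies by offsets $y \in [k]^d$ with $y_i \geq \Delta_i$ from $m(a)$'s viewpoint, the first copy forces $A_{m(a)+y} = P_y$ whenever $y \neq s(a)$, while the second copy forces $A_{m(a)+y} = P_{y - \Delta_i e_i}$ (where $e_i$ is the $i$-th standard basis vector) whenever $y \neq s(a') + \Delta_i e_i$. Hence $P_y = P_{y - \Delta_i e_i}$ for every such $y$ apart from the at-most-two exceptional values $s(a)$ and $s(a') + \Delta_i e_i$. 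Since this overlap parametrises exactly the $(i, \Delta_i)$-related pairs in $[k]^d$, at most two of them are $(i, \Delta_i)$-jumps in $P$, as required. The genuinely delicate part is juggling the two viewpoints inside the overlap and pinning down the two exceptional $P$-indices correctly; everything else is the pigeonhole bucketing and the arithmetic estimate above.
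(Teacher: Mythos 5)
Your proposal is correct and follows essentially the same route as the paper's proof: define the injective map $m$ sending each flipped position to a new-copy location, apply a pigeonhole along direction $i$ (the paper phrases this via lines, you via buckets of equal non-$i$ projections) to find two flips $a,a'$ whose new copies start at a small $i$-offset $\Delta_i < k/3$, and then read off $P_y = P_{y-\Delta_i e_i}$ from the overlap of those two copies except at the two positions where the flips themselves sit. Your explicit justification of the injectivity of $m$ and the ``two viewpoints of the overlap'' bookkeeping in the last step coincide exactly with the paper's Claim~\ref{clm:jumps}.
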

\begin{proof}
Recall that, by our assumption, flipping any of the $K = k^d$ bits of the template creates a new copy of $P$ in $A$.
Consider the following mapping $m:I \to [2k-1]^d$.
$m(x_1, \ldots, x_d)$ is the starting location of
a new copy of $P$ created in $A$ as a result of flipping bit $x = (x_1, \ldots, x_d)$ of the template (which is bit $\bar{x}$ of $A$).
If more than one copy is created by this flip, then we choose the starting location of one of the copies arbitrarily.

Observe that $m$ is injective, and let $S$ be the image of $m$, where $|S| = K$.
Let $1 \leq i \leq d$ and consider the collection of (one-dimensional) lines $$\mathcal{L}_i = \big\{ \{x_1\} \times \ldots \times \{x_{i-1}\} \times [2k - 1] \times \{x_{i+1}\} \times \ldots \times \{x_d\} \quad | \quad \forall j \neq i: x_j \in [2k-1] \big\}.$$
Clearly $\sum_{\ell \in \mathcal{L}_i} |S \cap \ell| = K$. On the other hand,
$|\mathcal{L}_i| = \prod_{j \neq i} (2k-1) < 2^{d-1} \prod_{j \neq i} k = 2^{d-1} K / k$,
so there exists a line $\ell \in \mathcal{L}_i$ for which $|S \cap \ell| > k / 2^{d-1} \geq 6$. Hence $|S \cap \ell| \geq 7$.
Let $\alpha_1 < \ldots < \alpha_7$ be the smallest $i$-indices of elements in $S \cap \ell$. Since $\alpha_7 - \alpha_1 < 2k-1$
there exists some $1 \leq l \leq 6$ such that $\alpha_{l+1} - \alpha_l < k / 3$. That is, $S$ contains an $(i,\Delta_i)$-related pair
with $0 < \Delta_i < k / 3$. In other words, there are two points $a,b \in I$ such that flipping $\bar{a}$ ($\bar{b}$)
would create a new $P$-copy, denoted by $Q^a$ ($Q^b$ respectively), which starts in location $m(a)$ ($m(b)$ respectively) in $A$, and $(m(a), m(b))$ is an $(i, \Delta_i)$-related pair.

The following claim finishes the proof of the lemma and will also be useful later on.
\begin{claim}
\label{clm:jumps}
For $a$ and $b$ as above, let $(x,y)$ be a pair of points from $I$ that are $(i, \Delta_i)$-related and suppose that $y \neq \bar{a} - m(a)$ and that $x \neq \bar{b} - m(b)$. Then $P_x = P_y$.
\end{claim}
\begin{proof}
The bits that were flipped in $A$ to create $Q^a$ and $Q^b$ are $\bar{a}, \bar{b}$ respectively.
Since $y + m(a) \neq \bar{a}$, the copy $Q_a$ contains the original entry of $A$ in location $y + m(a)$. Therefore, $P_y = M_{y+m(a)}$ (as $M_{y+ m(a)}$ is bit $y$ of $Q^a$, which is a copy of $P$). Similarly, since $x + m(b) \neq \bar{b}$, we have $P_x = M_{x+m(b)}$.
But since both pairs $(x,y)$ and $(m(a), m(b))$ are $(i, \Delta_i)$-related, we get that
$m(b) - m(a) = y - x$, implying that
 $x + m(b) = y + m(a)$, and therefore
$P_x = M_{x+m(b)} = M_{y+m(a)} = P_y$, as desired.
\end{proof}

Clearly, the number of $(i, \Delta_i)$-related pairs that do not satisfy the conditions of the claim is at most two, finishing the proof of Lemma \ref{lem:related_pairs}.
\end{proof}
Let $\Delta = (\Delta_1, \ldots,\Delta_d)$ where for any $1 \leq i \leq d$, we take $\Delta_i$ that satisfies the statement of Lemma \ref{lem:related_pairs} (its specific value will be determined later).
\begin{definition}
Let $x \in I$.
The set of  \emph{$\Delta$-neighbours} of $x$ is
\vspace{-6pt}\[
N_x = \left\{ y \in I \quad \big| \quad \exists i: (x,y) \text{ is $(i, \Delta_i)$-related or } (y,x) \text{ is $(i, \Delta_i)$-related}\right\}
\vspace{-4pt}\]
and the number of $\Delta$-neighbours of $x$ is $n_x = |N_x|$, where $d \leq n_x \leq 2d$.
We say that $x$ is a $\Delta$-\emph{corner} if $n_x(\Delta) = d$ and that it is $\Delta$-\emph{internal} if $n_x(\Delta) = 2d$. Furthermore,
$x$ is \emph{$(\Delta, P)$-isolated} if $P_x \neq P_y$ for any $y \in N_x$, while it is \emph{$(\Delta, P)$-generic} if $P_x = P_y$ for any $y \in N_x$.
\end{definition}
When using the above notation, we will sometimes omit the parameters (e.g.\@ simply writing \emph{isolated} instead of $(\Delta, P)$-isloated) as the context is usually clear.

The definition imposes a symmetric neighborhood relation, that is, $x \in N_y$ holds if and only if $y \in N_x$. If $x \in N_y$ we say that $x$ and $y$ are $\Delta$-neighbours.
Note that a point $x = (x_1, \ldots, x_d) \in I$ is a $\Delta$-corner if $x_i < \Delta_i$ or $x_i \geq k - \Delta_i$ for any $1 \leq i \leq d$, and that $x$ is $\Delta$-internal if $\Delta_i \leq x_i < k - \Delta_i$ for any $1 \leq i \leq d$.
\begin{claim}\label{clm:no_isolated_neighbors}
Two $(\Delta, P)$-isolated points in $I$ cannot be $\Delta$-neighbors.
\end{claim}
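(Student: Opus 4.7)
My plan is to argue by contradiction: if two $(\Delta,P)$-isolated points $x,y \in I$ were $\Delta$-neighbors, I would produce three distinct $(i,\Delta_i)$-jumps in $P$, which contradicts Lemma~\ref{lem:related_pairs}. After relabelling coordinates I may assume $y = x + \Delta_i e_i$ for some $i$. Since the alphabet is binary and $y \in N_x$, isolation of $x$ forces $P_y = \neg P_x$; this binary flip is the key engine of the argument.

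The first jump is $(x, y)$ itself. For the second jump I would use direction $i$ together with the inequality $\Delta_i < k/3$. I claim at least one of $w = x - \Delta_i e_i$ or $z = y + \Delta_i e_i$ lies in $I$: if both failed, then $x_i < \Delta_i$ and $x_i + 2\Delta_i > k-1$ would give $k - 1 < 3\Delta_i < k$, impossible for an integer $\Delta_i$. Assume WLOG $z \in I$; then $z \in N_y$ and isolation of $y$ makes $(y, z)$ an $(i,\Delta_i)$-jump.

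The third jump is where the hypothesis $d > 1$ enters. Pick any $j \neq i$. Since $\Delta_j < k/3$, at least one of $x \pm \Delta_j e_j$ lies in $I$, so WLOG set $y' := x + \Delta_j e_j \in I$ (the $-j$ case is identical). Because $y$ shares its $j$-coordinate with $x$, $y + \Delta_j e_j \in I$ as well. Isolation gives $P_{y'} = \neg P_x$ and $P_{y + \Delta_j e_j} = \neg P_y = P_x$, so the $(i,\Delta_i)$-related pair $(y', y + \Delta_j e_j)$ has differing $P$-values, hence is a jump.

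Finally I would check that $(x,y)$, $(y,z)$, $(y', y + \Delta_j e_j)$ are pairwise distinct pairs of points; this is immediate since their starting points $x$, $x + \Delta_i e_i$, $x + \Delta_j e_j$ are distinct. That yields at least three $(i,\Delta_i)$-jumps, contradicting Lemma~\ref{lem:related_pairs}. The main subtle point is the integrality argument securing the second jump: it is what converts the soft bound $\Delta_i < k/3$ into the actual existence of $w$ or $z$. Once that is in hand, the rest of the argument is a straightforward propagation of binary isolation from $x$ to its $j$-neighbor and from $y$ to its $j$-neighbor.
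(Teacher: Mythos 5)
Your proof is correct and follows essentially the same strategy as the paper: exhibit three distinct $(i,\Delta_i)$-jumps to contradict Lemma~\ref{lem:related_pairs}, where the first two come from isolation along direction $i$ and the third from shifting the pair $(x,y)$ by $\Delta_j e_j$ in a second coordinate $j\neq i$ and using the binary-flip propagation $P_{x+\Delta_j e_j}=\neg P_x$, $P_{y+\Delta_j e_j}=\neg P_y=P_x$. The only cosmetic difference is that you justify existence of the second jump via an integrality argument, whereas the paper argues directly from $\Delta_i<k/3$ that one of $x,y$ lies in two $(i,\Delta_i)$-related pairs.
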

\begin{proof}
Suppose towards contradiction that $x = (x_1, \ldots, x_d)$ and $y = (y_1, \ldots, y_d)$ are two distinct
$(\Delta, P)$-isolated points and that $(x,y)$ is $(i,\Delta_i)$-related for some $1 \leq i \leq d$. Since $\Delta_i < k/3$, at least one of $x$ or $y$ participates in two different $(i,\Delta_i)$-related pairs: if $x_i < k / 3$ then $y_i + \Delta_i = x_i + 2\Delta_i < k$ so $y$ is in two such pairs, and otherwise $x_i \geq \Delta_i$, meaning that $x$ participates in two such pairs. Assume without loss of generality that the two $(i, \Delta_i)$-related pairs are $(t,x)$ and $(x,y)$, then $P_t\neq P_x$ and $P_x\neq P_y$ as $x$ is isolated.
By Lemma~\ref{lem:related_pairs}, these are the only $(i,\Delta_i)$-jumps in $P$.

Choose an arbitrary $j \neq i$ and take $v = (v_1, \ldots, v_d)$ where $v_j = \Delta_j$ and $v_{l} = 0$ for any $l \neq j$. Recall that $\Delta_j < k / 3$, implying that either $x_j + v_j < k$ or $x_j - v_j \geq 0$. Without loss of generality assume the former, and let $x' = x + v$ and $y' = y + v$.
Since $x$ and $y$ are $(\Delta, P)$-isolated, and since $x' \in N_x$ and $y' \in N_y$, we get that
$P_{x'} \neq P_x \neq P_y \neq P_{y'}$, and thus $P_{x'} \neq P_{y'}$ (as the alphabet is binary). Therefore, $(x', y')$ is also an $(i, \Delta_i)$-jump in $P$, a contradiction.
\end{proof}
\vspace{3pt}
\hspace{8pt}\begin{tabular}{p{0.4\textwidth}p{0.52\textwidth}}\vspace{5pt}
{\textbf{Illustration for Definition~\ref{def.mapped.to}}. Recall that flipping a bit $\bar{a}$ in $A$ creates a new $P$-copy $Q^a$ (which contains $\bar{a}$), located at the point $m(a)$ in the coordinates of $A$. The bits $x$ and $a$ are \emph{mapped} to $y$ and $f(a)$ respectively.}
&
  \begin{center}\vspace{-21pt}
    \includegraphics[width=0.38\textwidth]{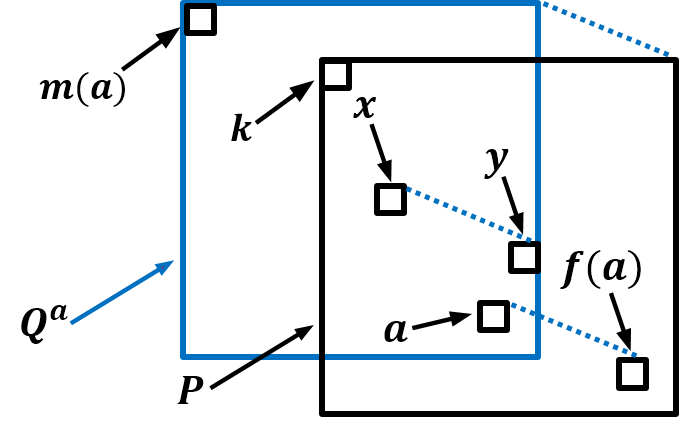}\vspace{-40pt}\\
  \end{center} \vspace{7pt}
\end{tabular}
\vspace{10pt}

\begin{definition}\label{def.mapped.to}
For three points $x, y, a \in I$, we say that $x$ is \emph{mapped} to $y$ as a result of the flipping of $a$ if $\bar{x} = m(a) + y$.
Moreover, define the function $f:I \to I$ as follows: $f(x) = \bar{x} - m(x)$ is the location to which $x$ is mapped as a result of flipping $x$.
\end{definition}

In other words, $x$ is mapped to $y$ as a result of flipping the bit $a$
if bit $\bar{x}$ of $A$ "plays the role" of bit $y$ in the new $P$-copy $Q_a$ that is created by flipping $a$.
Note that
\begin{itemize}
\item If $\bar{x} - m(a) \notin I$ then $x$ is not mapped to any point.
However, this cannot hold when $x = a$, so the function $f$ is well defined.
\item For a fixed $a$, the mapping as a result of flipping $a$ is linear: if $x$ and $y$ are mapped to $x'$ and $y'$ respectively, then $y-x = y'-x'$.
In particular, if $(x,y)$ is $(i, \Delta_i)$-related for some $1 \leq i \leq d$ then $(x', y')$ is
also $(i, \Delta_i)$-related.
\item If $x$ is mapped to $y$ as a result of flipping $a$ and $x \neq a$, then $P_x = P_y$.
\item On the other hand, we always have $P_x \neq P_{f(x)}$.
\item If $x$ is $\Delta$-internal and $(\Delta, P)$-generic, then $f(x)$ must be $(\Delta, P)$-isolated.
\end{itemize}
The first four statements are easy to verify.
To verify the last one, suppose that $x$ is internal and generic and let $z \in N_{f(x)}$; we will show that $P_{f(x)} \neq P_z$. Since $x$ is internal, there exists $y \in N_x$ such that $y - x = z - f(x)$. Then $y$ is mapped to $z$ as a result of flipping $x$, since
$\bar{y} = y + k = z + (x + k) - f(x) = z + \bar{x} - f(x) = z + m(x)$.
Therefore $P_y = P_z$. On the other hand, $P_x = P_y$ as $x$ is generic and $P_x \neq P_{f(x)}$, and we conclude that
$P_z \neq P_{f(x)}$.

\begin{lemma}\label{lem:one.isolated.point}
There is exactly one $(\Delta, P)$-isolated point in $I$.
\end{lemma}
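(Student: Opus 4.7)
The plan splits into existence and uniqueness.

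For existence, summing the bound of Lemma~\ref{lem:related_pairs} over all $d$ directions shows that the total number of ``jump edges'' (edges $(x,y)$ in the $\Delta$-neighborhood graph with $P_x\neq P_y$) in $I$ is at most $2d$, so at most $4d$ points are incident to any jump edge and the rest are $(\Delta,P)$-generic. Since $\Delta_i<k/3$, the number of $\Delta$-internal points is at least $(k/3)^d$, which using $k\geq 3\cdot 2^d$ and $d\geq 2$ is at least $2^{d^2}>4d$. Pick any internal generic point $x$; by the last bullet in the discussion preceding the lemma, $f(x)$ is $(\Delta,P)$-isolated, so an isolated point exists.

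For uniqueness, I first use a double-counting argument. Each isolated point $x$ has all of its $n_x\geq d$ incident edges being jump edges, and by Claim~\ref{clm:no_isolated_neighbors} no jump edge is incident to two distinct isolated points. Summing $n_x$ over isolated $x$ gives at most $2d$ (the total jump-edge budget from Lemma~\ref{lem:related_pairs}), so there are at most two isolated points; if there are two, each satisfies $n_x=d$, hence each is a $\Delta$-corner.

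It remains to rule out the case of two isolated $\Delta$-corners $u\neq v$. For this I would refine the jump structure using Claim~\ref{clm:jumps}: in direction $i$ the at most two $(i,\Delta_i)$-jumps must be exactly the pairs $(f(a_i)-\Delta_i e_i,f(a_i))$ and $(f(b_i),f(b_i)+\Delta_i e_i)$, where $a_i,b_i$ are the specific points produced in the proof of Lemma~\ref{lem:related_pairs} for direction $i$. Each of $u,v$ being an isolated corner contributes exactly one $(i,\Delta_i)$-jump per direction, and matching these contributions against the two candidate jumps above gives, for each $i$, a very restricted list of possibilities tying $u$ and $v$ to $f(a_i)$ and $f(b_i)$. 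A case analysis over these possibilities, together with the fact that both $u$ and $v$ have extreme coordinates in every direction (so their edges all point ``inward''), yields a contradiction. This case analysis is the main technical obstacle; the preceding steps are clean applications of the jump-edge counting and the bulleted properties of $f$.
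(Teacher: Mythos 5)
Your existence argument and your ``at most two'' bound are correct and essentially identical to the paper's: count the internal points, count the non-generic points, pick an internal generic $z$, note $f(z)$ is isolated; then double-count jump-edges against Claim~\ref{clm:no_isolated_neighbors} to get $|\mathcal{S}|\le 2$ with both being $\Delta$-corners if equality holds. Up to here the reasoning is clean.

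The gap is in ruling out two isolated corners $u\neq v$, and it is a real gap, not just an omitted calculation. You propose to pin the two $(i,\Delta_i)$-jumps down to the pairs $(f(a_i)-\Delta_i e_i,f(a_i))$ and $(f(b_i),f(b_i)+\Delta_i e_i)$ coming from Claim~\ref{clm:jumps}, and then ``match'' $u,v$ against them. But the points $a_i,b_i$ in the proof of Lemma~\ref{lem:related_pairs} are arbitrary: they are merely \emph{some} two points whose $m$-images happen to be $(i,\Delta_i)$-related, and nothing forces $a_i,b_i$ to be internal or generic. Consequently $f(a_i),f(b_i)$ need not be isolated, need not be corners, and need not have any relation to $u$ or $v$ beyond being one of up to four candidate endpoints; moreover they vary with $i$. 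The resulting ``case analysis'' therefore has several unconstrained unknowns per direction, and I do not see it closing. You yourself flag this as ``the main technical obstacle'' without resolving it, so as written this step is unproved.

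What the paper does differently --- and what your sketch misses --- is that it exploits the remaining \emph{freedom in the choice of} $\Delta_i$. Up to this point only $0<\Delta_i<k/3$ was used, so the paper re-chooses $\Delta_i$ to be the smallest $\delta_i$ for which there is an $(i,\delta_i)$-related pair $(x,y)$ of \emph{generic internal} points with $f(x)=f(y)$. Such a pair exists because, when there are two isolated corners, every internal point is generic and $f$ maps each to $a$ or $b$, and a short pigeonhole finds two with the same image. With this choice one can take $a=x$, $b=y$ in Claim~\ref{clm:jumps}, and since $f(x)=f(y)$ is a single point (say $a$), every admissible $(i,\Delta_i)$-jump is forced to touch $a$ itself. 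Then $b$, being no $\Delta$-neighbour of $a$, participates in no jump, contradicting $b$ being isolated. The key idea you would need to add is precisely this re-selection of $\Delta_i$ so that both exceptional endpoints of Claim~\ref{clm:jumps} collapse to the same known point; without it, the exceptional endpoints float freely and the case analysis you gesture at does not terminate in a contradiction.
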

\begin{proof}
Let $\mathcal{S}$ be the set of isolated points; our goal is to show that $|\mathcal{S}|=1$.
Consider the set
\vspace{-5pt}\[
C = \{(x,y) : x,y \in I, (x,y) \text{ is an $(i, \Delta_i)$-jump for some $1 \leq i \leq d$}\}.
\vspace{-2pt}\]
Clearly, each point in $\mathcal{S}$ is contained in at least $d$ pairs from $C$.
By claim~\ref{clm:no_isolated_neighbors} no pair of isolated points are $\Delta$-neighbours and therefore every pair in $C$ contains at most one point from $\mathcal{S}$. By Lemma \ref{lem:related_pairs}, $|C| \leq 2d$ which implies that $|\mathcal{S}|\leq 2$. On the other hand we have $|\mathcal{S}|\geq 1$.
To see this, observe that the number of $(\Delta, P)$-internal points in $I$ is greater than $\prod_{i=1}^d k/3 \geq 2^{d^2}$, while the
number of non-$\Delta$-generic points is at most $2|C| \leq 4d$, implying that at least $2^{d^2} - 4d > 0$ of the internal points are generic. Therefore, pick an internal generic point $z \in I$.
As we have seen before, $f(z)$ must be isolated.

To complete the proof it remains to rule out the possibility that $|\mathcal{S}|= 2$.
If two different $(\Delta, P)$-isolated points $a=(a_1,\ldots,a_d)$ and $b=(b_1,\ldots,b_d)$ exist, each of them must participate in exactly $d$ pairs in $C$. This implies that both of them are $\Delta$-corners with $d$ neighbors. It follows that every $\Delta$-internal point $z$ must be generic (since an internal point and a corner point cannot be neighbours), implying that either $f(z) = a$ or $f(z) = b$.

Let $1 \leq i \leq d$ and define $\delta_i > 0$ to be the smallest integer such that there exists an $(i, \delta_i)$-related pair $(x,y)$ of generic internal points with $f(x) = f(y)$.
For this choice of $x$ and $y$ we have
$m(y) - m(x) = \bar{y} - f(y) - (\bar{x} - f(x)) = \bar{y} - \bar{x} = y-x$, so $(m(x), m(y))$ is also $(i, \delta_i)$-related. In particular, we may take $\Delta_i = \delta_i$ (Recall that until now, we only used the fact that $\Delta_i < k / 3$, without committing to a specific value).
Without loss of generality we may assume that $f(x) = f(y) = a$.
By Claim \ref{clm:jumps}, any pair $(s,t)$ of $(i, \Delta_i)$-related points for which $s \neq \bar{y} - m(y) = f(y) = a$ and $t \neq \bar{x} - m(x) = f(x) = a$ is not an $(i, \Delta_i)$-jump.
Since $b$ is not a $\Delta$-neighbour of $a$, it does not participate in any $(i, \Delta_i)$-jump, contradicting the fact that it is $(\Delta, P)$-isolated. This finishes the proof of the lemma.
\end{proof}


Finally, we are ready to show that $P$ is almost homogeneous. Let $a = (a_1, \ldots, a_d)$ be the single $(\Delta, P)$-isolated point in $I$.
Consider the set
\vspace{-6pt}\[
J = \{x = (x_1, \ldots, x_d) \in I : \Delta_i \leq x_i < \Delta_i + 2^d \text{ for any $1 \leq i \leq d$}\}
\]
and note that all points in $J$ are $\Delta$-internal. Let $1 \leq i \leq d$ and partition $J$ into $(i,1)$-related pairs of points. There are $2^{d^{2}-1} \geq 4d$ pairs in the partition.
On the other hand, the number of non-generic points in $J$ is at most $2|C| - (d-1) < 4d$ (to see it, count the number of elements in pairs from $C$ and recall that $a$ is contained in at least $d$ pairs). Therefore, there exists a pair $(x,y)$ in the above partition such that $x$ and $y$ are both generic.
As before, $f(x)$ and $f(y)$ must be isolated, and thus $f(x) = f(y) = a$, implying that $\Delta_i = \delta_i = 1$. We conclude that $\Delta = (1,\ldots, 1)$.

Claim \ref{clm:jumps} now implies that any pair $(s,t)$ of $(i, 1)$-related points for which $s \neq \bar{y} - m(y) = f(y) = a$ and $t \neq \bar{x} - m(x) = f(x) = a$ is not an $(i,1)$-jump. That is, for any two neighbouring points $s,t \neq a$ in  $I$, $P_s = P_t$, implying that $P_x = P_y$ for any $x,y \neq a$ (since $\Delta = (1,\ldots, 1)$, a $\Delta$-neighbour is a neighbour in the usual sense). To see this, observe that for any two points $x,y \neq a$ there exists a path
$x_0 x_1 \ldots x_t$ in $I$ where $x_j$ and $x_{j+1}$ are neighbours for any $0 \leq j \leq t-1$, the endpoints are $x_0 = x$ and $x_t = y$, and $x_j \neq a$ for any $0 < j < t$.
Since $a$ is isolated, it is also true that $P_a \neq P_x$ for any $x \neq a$.

To finish the proof that $P$ is almost homogeneous, it remains to show that $a$ is a corner. Suppose to the contrary that $0 < a_i < k-1$ for some $1 \leq i \leq d$ and let $b,c \in I$ be the unique points such that $(a,b)$ and $(c,a)$ are $(i,1)$-related, respectively. Clearly $f(b) = a$, so $a$ is mapped to $\bar{a} - m(b) = \bar{a} - \bar{b} + f(b) = c - a + a = c$ as a result of flipping $b$, which is a contradiction - as $P_a \neq P_c$ and $b \neq a,c$. This finishes the proof.
\end{proof}


The above proof only works when the dimension is bigger than one, though it can be adapted to the one-dimensional case.
However, we present here another proof for the one-dimensional case, which is simpler than the general proof above and works for any pattern which is not almost homogeneous (as opposed to the proof above, that required the forbidden pattern to also be large enough).
The main strategy here is to consider the longest streaks of zeros and ones in the pattern - a strategy that cannot be used in higher dimensions.
\begin{theorem}[1D Modification Lemma]\label{thm:modification_1D}
A one-dimensional pattern is removable if and only if it is almost homogeneous.
\end{theorem}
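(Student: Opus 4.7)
The statement as printed reads ``removable iff almost homogeneous,'' but the sentence immediately preceding it explicitly says the forthcoming proof ``works for any pattern which is not almost homogeneous,'' and the multidimensional analogue in Theorem~\ref{thm:modification} carries the negation inside the biconditional. I will therefore outline a proof of what appears to be the intended statement: a one-dimensional pattern $P$ is removable if and only if it is \emph{not} almost homogeneous.

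For the direction ``almost homogeneous $\Rightarrow$ not removable,'' I would specialize the $d=1$ instance of the opening paragraph of the proof of Theorem~\ref{thm:modification}. WLOG $P=10^{k-1}$; take the length-$2k$ string $A$ with $A_0=A_1=1$ and all other entries $0$, and verify that $A$ contains a $P$-copy starting at position $1$ while any single flip inside that copy creates another $P$-copy (at position $0$ if $A_1$ is flipped, and at position $1+j$ if $A_{1+j}$ is flipped from $0$ to $1$, for $1\le j\le k-1$).

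For the reverse direction, I would reduce the problem to a purely combinatorial property of $P$. Mimicking the bookkeeping behind Claim~\ref{clm:jumps}, for every nonzero shift $s$ with $|s|<k$ let $M(s)=\{y\in[\max(0,-s),\min(k-1,k-1-s)]:P_y\neq P_{y+s}\}$, and call a position $x\in[k]$ \emph{safe} if there is no $s\neq 0$ with $|M(s)|=1$ whose unique mismatch lies at $y=x-s$. A standard adversary argument shows that if $x$ is safe then for every string $A$ and every copy of $P$ at position $i$ in $A$, flipping $A_{i+x}$ destroys the copy and creates no new $P$-copy: any new copy at $j'=i+s$ would have to force exactly one mismatch of $P$ with its $s$-translate, located at the image of the flipped position, contradicting safety of $x$. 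Thus the task reduces to showing that every non-almost-homogeneous $P$ admits a safe position.

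Here I would use the ``longest streaks of zeros and ones'' strategy flagged in the paragraph preceding the theorem. Two easy subcases come first: if $P$ is homogeneous then every $M(s)$ is empty and every $x$ is safe; if $P$ is perfectly alternating then $|M(s)|=1$ can only occur at $|s|=k-1$ and only when $k$ is even, which still leaves all interior positions safe. For the main subcase, where at least one symbol forms a run of length $\ge 2$, I would locate the longest run $[a,b]$ (WLOG of $0$s, of length $L_0$) and argue by contrapositive that a well-chosen interior position of this run is safe: an unsafe $x$ would force an almost-period $s$ aligning the length-$L_0$ run with itself up to a single mismatch, and by maximality of $L_0$ together with the symmetric bound from the longest run of $1$s this eventually pins the unique non-majority symbol of $P$ to a corner -- i.e.\ forces $P$ to be almost homogeneous. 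The main obstacle is this last structural step: converting ``the longest run is almost-shift-invariant'' into the corner conclusion cleanly. I expect this part of the argument to parallel in spirit the $\Delta$-corner analysis of Theorem~\ref{thm:modification} but to be substantially shorter, since a one-dimensional run has only two endpoints.
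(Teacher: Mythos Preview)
Your reduction to finding a single ``safe'' position $x\in[k]$ that works for every ambient string $A$ is too strong, and it is exactly where your argument breaks. Removability only requires that for each $A$ and each copy of $P$ in $A$ there is \emph{some} bit to flip; the bit is allowed to depend on $A$. Your notion of safety forbids this dependence, and there are non-almost-homogeneous patterns with no safe position at all. Take $P=1100$: computing your sets $M(s)$ on the overlap gives $M(1)=\{1\}$, $M(-1)=\{2\}$, $M(3)=\{0\}$, $M(-3)=\{3\}$ (and $|M(\pm 2)|=2$), so $x=0,1,2,3$ are each unsafe. Yet $P=1100$ is removable. More generally, every $P=1^s0^t$ with $s,t\ge 2$ has $|M(\pm 1)|=|M(\pm(k-1))|=1$, which already knocks out the two interior positions $s-1,s$ and the two endpoints $0,k-1$; further shifts kill the rest. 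So the claim ``every non-almost-homogeneous $P$ admits a safe position'' is false, and the proposed reduction cannot be completed.

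The paper's proof avoids this trap by allowing the flip to depend on $A$ in precisely the problematic case. It proceeds constructively: assuming WLOG $P_0=1$, it locates the leftmost longest $0$-streak; if that streak is not terminal it flips the $1$ immediately to its right (a context-free choice that your framework would certify as safe), and by a short maximal-streak argument this creates no new copy. If the longest $0$-streak is terminal (and symmetrically the longest $1$-streak is initial), the pattern is forced to look like $1^s0^t$; for $s,t\ge 2$ the proof then reads the bit of $S$ just to the left of the copy and flips $P_0$ or $P_1$ accordingly. That last context-dependent step is essential and has no analogue in your safe-position scheme.
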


\begin{proof}[Proof of Theorem \ref{thm:modification_1D}]
The reduction from a general alphabet to a binary one and the negative example for almost homogeneous patterns which were presented in the proof of Theorem \ref{thm:modification} also hold here. It remains to prove that any 1-dimensional binary pattern that is not almost homogeneous is removable.

Let $P=P_0 \ldots P_{k-1}$ be a binary pattern of length $k$, that is contained in an arbitrary binary string $S$.
We need to show that one can flip one of the bits of $P$ without creating a new $P$-copy in $S$.
We assume that $P$ contains both 0s and 1s (i.e. it is not homogeneous) otherwise flipping any bit would work. Therefore we can assume from now that $k\ge3$ (since for $k=1,2$ all patterns are homogeneous or almost homogeneous).

Let us assume also that $P$ starts with a $1$, i.e. $P_0=1$ and let $t \leq k-1$ be the length of the longest 0-streak (sub-string of consecutive 0s) in $P$. Let $i>0$ be the leftmost index in which such a 0-streak of length $t$ begins. Clearly, $P_{i-1}=1$ and $P_i= \ldots =P_{i+t-1}=0$.

If $i+t \le k$ (i.e. the streak is not at the end of $P$) then $P_{i+t} = 1$ and in such a case if we modify $P_{i+t}$ to 0, the copy of
$P$ is removed without creating new $P$-copies in $S$. To see this, observe that a new copy
cannot start at the bit flip location $i+t$ or within the 0-streak at any of its locations $i,\ldots,i+t-1$ since the bits in these locations are 0 while the starting bit of $P$ is 1. On the other hand, a new copy cannot start after $i+t$ since it must include the bit flip location or anywhere before $P_i$ since otherwise it would contain a 0-streak of length $t+1$.

This implies that $P$ contains exactly one 0-streak of length $t$ at its last $t$ locations. In particular, we have that at the last location $P_{k-1}=1$, and if we denote by $r$ the length of the longest 1-streak in $P$, a symmetric reasoning shows that $P$ begins with its only longest 1-streak of length $r$.

If $P$ is \emph{not} of the form $1^s0^t$, it can be verified that flipping $P_{s}$ (the leftmost 0 in $P$) to $1$ does not create any $P$-copy.
The only case left is $P=1^s0^t$, where $s,t \geq 2$ since $P$ is not almost homogeneous. Consider the bit of the string $S$ that is to the left of $P$. If it is a 0 then we flip $P_1$ to $0$ and otherwise, we flip $P_0$ to 0, where in both cases no new copy is created.
\end{proof}


\section{Characterizations of the Deletion Number}\label{sec.character.deletion.number}
We use the modification lemmas of Section \ref{sec.modification.lemmas} to investigate several combinatorial characterizations of the deletion number, which will in turn allow exact (and efficient) computations of the deletion number in the 1-dimensional case, as well as efficient approximation and testing of pattern freeness for removable patterns in the $d$-dimensional case for any $d$.

In particular, we prove some surprising connections between minimal deletion sets and minimal hitting sets. The characterizations for almost homogeneous  1-dimensional patterns are given in Appendix~\ref{sec:almost_homo}, along with an optimal algorithm to compute the exact deletion number and an optimal tester for pattern freeness in that case. The rest of this section deals with removable patterns, for both the 1-dimensional and multi-dimensional settings.

In the 1-dimensional case, we show that for any removable pattern there exist certain minimal hitting sets which are in fact minimal deletion sets. These are sets where none of the flips create new occurrences. Our constructive proof shows how to build such a set and allows for a linear time algorithm for finding the deletion number. The result is summarized in Theorem~\ref{thm.1D.minDeleting.equals.minHitting.Plus.Exact} and proved in Appendix~\ref{sec.character.deletion.number.proofs}.

\begin{theorem}
[$d_P(S)$ equals $h_P(S)$; Linear time computation of $d_P(S)$]
\label{thm.1D.minDeleting.equals.minHitting.Plus.Exact}
For a binary string $S$ of length $n$ and a binary pattern $P$ of length $k$ that is removable, the deletion number $d_P(S)$ equals $h_P(S)$ and can be computed in time $O(n+k)$ and space $O(k)$.
\end{theorem}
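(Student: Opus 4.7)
My plan is to prove the equality $d_P(S) = h_P(S)$ in two matching inequalities and then read off the linear-time algorithm. The easy direction $h_P(S) \le d_P(S)$ is immediate: any minimum deletion set must meet every original $P$-copy (otherwise that copy would survive), so it is automatically a hitting set.

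For the hard direction $d_P(S) \le h_P(S)$, I plan a left-to-right greedy construction. List the $P$-copy starting positions as $p_1 < p_2 < \ldots < p_m$ and let the leftmost greedy group $G$ consist of the copies with $p_j \le p_1 + k - 1$; all of their intervals contain the point $p_1 + k - 1$, so their intersection is the nonempty interval $[p_{|G|},\, p_1 + k - 1]$. The central technical step is to show there is a bit $q$ in this intersection whose flip destroys every copy of $G$ (automatic, since $q$ is in all of them) and creates no new $P$-copy. Granted this, the modified string $S'$ retains exactly the $P$-copies with $p_j > p_1 + k - 1$, so the hitting number drops by exactly one; induction on the number of copies then yields $d_P(S) \le 1 + d_P(S') \le 1 + h_P(S') = h_P(S)$.

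To prove the central step I would split on whether $|G| = 1$ or $|G| \ge 2$. If $|G|=1$ the intersection is the whole copy and Theorem \ref{thm:modification_1D} directly supplies a safe flip. If $|G|\ge 2$, two overlapping copies force $P$ to have period $d := p_2 - p_1$; iterating this observation one shows that $G = \{p_1, p_1+d, \ldots, p_1+(|G|-1)d\}$ and that $S[p_1,\, p_1+(|G|-1)d+k-1]$ is exactly the $d$-periodic extension of $P$'s fundamental block of length $d$. Apply Theorem \ref{thm:modification_1D} inside the leftmost copy to extract a safe flip position $q^\star \in [p_1, p_1+k-1]$, and then argue that translating by a suitable multiple $\ell d$ of the period gives a position $q^\star + \ell d$ in $[p_{|G|},\, p_1+k-1]$ that is still safe. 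The reason is that length-$k$ windows sitting entirely inside the periodic stretch are permuted under a shift by $d$, so the ``no new $P$-copy'' property transfers window-for-window from $q^\star$ to its translate.

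For the algorithm, once the equality $d_P(S) = h_P(S)$ is in hand I compute $d_P(S)$ by running Knuth--Morris--Pratt in $O(n+k)$ time and $O(k)$ auxiliary space (for the failure function) to enumerate $P$-copies on the fly, and by maintaining a single counter and the position of the most recently placed stabber in a one-pass greedy, incrementing whenever the next emitted $P$-copy is not covered; this adds only $O(n)$ time and $O(1)$ extra space. The main obstacle in the proof is the ``translation preserves safety'' step at the right boundary of the cluster: length-$k$ windows that stick out past the cluster's right edge are not covered by the periodicity argument and must be ruled out separately using the maximality of the cluster (a window that became a new $P$-copy after the translated flip would extend the cluster one period farther, contradicting maximality), combined with a careful re-examination of the case analysis of Theorem \ref{thm:modification_1D} to ensure that the safe position $q^\star$ supplied by that lemma has at least one translate actually lying in the target intersection.
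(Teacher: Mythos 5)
Your high-level plan matches the paper's: reduce to $d_P(S)\le h_P(S)$, construct a deletion set of size $h_P(S)$ by flipping one bit per cluster of overlapping $P$-copies, invoke Theorem~\ref{thm:modification_1D} for isolated copies, and compute $h_P(S)$ by KMP plus a one-pass greedy. This is essentially the structure of the paper's Algorithm~\ref{alg.efficient.removal.NAH}. The divergence is entirely in the crucial step for $|G|\ge 2$, and there your argument has real gaps.

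The paper handles $|G|\ge 2$ with a single clean lemma that makes no reference to the modification lemma or to periodic translation: flipping \emph{any} one bit of a $P$-copy creates \emph{at most one} new $P$-copy (Lemma~\ref{lem.removable.bit.flips}). Via the symmetry in Observation~\ref{obs.symmetric.flipping}, this is equivalent to: flipping \emph{any} bit lying in the intersection of two $P$-copies creates \emph{no} new $P$-copy. Hence for $|G|\ge 2$ every bit of $[p_{|G|},p_1+k-1]$ is safe, and the whole ``which position to pick, and why is it safe'' question disappears. Lemma~\ref{lem.removable.bit.flips} itself is proved by contradiction: if a flip at $x$ created two new copies, the two induced near-periods of $P$ would let one build a skipping path from $x$ to a location forced to disagree with $P_x$ (with a modular-arithmetic argument guaranteeing termination when the two copies lie on opposite sides, and a short equality chain when they lie on the same side).

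By contrast, your route through periodicity plus translation of the modification-lemma flip has two holes, both of which you flag but neither of which is fixable in the way you suggest. First, the boundary-window fix is wrong as stated: a length-$k$ window that becomes a $P$-copy only \emph{after} the translated flip is not an original copy of $S$, so it cannot ``extend the cluster one period farther and contradict maximality'' --- cluster maximality is a statement about the $P$-copies of $S$ before any modification. Ruling out such windows needs an argument about the content of $S$ just to the right of the periodic stretch (which can be arbitrary), not about $G$. Second, nothing guarantees that the safe position from Theorem~\ref{thm:modification_1D} has any period-translate inside $[p_{|G|},p_1+k-1]$; its constructive proof places the flip at one of several pattern-dependent positions (right after the longest $0$-streak, or the leftmost $0$, or one of the two left-most bits in the $1^s0^t$ case) and none is guaranteed to hit the target interval after translation. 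Patching both holes would very likely amount to rediscovering Lemma~\ref{lem.removable.bit.flips}, which is the ingredient your proof is missing.
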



For the multidimensional case, we start by showing that when $P$ is removable,
the hitting number $h_P(A)$ of $A$ approximates the deletion number up to
a multiplicative constant that depends only on the dimension $d$. This is done in two stages, the first of which involves the analysis of a procedure that proves the existence of a large collection of $P$-copies with small pairwise overlaps, among the large set of at least $d_P(A)$ $P$-copies that exist in $A$. This procedure heavily relies on the fact that $P$ is removable.  The second stage shows the existence of a large hitting set of the collection with small pairwise overlaps.
The result is summarized in Lemma~\ref{lem:hitting_set_2D} and fully proved in Appendix~\ref{sec.character.deletion.number.proofs}.
\begin{lemma}
[relation between distance and hitting number]\label{lem:hitting_set_2D}
Let $P$ be a removable $(k,d)$-array over an alphabet $\Gamma$, and let $A$ be an $(n,d)$-array over $\Gamma$. Let $\alpha_d = 4^d + 2^d$. It holds that: $h_P(A) \leq d_P(A) \leq \alpha_d h_P(A) \leq \alpha_d (n/k)^{d}$.
\end{lemma}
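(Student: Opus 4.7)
The plan is to address the three inequalities separately. The leftmost, $h_P(A) \leq d_P(A)$, is immediate from the definitions: any deletion set is in particular a hitting set, since modifying at least one entry of every $P$-copy is exactly the hitting requirement. The rightmost, $h_P(A) \leq (n/k)^d$, follows from an explicit hitting set; specifically, the axis-aligned grid $G = (k\mathbb{Z})^d \cap [n]^d$ of size at most $\lceil n/k \rceil^d$ is a hitting set, since any $P$-copy is a $k^d$-cube and so contains exactly one such grid point.

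The main content is the middle inequality $d_P(A) \leq \alpha_d h_P(A)$, which I would prove via the two-stage approach outlined in the text, leveraging Theorem~\ref{thm:modification}. In Stage 1, I would iteratively construct a collection $\mathcal{C}$ of $P$-copies in $A$ with small pairwise overlap. Starting with $A^{(0)} = A$ and $\mathcal{C} = \emptyset$, I would process candidate copies in some canonical order: whenever the current array still contains a target copy $Q$, I would use removability to destroy $Q$ by modifying one of its entries without creating new copies, and add $Q$ to $\mathcal{C}$ only if its starting position satisfies a geometric spread condition (roughly, $\ell_\infty$-distance at least $k/2$ from every starting position already in $\mathcal{C}$). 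Because each modification destroys at least the target copy and never creates new ones, the number of iterations equals the number of modifications and is at least $d_P(A)$. The heart of Stage 1 is a charging argument attributing each skipped iteration to a nearby included copy, which (after careful analysis) should yield $|\mathcal{C}| \geq d_P(A)/(2^d + 1)$.

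In Stage 2, I would use the spread property to lower-bound $h_P(A)$. Since starting positions in $\mathcal{C}$ are pairwise $\ell_\infty$-separated by at least $k/2$, the starting positions of copies in $\mathcal{C}$ containing a fixed entry $p$ lie in the $k$-cube $p - [k]^d$ and form a $(k/2)$-spread subset of cardinality at most $2^d$. Hence each entry of $A$ is contained in at most $2^d$ copies of $\mathcal{C}$, and a standard double-counting argument yields $h_P(A) \geq |\mathcal{C}|/2^d$. Chaining the two stages gives $h_P(A) \geq d_P(A)/(2^d(2^d+1)) = d_P(A)/\alpha_d$, with $\alpha_d = 2^d(2^d+1) = 4^d + 2^d$, as required.

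The main obstacle is the charging argument in Stage 1: rigorously bounding the number of skipped copies charged to each included copy. This demands a careful combinatorial analysis that leverages removability not only to keep the iteration clean (no phantom copies to track) but also to bound the geometric density of destroyed copies around each element of $\mathcal{C}$. The spread threshold $k/2$ is chosen to balance the two stages---a smaller threshold would help Stage 1 but hurt Stage 2, and a larger one the opposite---and its joint optimization yields precisely $\alpha_d = 4^d + 2^d$.
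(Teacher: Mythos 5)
Your treatment of the outer inequalities is correct and matches the paper, and your Stage~2 is essentially the paper's Claim~\ref{clm:independece_hitting}: a $(k/2)$-spread family has at most $2^d$ members through any fixed entry, so any hitting set for $\mathcal{C}$ (and a fortiori for all $P$-copies) has size at least $|\mathcal{C}|/2^d$.

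The gap is in Stage~1, and it is not just a matter of ``careful analysis''---the charging argument cannot be made to work with the procedure you describe. The modification lemma only promises that \emph{some} entry of the chosen copy can be flipped without creating new copies; it gives you no control over \emph{which} entry. So when you process copies near a previously included $Q^*\in\mathcal{C}$, each removability-flip might sit near an edge of the processed copy and destroy only that one copy, leaving all its neighbours intact to be processed (and skipped) one by one. In a box of side $k/2$ around $Q^*$'s start there can be $\Theta(k^d)$ distinct $P$-copies, so the number of skipped iterations charged to $Q^*$ can be far larger than $2^d$, and your claimed bound $m\le(2^d+1)|\mathcal{C}|$ fails. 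The paper sidesteps this by \emph{not} using removability for the main pass: in its Phase~1 it flips the \emph{center} bit of each chosen copy $Q$. This simultaneously (i) deletes from the pool every original copy whose start is within $\ell_\infty$-distance $<k/2$ of $Q$'s start---precisely the copies your argument cannot account for---so the set $\mathcal{A}$ of centers is automatically $1/2$-independent; and (ii) via a cyclicity argument (Claims~\ref{obs:overlap_to_cyclic_2D} and~\ref{clm:Bit_flip_no_new_occurrences_2D}, which use the fact that the flipped bit is the \emph{center}), creates at most $2^d$ new copies. Only these $\le 2^d|\mathcal{A}|$ freshly-created copies are cleaned up in Phase~2 using removability. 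In short: removability is reserved for the cleanup, not the main flips, and the $2^d$ charge per member of $\mathcal{A}$ comes from the bound on \emph{created} copies, not from any bound on \emph{nearby destroyed} copies. You would need to import the center-flip idea (and the accompanying cyclicity claims) to close the hole.
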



\section{Testers for Pattern Freeness}\label{sec.testing.pattern.freeness}





We describe efficient testers for both the one-dimensional and the $d$-dimensional removable patterns that have tolerance and query complexity that only depend on $d$ (and not on $k$; using a completely naive tester, it can be seen that the tolerance and the query complexity depend on $k$).
The testers essentially approximate the hitting number, which is related to the deletion number by the characterizations that were shown in Section~\ref{sec.character.deletion.number}.


We start by presenting the distance approximation algorithm for $P$-freeness, which has both additive and multiplicative errors.
\begin{theorem}[Approximating the deletion number in $1$-dimension]
\label{thm:approx1D}
Let $P$ be a removable string of length $k$ and fix constants $0 < \tau < 1, 0 < \delta < 1/k$.
Let $h_1, h_2:[0,1] \to [0,1]$ be defined as $h_1(\epsilon) = (1-\tau)\epsilon - \delta$  and $h_2(\epsilon) = \epsilon + \delta$.
There exists an $(h_1, h_2)$-distance approximation algorithm for $P$-freeness
with query complexity and running time of $O(1/ k \tau \delta^2)$.
\end{theorem}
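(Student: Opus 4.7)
The plan is to reduce to estimating a maximum packing and then use a block-sampling estimator with Bernstein's inequality to achieve the claimed complexity.

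First, I would invoke Theorem~\ref{thm.1D.minDeleting.equals.minHitting.Plus.Exact} to replace $d_P(S)$ by $h_P(S)$. Since the $P$-copies are length-$k$ intervals on the line, the classical interval duality (the right-endpoint greedy) tells us that $h_P(S)$ equals the maximum number of pairwise disjoint $P$-copies in $S$, so it suffices to approximate this quantity divided by $n$.

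The estimator is a standard block-sampling one. Partition $S$ into $N = \lceil n/L \rceil$ consecutive blocks of length $L = \Theta(1/\delta)$, and for each block $B$ let $Y(B)$ denote the maximum number of pairwise disjoint $P$-copies fully contained in $B$; given the $L$ queries of $B$, $Y(B)$ is computable in $O(L)$ time by the left-to-right greedy (scan for the next $P$-copy, skip $k$ positions, repeat). Sample $m$ blocks $B_1, \dots, B_m$ uniformly at random with replacement and output $\hat\delta = \frac{1}{mL}\sum_{i=1}^m Y(B_i)$. The analysis has two steps: (i) a deterministic partitioning bound and (ii) concentration via Bernstein. For (i), any maximum packing of $S$ decomposes into copies fully inside some block and copies that straddle a block boundary; by disjointness at most one packing element can straddle each boundary, so setting $T = \sum_{j=1}^N Y(B_j)$ yields $T \leq h_P(S) \leq T + N$, and hence $|T/n - \delta_P(S)| \leq N/n = 1/L \leq \delta$. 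For (ii), since $Y(B) \in [0, L/k]$ and $\mathbb{E}[Y] = T/N \leq L\delta_P$, the variance obeys $\mathrm{Var}[Y] \leq \mathbb{E}[Y^2] \leq (L/k)\,\mathbb{E}[Y] \leq L^2 \delta_P/k$. Applying Bernstein to the empirical mean $\hat\mu = \frac{1}{m}\sum_i Y(B_i)$ and dividing by $L$ gives, with probability at least $2/3$,
\[
\bigl|\hat\delta - T/n\bigr| \;\leq\; O\!\Bigl(\sqrt{\delta_P/(km)} \;+\; 1/(km)\Bigr).
\]
By AM-GM, $\sqrt{\delta_P \cdot c/m} \leq \tau\delta_P/2 + c/(2\tau m)$, so $m = \Theta(1/(k\tau\delta))$ samples suffice to bound this error by $\tau\delta_P/2 + \delta/2$. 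Combining with (i) and absorbing constants gives $(1-\tau)\delta_P - \delta \leq \hat\delta \leq \delta_P + \delta$, as required, and the total query and running time is $mL = O(1/(k\tau\delta^2))$.

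The main technical point — and what makes the complexity nontrivial — is the use of Bernstein rather than Hoeffding. Hoeffding applied to $Y(B) \in [0, L/k]$ would require $m = \Omega(1/(k\delta)^2)$ samples, inflating the query complexity to $\Omega(1/(k^2\delta^3))$, which is strictly worse than the target when $\delta \ll 1/k$. Exploiting the smallness of the mean (since $\delta_P \leq 1/k$ always, $\mathbb{E}[Y] \leq L/k$) through the variance bound $\mathrm{Var}[Y] \leq (L/k)\mathbb{E}[Y]$ is what yields the claimed rate. The assumption $\delta < 1/k$ is used only to ensure $L > k$, so that every block is large enough to host a $P$-copy and the greedy inside a block is well-defined.
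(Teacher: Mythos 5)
Your scheme is genuinely different from the paper's: you partition $S$ into \emph{fixed} blocks of length $L = \Theta(1/\delta)$ and estimate a maximum \emph{packing} per block, whereas the paper samples \emph{random cyclic windows} of length $\beta k = \Theta(k/\tau)$ and computes the \emph{hitting number} of each. This matters, because the two approaches place the multiplicative $(1-\tau)$ loss in different locations, and yours ends up on the wrong side. In the paper, the windowing produces a mean $\mu$ that is one-sidedly deflated, $(1-\tau)\delta_P(S) \le \mu \le \delta_P(S)$, and the concentration step (plain Chebyshev with the variance bound $\sigma^2 \le \mu/k$) then contributes only a symmetric additive $\pm\delta$; this is exactly why $h_2(\epsilon)=\epsilon+\delta$ has no multiplicative term. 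In your scheme, the mean $T/n$ has a purely additive downward bias ($\delta_P - 1/L \le T/n \le \delta_P$), which is fine, but the AM-GM step $\sqrt{\delta_P/(km)} \le \tau\delta_P/2 + O(\delta)$ converts the concentration error into something with a $\tau\delta_P/2$ term \emph{on both sides}. The upper bound you obtain is therefore $\hat\delta \le T/n + \tau\delta_P/2 + O(\delta) \le (1+\tau/2)\delta_P + O(\delta)$, which violates $h_2(\delta_P)=\delta_P+\delta$ whenever $\delta_P \gtrsim \delta/\tau$ (e.g.\ $\delta_P = 1/k$, $\delta$ close to $0$, $\tau$ a fixed constant). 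The phrase ``absorbing constants gives $\hat\delta \le \delta_P + \delta$'' is where the proof breaks: $\tau\delta_P/2$ is not a constant that can be absorbed into $\delta$, since $\delta_P$ is the unknown quantity being approximated.

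The gap is repairable but requires an explicit step you did not take: either deflate the output (report $\hat\delta/(1+\tau/2)$, which one can check satisfies both $h_1$ and $h_2$), or follow the paper and push the $\tau$-loss into the estimator's mean by using windows of length $\Theta(k/\tau)$ so that the boundary effect is a $\tau$-fraction of the per-window hitting number and the concentration error stays a pure additive $\delta$. One smaller comment: your emphasis on Bernstein over Hoeffding overstates the point. The paper achieves the same complexity with Chebyshev; what actually matters is exploiting the variance bound $\mathrm{Var}[Y] \lesssim (\text{range})\cdot\mathbb{E}[Y]$ (available to Chebyshev and Bernstein alike) rather than the crude $(\text{range})^2$ that Hoeffding uses. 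Bernstein buys nothing extra at constant success probability.
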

Note that $d_P(S) = h_P(S) \leq n/k$ always holds, so having an additive error parameter of $\delta \geq 1/k$ is pointless.
The proof of Theorem \ref{thm:approx1D} can be adapted to derive $(\epsilon_1, \epsilon_2)$-tolerant testers
for any $0 \leq \epsilon_1 < \epsilon_2 \leq 1$, which we describe in Theorem~\ref{thm:test_1D}. An immediate corollary is the following multiplicative tester. The proofs for Theorems~\ref{thm:approx1D} and \ref{thm:test_1D} can be found in Appendix~\ref{sec.testing.pattern.freeness.proofs}.
\begin{corollary}[Multiplicative tolerant tester for pattern freeness in $1$-dimension]
\label{cor:test_1D.multiplicative}
Fix $0 < \tau < 1$. For any $0 < \epsilon \leq 1$ there exists a $((1-\tau)\epsilon, \epsilon)$-tolerant tester
whose number of queries and running time are $O(\epsilon^{-1} \tau^{-3})$.
\end{corollary}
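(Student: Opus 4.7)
The plan is to obtain the statement as an immediate instantiation of Theorem~\ref{thm:test_1D}, which the Main Results section advertises as an $(\epsilon_1,\epsilon_2)$-tolerant tester of query and running-time complexity $O(\epsilon_2^{2}/(\epsilon_2-\epsilon_1)^{3})$ for every $0 \leq \epsilon_1 < \epsilon_2 \leq 1$. Setting $\epsilon_1 = (1-\tau)\epsilon$ and $\epsilon_2 = \epsilon$ is legitimate because $\tau > 0$ forces $\epsilon_1 < \epsilon_2$, and the gap equals $\tau\epsilon$. Substituting gives complexity $O(\epsilon^{2}/(\tau\epsilon)^{3}) = O(\epsilon^{-1}\tau^{-3})$, exactly the bound claimed in the corollary, and the tolerance guarantee is inherited verbatim from Theorem~\ref{thm:test_1D}.

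If I also had to supply Theorem~\ref{thm:test_1D}, my route would be to adapt the distance-approximation scheme of Theorem~\ref{thm:approx1D} by invoking the identity $d_P(S) = h_P(S)$ for removable $P$ (Theorem~\ref{thm.1D.minDeleting.equals.minHitting.Plus.Exact}) to reduce the problem to estimating the density $h_P(S)/n$ of a minimum hitting set. The greedy characterization of a minimum stabbing set for length-$k$ intervals gives a sampling estimator for $h_P(S)/n$ whose output has additive error $\Theta(\tau\epsilon)$ with high probability after $O(\epsilon_2^{2}/(\epsilon_2-\epsilon_1)^{3})$ queries (via Chernoff); thresholding at the midpoint of $[(1-\tau)\epsilon,\epsilon]$ then separates the two regimes with success probability at least $2/3$.

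Because the corollary itself reduces to a single substitution of parameters, I anticipate no real obstacle in proving it. The only step worth double-checking is that the constants hidden in the $O(\cdot)$ of Theorem~\ref{thm:test_1D} are absolute, independent of $k$, $n$, $\epsilon$ and $\tau$, so that the resulting $O(\epsilon^{-1}\tau^{-3})$ bound is genuinely as advertised; this is evident from the Chernoff-based structure of the underlying argument, where the sample complexity depends only on the target additive error and the success probability.
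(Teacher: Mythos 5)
Your proof is correct and matches the paper's own derivation exactly: the paper also obtains the corollary by substituting $\epsilon_1 = (1-\tau)\epsilon$ and $\epsilon_2 = \epsilon$ into Theorem~\ref{thm:test_1D} and simplifying $\epsilon_2^2/(\epsilon_2-\epsilon_1)^3$ to $\epsilon^{-1}\tau^{-3}$. (The only minor divergence is in your optional sketch of Theorem~\ref{thm:test_1D}, where the paper uses a Chebyshev-based variance bound rather than Chernoff, but that does not affect the corollary.)
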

%
%

For the multidimensional case, our distance approximation algorithm and tolerant tester for $P$-freeness are given in Theorems \ref{thm:approx_highD} and \ref{thm:test_highD}.
As their technical details are very similar to those in the 1D case, we provide in Appendix~\ref{sec.testing.pattern.freeness.proofs} only a sketch of the main ideas.

%
%

\begin{theorem}
[Approximating the deletion number in multidimensional arrays]
\label{thm:approx_highD}
Let $P$ be a removable $(k,d)$-array and fix constants $0 < \tau \leq 1,0 \leq \delta \leq 1/k^d$.
Let $h_1, h_2:[0,1] \to [0,1]$ be defined as $h_1(\epsilon) = (1-\tau)^d \alpha_d^{-1}\epsilon - \delta$  and $h_2(\epsilon) = \epsilon + \delta$.
There exists an $(h_1, h_2)$-distance approximation algorithm for $P$-freeness
making at most $\gamma / k^d \tau^d \delta^2$ queries, where $\gamma > 0$ is an absolute constant, and has running time
$\zeta_\tau / k^d \delta^2$ where $\zeta_\tau$ is a constant depending only on $\tau$.
\end{theorem}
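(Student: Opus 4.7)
The plan is to adapt the block-sampling estimator from the 1D distance approximation (Theorem~\ref{thm:approx1D}) to $d$ dimensions, and then to translate the resulting density estimate into a bound on $\delta_P(A)$ via the deletion-to-hitting gap of Lemma~\ref{lem:hitting_set_2D}. Fix a cell side length $L = \lceil k/\tau \rceil$ and, after applying a uniformly random shift $t \in [L]^d$, partition $A$ into disjoint $L^d$-cells. Each cell has an ``inner region'' of $(L-k+1)^d \geq (1-\tau)^d L^d$ starting positions whose $k^d$-window fits inside the cell, and for a uniform $t$ every fixed $P$-copy of $A$ lies fully inside some cell with probability at least $(1-\tau)^d$. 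The algorithm samples $s = \Theta(1/(k^{2d}\delta^2))$ cells uniformly, reads all their $L^d$ entries, and in each sampled cell $B_i$ brute-force computes the local hitting number $h_i$ of the $P$-copies fully contained in $B_i$. The output is $\widehat{\delta} \eqdef (sL^d)^{-1}\sum_{i=1}^s h_i$, an unbiased estimator of $h_P^{(\mathrm{inside})}(A)/n^d$, the density of the hitting number for fully-inside copies of the shifted partition.

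For concentration and the query budget I would use that $h_i \leq (L/k)^d = \tau^{-d}$ (a regular $\tau^{-d}$-grid of hits covers every inside copy of the cell) and that $\mathbb{E}[h_i] \leq h_P(A)/B \leq \tau^{-d}$, via the crude bound $h_P(A)\leq (n/k)^d$ from Lemma~\ref{lem:hitting_set_2D}. Chebyshev then gives $\mathrm{Var}(\widehat{\delta}) = O(1/(s k^{2d}))$, so $s = \Theta(1/(k^{2d}\delta^2))$ samples suffice for additive accuracy $\delta$ with probability at least $2/3$, for a total of $sL^d = O(1/(k^d \tau^d \delta^2))$ queries. Since the cell size $L^d$ depends only on $k$ and $\tau$, the per-cell local hitting-set computation runs in time depending only on $\tau$ and $d$ (absorbing the $d$-dependence into $\zeta_\tau$), yielding total running time $O(\zeta_\tau/(k^d \delta^2))$ as claimed.

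The translation into the $(h_1,h_2)$ guarantee combines two inequalities. The upper bound $\widehat{\delta} \leq \delta_P(A) + \delta$ is immediate from $h_P^{(\mathrm{inside})}(A) \leq h_P(A) \leq d_P(A)$ together with the Chebyshev tail. For the matching lower bound I would prove
\[
\mathbb{E}_t\bigl[h_P^{(\mathrm{inside},t)}(A)\bigr] \;\geq\; (1-\tau)^d h_P(A),
\]
and then chain with $h_P(A) \geq d_P(A)/\alpha_d$ from Lemma~\ref{lem:hitting_set_2D} to conclude $\widehat{\delta} \geq (1-\tau)^d \alpha_d^{-1}\delta_P(A) - \delta$, exactly $h_1(\delta_P(A))$.

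The principal obstacle will be the expectation inequality above. In one dimension its analog is immediate because $\nu(S) = h_P(S)$ (max matching equals min hitting set by interval scheduling), but that identity fails for $d > 1$. My plan is to open up the proof of Lemma~\ref{lem:hitting_set_2D}, which produces a collection $\mathcal{C}$ of $P$-copies with small pairwise overlaps and of size $\Omega(h_P(A))$: under a uniformly random shift each copy in $\mathcal{C}$ remains fully inside its cell with probability at least $(1-\tau)^d$, and the small-overlap structure is preserved within a single cell, so the same counting that underpins Lemma~\ref{lem:hitting_set_2D} yields, in expectation, $h_P^{(\mathrm{inside},t)}(A) \gtrsim (1-\tau)^d h_P(A)$, up to constants absorbable into $\gamma$ and $\zeta_\tau$.
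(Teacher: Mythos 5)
Your algorithmic scheme—sample $\Theta(1/(k^{2d}\delta^2))$ blocks of side $\Theta(k/\tau)$, compute each block's hitting number by brute force, average, and translate the result through $h_P(A)\leq d_P(A)\leq\alpha_d h_P(A)$—is exactly the paper's plan, and your query-count and variance bookkeeping match. The running-time observation (the block has at most $\Theta(\tau^{-d})$ hitting points, so the local computation depends only on $\tau$ and $d$) is also what the paper notes.

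However, you misdiagnose the ``principal obstacle.'' You claim the 1D lower bound $\mathbb{E}[X]\geq(1-\tau)\epsilon$ is ``immediate because $\nu(S)=h_P(S)$ (max matching equals min hitting set by interval scheduling),'' and conclude that a new $d$-dimensional argument is needed since that duality fails for $d>1$. That is not how the paper's 1D proof works. The 1D proof uses a hitting-set exchange argument: take a \emph{global} minimal hitting set $H$, shrink the sampled block $I_i$ by $k$ on each side to $I'_i$, and observe that $(H\setminus I'_i)\cup H_i$ is again a valid hitting set because every $P$-copy meeting $I'_i$ lies entirely inside $I_i$ and is thus hit by the local minimal hitting set $H_i$; minimality of $H$ then gives $|H_i|\geq|H\cap I'_i|$, whose expectation is $(1-2/\beta)\epsilon\cdot\beta k$. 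This argument uses nothing about matchings or interval duality and carries over to $d$ dimensions verbatim: shrink the sampled $\beta k\times\cdots\times\beta k$ block by $k-1$ on each side, perform the same exchange, and take expectations to get $\mathbb{E}[\text{local hitting}]\geq(1-2/\beta)^d h_P(A)$. The paper's sketch invokes exactly this with the phrase ``as in the 1D case.''

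Your proposed detour—opening up Lemma~\ref{lem:hitting_set_2D} to extract a large $1/2$-independent collection $\mathcal{C}$ and bounding the per-cell hitting number via the multiplicity-$2^d$ bound from Claim~\ref{clm:independece_hitting}—is neither needed nor as harmless as you suggest. You write that any constants lost along the way are ``absorbable into $\gamma$ and $\zeta_\tau$,'' but $\gamma$ and $\zeta_\tau$ govern only query and time complexity; any constant lost in the expectation lower bound directly degrades the multiplicative tolerance $(1-\tau)^d\alpha_d^{-1}$ in $h_1$ and thus changes the statement of the theorem. As it happens, a careful version of your route does recover the exact constant: the Phase-1 set $\mathcal{A}$ from the lemma's proof satisfies $|\mathcal{A}|\geq d_P(A)n^d/(2^d+1)$, each copy survives the random shift with probability $\geq(1-\tau)^d$, and bounded multiplicity $2^d$ gives $\mathbb{E}[\text{local hitting}]\geq(1-\tau)^d d_P(A)/(2^d(2^d+1))=(1-\tau)^d\alpha_d^{-1}\delta_P(A)$, since $\alpha_d=2^d(2^d+1)$. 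But this is a coincidence of the particular constants in the lemma, and you would need to track it explicitly rather than wave it into $\gamma$; the direct exchange argument is both simpler and robust.
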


\begin{theorem}
[Multiplicative tolerant tester for pattern freeness in multidimensional arrays]
\label{thm:test_highD}
Fix $0 < \tau \leq 1$ and let $P$ be a removable $(k,d)$-array. For any $0 < \epsilon \leq 1$ there exists a $((1-\tau)^d \alpha_{d}^{-1}\epsilon, \epsilon)$-tolerant tester making $C_\tau \epsilon^{-1}$ queries, where $C_\tau = O(1 / \tau^d (1 - (1-\tau)^d)^2)$.
The running time is $C'_\tau \epsilon^{-1}$ where $C'_\tau$ depends only on $\tau$.
\end{theorem}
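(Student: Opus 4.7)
The plan is to follow the blueprint of the 1D tester (Corollary~\ref{cor:test_1D.multiplicative}): reduce to multiplicatively estimating $h_P(A)/n^d$ via Lemma~\ref{lem:hitting_set_2D}, then estimate that hitting density by sampling large blocks of $A$. By Lemma~\ref{lem:hitting_set_2D}, $h_P(A)/n^d \le \delta_P(A) \le \alpha_d\, h_P(A)/n^d$, so $\delta_P(A)\ge\epsilon$ implies $h_P(A)/n^d\ge\epsilon/\alpha_d$ while $\delta_P(A)\le(1-\tau)^d\alpha_d^{-1}\epsilon$ implies $h_P(A)/n^d\le(1-\tau)^d\epsilon/\alpha_d$, leaving a multiplicative gap of $(1-\tau)^{-d}$ to be detected in the hitting density.

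To estimate $h_P(A)/n^d$, I would tile $A$ into disjoint axis-aligned blocks of side length $\ell=\lceil k/\tau'\rceil$ for some $\tau'\in(0,\tau)$ chosen to buy slack, under a uniformly random grid offset. Call a block \emph{full} if some $P$-copy lies entirely inside it, and let $M$ be the number of full blocks. Two bounds govern $M$: deterministically $M\le h_P(A)$, since full blocks witness pairwise disjoint $P$-copies, each of which absorbs a distinct hitting-set element; and in expectation over the random offset, $\mathbb{E}[M]\ge(1-\tau')^d h_P(A)$, because a collection of $\Omega(h_P(A))$ pairwise near-disjoint $P$-copies (produced by the iterative deletion procedure in the proof of Lemma~\ref{lem:hitting_set_2D}) is contained inside a single block with probability at least $(1-\tau')^d$ per copy. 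Choosing $\tau'<\tau$ then yields a strictly positive separation between the rejection-side lower bound on $\mathbb{E}[M]$ and the acceptance-side upper bound on $M$.

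The tester would then sample $T$ blocks uniformly at random, query all their entries, and output the empirical fraction $\widehat p$ of full blocks, thresholded at the midpoint between the two regimes. A relative Chernoff bound needs $T=\Theta(1/(p_{\min}\cdot\gamma^2))$, where $p_{\min}=\Theta(\epsilon\ell^d/\alpha_d)$ is the smallest guaranteed density of full blocks under rejection and $\gamma$ is the multiplicative gap produced by the $\tau'<\tau$ tightening. Since each sampled block costs $\ell^d=(k/\tau')^d$ queries, the $k^d$-factors in $p_{\min}$ and in the per-sample cost cancel exactly, giving a total of $C_\tau/\epsilon$ queries with $C_\tau=O(1/\tau^d(1-(1-\tau)^d)^2)$. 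Running time is dominated by running a standard multi-dimensional pattern matching procedure inside each sampled block, which takes time polynomial in $\ell^d$ and is absorbed into the constant $C'_\tau$.

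The main obstacle is that the deterministic $M\le h_P(A)$ and in-expectation $\mathbb{E}[M]\ge(1-\tau)^d h_P(A)$ bounds coincide exactly at the prescribed tolerance thresholds, so a naive single-offset sampler would have zero margin; the resolution is the $\tau'<\tau$ adjustment above, with the resulting loss absorbed into the constants $C_\tau,C'_\tau$. A second delicate point is to elevate the in-expectation bound $\mathbb{E}[M]\ge(1-\tau')^d h_P(A)$ to a high-probability statement, which is handled by a Markov-type argument on $h_P(A)-M\in[0,h_P(A)]$, or equivalently by folding the grid-offset randomness into the block-sampling randomness as a single product experiment, so that one run of the tester reliably witnesses the expected density of full blocks.
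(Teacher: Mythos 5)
Your reduction to estimating the relative hitting number $h_P(A)/n^d$ via Lemma~\ref{lem:hitting_set_2D}, and your choice to sample $\Theta(k/\tau)$-side blocks so that the $k^d$-factors cancel, match the paper's approach. The problem is the statistic you compute per block. You let $M$ be the number of sampled blocks that are \emph{full} (contain some $P$-copy entirely), and you claim the two bounds $M \le h_P(A)$ and $\mathbb{E}[M] \ge (1-\tau')^d\, h_P(A)$. The upper bound is fine (disjoint full blocks witness disjoint $P$-copies), but the lower bound on $\mathbb{E}[M]$ is false: a single block of side $\ell = k/\tau'$ can force as many as $\Theta\bigl((\ell/k)^d\bigr) = \Theta(\tau'^{-d})$ hitting-set points inside it, yet it contributes only $1$ to $M$. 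In the extreme case where every block has that many forced hits, $h_P(A) = \Theta(\tau'^{-d}) \cdot (\text{number of full blocks})$ while $M = (\text{number of full blocks})$, so $M \approx \tau'^d\, h_P(A)$, far below $(1-\tau')^d h_P(A)$. The $0/1$ indicator thus under-reports the hitting density by a factor of up to $\tau'^{-d}$, which destroys the claimed multiplicative tolerance $\alpha_d^{-1}(1-\tau)^d$. Your justification for the lower bound — a collection of ``$\Omega(h_P(A))$ pairwise near-disjoint $P$-copies'' from the deletion procedure — also has a secondary issue: the $1/2$-independent family $\mathcal{A}$ from that proof only satisfies $|\mathcal{A}| \ge h_P(A)\,n^d/(2^d+1)$ (a $d$-dependent loss), and $1/2$-independent copies can still both land inside the same block, so they too would be collapsed to a single count in $M$.

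The paper avoids this by taking the per-block statistic to be the \emph{minimum hitting number of the sampled block} (normalized by block volume), not a $0/1$ indicator. For that statistic a clean sandwich holds: the restriction of a global minimum hitting set to the block is a hitting set for the block, and the restriction of the block's minimum hitting set to the inner, boundary-shrunk region is forced by the global hitting set's minimality, giving $(1-\tau)^d\, h_P(A) \le \mathbb{E}[X] \le h_P(A)$ for the normalized block hitting number $X$. Concentration then comes from a variance bound on $X \in [0, 1/k^d]$ (the block hitting number is at most $\beta^d$) and Chebyshev, which is where the $1/(1-(1-\tau)^d)^2$ factor in $C_\tau$ arises. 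Replacing your ``full'' indicator with the per-block hitting number (which you can afford to compute exactly since it is at most $\beta^d = O(\tau^{-d})$) is the correction your argument needs; the rest of your reduction and budget calculation then goes through in essentially the paper's form.
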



%
%

\section{Discussion and Open Questions}
We have provided efficient algorithms for testing whether high-dimensional arrays do not contain a fixed pattern $P$ for any removable pattern $P$.
The results suggest several interesting open questions on the problem of pattern-freeness and more generally, on local properties - where we say that a property $\mathcal{P}$ is $k$-local if any array $A$ not satisfying $\mathcal{P}$, there exists a consecutive subarray of $A$ of size at most $k \times \ldots \times k$ which does not satisfy $\mathcal{P}$ as well. That is, a property is local if any array not satisfying $\mathcal{P}$ contains a small `proof' for this fact. Note that $P$-freeness is indeed $k$-local where $k$ is the side length of $P$, and that a property $\mathcal{P}$ is $k$-local if and only if there exists a \emph{family} $\mathcal{F}$ of arrays of size at most $k \times \ldots \times k$ each, such that $A$ satisfies $\mathcal{P}$ if and only if it does not contain any consecutive sub-array from $\mathcal{F}$. That is, to understand the general problem of testing local properties of arrays we will need to understand the testing of $\mathcal{F}$-freeness, where $\mathcal{F}$ is a family of forbidden patterns (rather than a single forbidden pattern).

In particular, the problem of approximate pattern matching is of interest. The family of forbidden patterns for this problem might consist of a pattern and all patterns that are close enough to it, and the distance measures between patterns might also differ from the Hamming distance (e.g., $\ell_1$ distance for grey-scale patterns).

Finally, it is desirable to settle the problem of testing pattern freeness for the almost homogenous case by either finding an efficient tester for the almost homogeneous multi dimensional case, or proving that an efficient tester cannot exist for such patterns. It is also of interest to examine which of the $[k]^d$ patterns with $k<3 \cdot 2^d$ are removable.

%
%


\vspace{-8pt}
\subparagraph*{Acknowledgements}
We are grateful to Swastik Kopparty for numerous useful comments. We are thankful to Sofya Raskhodnikova for her useful feedback.
\vspace{-6pt}


\appendix

\section{Characterizations of the Deletion Number: Proofs}\label{sec.character.deletion.number.proofs}

\begin{proof}[Proof of Theorem \ref{thm.1D.minDeleting.equals.minHitting.Plus.Exact}]

The main challenge is in proving that $d_P(S)=h_P(S)$, since then all we need is an algorithm that computes $h_P(S)$, which is relatively standard in template matching: Find the set ${\cal O}$ of all $P$-copies in $S$; Go though the $P$-copies in ${\cal O}$ from left to right, repeating the following: (i) Let $P^*$ be the leftmost $P$-copy in ${\cal O}$; (ii) Increment the hitting set count by 1; (iii) Remove from ${\cal O}$ all the (following) $P$-copies that intersect $P^*$ (those whose starting location is not to the right of the rightmost location in $P^*$);. Clearly, the complexity of the algorithm is dominated by the first step of finding ${\cal O}$, which can be done in $O(n+k)$ using, e.g., the KMP algorithm~\cite{knuth}. Taking the rightmost location in each of the visited $P^*$s creates a hitting set, which is minimal, due to the fact that the set of $P^*$s is independent.

It is trivial that $d_P(S)\ge h_P(S)$ and hence we have to show that $d_P(S)\le h_P(S)$. Refer to Algorithm~\ref{alg.efficient.removal.NAH} below that constructs a set of bit flip locations. Note that the choice in Step 3 is possible using the modification lemma, while the choice in Step 4 is possible, since if $h$ is contained in only one $P$-copy $P^0\in{\cal D}$, by definition of ${\cal D}$ there is some $P^1\in{\cal D}$ such that $P^0$ and $P^1$ intersect at some location $x$ (in particular one of the 2 endpoints of $P^0$ must be in the intersection). Simply replace $h$ by $x$. It is easy to verify that the set of locations ${\cal F}$ that it computes is a (particular) minimal hitting set of ${\cal O}$, and hence ${|\cal F}|=h_P(S)$. It is therefore sufficient to show that flipping the bit locations in ${\cal F}$ turns the string $S$ to be $P$-free. This will be guaranteed, using the fact that ${\cal F}$ is a hitting set of ${\cal O}$, by Lemma~\ref{lem.removable.flipping.F} that shows that no bit flip of a location in ${\cal F}$ creates a new $P$-copy. Therefore, he proof of Lemma~\ref{lem.removable.flipping.F} will complete the proof of Theorem~\ref{thm.1D.minDeleting.equals.minHitting.Plus.Exact}.

\vspace{-0.5pt}
\begin{algorithm}\caption{}
\label{alg.efficient.removal.NAH}\vspace{2pt}
\textbf{Input:} \raggedright{Binary string $S$ of length $n$ and removable binary string $P$ of length $k$}
\\ \vspace{2pt}
\textbf{Output:} \raggedright{Minimal set ${\cal F}$ of flip locations in $S$ that make it $P$-free ($|F|=d_P(S)$)}
\\ \vspace{2pt}
\vspace{-3pt}
  \begin{enumerate}
  \vspace{-5pt} \item Find the set ${\cal O}$  of all $P$-copies in $S$
  \vspace{-1pt} \item Divide ${\cal O}$ into ${\cal I} \cup {\cal D}$, where ${\cal I}$ is the subset of $P$-copies that do not intersect any other $P$-copy in ${\cal O}$, while ${\cal D}$ is the subset of $P$-copies that intersect some other $P$-copy in ${\cal O}$.
  \vspace{-1pt} \item For each $P$-copy $P^*\in{\cal I}$ add to ${\cal F}$ a bit location whose flipping removes $P^*$ without creating any other $P$-copy
  \vspace{-1pt} \item Find a minimal hitting set ${\cal H}$ of ${\cal D}$ such that every location $h\in{\cal H}$ is contained in at least two $P$-copies in ${\cal D}$.
  \vspace{-1pt} \item Add ${\cal H}$ to ${\cal F}$
\end{enumerate}  \vspace{-7pt}
\textbf{return} ${\cal F}$\vspace{2pt}
\end{algorithm}

\begin{lemma}
[Flipping bits in $F$ does not create new $P$-copies]
\label{lem.removable.flipping.F}
Let $f\in{\cal F}$. Flipping the bit at location $f$ does not create any new $P$-copy in $S$.
\end{lemma}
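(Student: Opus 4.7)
The plan is to split on the step of Algorithm~\ref{alg.efficient.removal.NAH} in which $f$ is placed in ${\cal F}$. If $f$ is chosen in Step~3 it is attached to some isolated copy $P^* \in {\cal I}$; since $P$ is removable, the modification lemma (Theorem~\ref{thm:modification_1D}) guarantees a flip location inside $P^*$ that destroys $P^*$ and creates no new $P$-copy anywhere in $S$, and Step~3 picks $f$ to be exactly such a location. There is nothing more to prove in this case.

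The substantive case is $f \in {\cal H}$, where $f$ lies in at least two $P$-copies from ${\cal D}$; I will call them $P^0$ at position $p_0$ and $P^1$ at position $p_1$, with $p_0 < p_1$ and $e := p_1 - p_0 < k$. First I would extract from the overlap of these two copies the strict periodicity $P_i = P_{i+e}$ for every $i \in [0, k-1-e]$, so that $P$ has exact period $e$ and each $P_x$ depends only on $x \bmod e$. Then, supposing for contradiction that flipping $S_f$ creates a new $P$-copy $Q$ at some $q \neq p_0, p_1$, the substring $S[q\ldots q+k-1]$ must differ from $P$ at exactly the single position $f$; writing $a := f-p_0$, $b := f-p_1$ and $c := f-q$, this yields the key inequality $P_a = P_b \neq P_c$.

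Next, assuming without loss of generality that $q > p_0$ and setting $d := q - p_0 > 0$, the overlap of $Q$ with $P^0$ on $[q, p_0+k-1]$ produces the almost-periodic relation $P_y = P_{y+d}$ for every $y \in [0, k-1-d]$ \emph{except possibly} at $y = c$. The idea is to combine this with the strict period $e$: any $y \neq c$ in $[0, k-1-d]$ with $y \equiv c \pmod{e}$ supplies an unbroken relation $P_y = P_{y+d}$ which, via the $e$-period, transfers to $P_c = P_{c+d}$, contradicting $P_c \neq P_{c+d} = P_a$. The hard part will be closing the degenerate configurations in which this witness-counting argument does not apply directly: the case $d = e$ collapses to $q = p_1$, so $Q = P^1$ is not a new copy; the configurations $q < p_0$ and $q > p_1$ are handled symmetrically by swapping the roles of $P^0$ and $P^1$, or by using the overlap of $Q$ with $P^1$ in place of $P^0$; and the short-window scenario, in which neither overlap alone contains a second witness $y \equiv c \pmod{e}$ with $y \neq c$, will be closed by combining both overlaps of $Q$ (with $P^0$ and with $P^1$), whose total length exceeds $k$, to produce the missing witness.
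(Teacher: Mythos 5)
Your decomposition into the Step~3 case (handled by Theorem~\ref{thm:modification_1D}) and the Step~4 case (a bit in ${\cal H}$ lying in two $P$-copies) matches the paper, but the route you take through the Step~4 case diverges from the paper's and, as written, has a real gap. The paper first applies Observation~\ref{obs.symmetric.flipping} to rephrase the statement as ``a single flip inside a $P$-copy cannot create two new $P$-copies'' (Lemma~\ref{lem.removable.bit.flips}), and then splits on whether the two hypothetical new copies lie on the same or on opposite sides of the flipped copy; the opposite-sides case is closed by a mod-$t_1$ ``skipping path'' argument. You skip the symmetry reformulation and argue directly from the period $e = p_1 - p_0$ of $P$ plus the almost-period $d = q - p_0$, which is a legitimately different organization of the same underlying combinatorics.

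The gap is in the ``short-window scenario.'' Your claimed fix --- ``combining both overlaps of $Q$ \ldots whose total length exceeds $k$, to produce the missing witness'' --- does not in fact yield a witness $y\neq c$ with $y\equiv c\pmod e$ in either window. Concretely, writing $d'' := p_1 - q$, the $P^0$-overlap gives a defect-free congruence-class witness only when $c\ge e$ or $c+e\le k-1-d$, and the $P^1$-overlap only when $b\ge e$ or $b+e\le k-1-d''$ (with $b=c-d''$); when $e>k/2$ and $p_0<q<p_1$ both conditions can simultaneously fail, so the total length of the two overlaps exceeding $k$ is not enough. What the data (period $e$ plus the two almost-periods with single defects) actually forces is a contradiction reached by \emph{chaining} relations around the orbit of $c\bmod e$ under addition of $d$ in $\mathbb Z_e$, precisely the Fine--Wilf-style ``path'' traversal the paper carries out in Case~1 of Lemma~\ref{lem.removable.bit.flips}. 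Without writing out that chain (or an equivalent orbit argument), the degenerate configurations remain open, so the proof as proposed is not complete. Also note that your ``WLOG $q>p_0$'' bundles together the sub-cases $p_0<q<p_1$ and $q>p_1$; in the former $Q$ lies to the left of $P^1$, so the $P^1$-overlap relation is oriented differently from the $P^0$-overlap relation, and this asymmetry needs to be treated explicitly rather than by ``swapping the roles of $P^0$ and $P^1$.''
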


\begin{proof}
Recall that ${\cal F}$ consisted of bits in ${\cal I}$ as well as bits in ${\cal D}$. Each of the bit flips that are in ${\cal I}$ was chosen (step 3 of Algorithm~\ref{alg.efficient.removal.NAH}) using the modification lemma to be such that no new $P$-copy is created.

The main challenge is in showing that the remaining bit flips, i.e. at locations ${\cal H}$, do not create any new $P$-copies. Notice our requirement that any location $h\in{\cal H}$ is contained in at least two $P$-copies. By symmetry considerations, we have the following

\begin{observation}\label{obs.symmetric.flipping}
  $\textnormal{[}$Flipping an arbitrary bit in the intersection of 2 $P$-copies can create a new $P$-copy$\textnormal{]}$
  $\Longleftrightarrow$
  $\textnormal{[}$Flipping an arbitrary bit in a $P$-copy can create 2 new $P$-copies$\textnormal{]}$
\end{observation}

By Observation~\ref{obs.symmetric.flipping}, in order to show that bit flips in ${\cal H}$ do not create new $P$-copies, one can prove that an arbitrary bit-flip in a $P$-copy cannot create more than 1 $P$-copy, as is stated in the next lemma.

\end{proof}

\begin{lemma}
[Any bit flip in a pattern $P$ cannot create more than 1 new $P$-copy]\label{lem.removable.bit.flips}
Let $x\in [k]$. Flipping the bit $P_x$ can create at most 1 new $P$-copy in $S$.
\end{lemma}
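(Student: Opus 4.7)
The plan is to argue by contradiction: assume that flipping $P_x$ in some string $S$ containing a $P$-copy at position $\ell$ creates two distinct new $P$-copies, at positions $\ell + s_1$ and $\ell + s_2$. Since every new copy must contain the flipped entry $S_{\ell+x}$, we have $s_i \in [x-k+1,\,x]\setminus\{0\}$ for $i = 1, 2$. After a global reflection of $P$ and $S$, it suffices to analyse the \emph{same-sign} case $0 < s_1 < s_2 \leq x$ and the \emph{opposite-sign} case $s_1 < 0 < s_2$ (the both-negative case follows from the first by reflection).

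For each new copy at offset $s$, the positions of that copy which lie inside the original $P$-copy region force a partial-period condition on $P$ (of the form $P_i = P_{i-s}$ on an appropriate range, with a single excepted index arising from the flipped entry's image in the new copy), while the flipped entry itself yields a defect inequality $P_x \neq P_{x-s}$ (or the analogous forward form when $s < 0$). The core of the argument is to show that two such partial-period-with-defect constraints, for two distinct offsets, cannot coexist.

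In the same-sign case, both new copies extend past the original copy on the right, and their mutual overlap in $S$ \emph{outside} the original copy region forces an additional \emph{unconditional} periodicity $P_j = P_{j-(s_2-s_1)}$ on the suffix $[k - s_1,\,k-1]$, with no excepted index (since the flipped entry lies strictly inside the original region). I would combine this suffix periodicity with the partial $s_1$-period to propagate equalities leftward through $P$: starting from a suffix position in the $s_1$-orbit of $x$ (whose value equals $P_x$ via the $s_1$-period), alternating the $(s_2 - s_1)$-shift on the suffix with further $s_1$-shifts while carefully avoiding the single forbidden index $x$, I would eventually identify $P_x$ with $P_{x - s_2}$, contradicting the defect constraint for $s_2$. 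In the opposite-sign case $s_1 = -t < 0 < s_2$, the two new copies overlap only \emph{inside} the original region, so no suffix periodicity appears; instead, the partial-period from the negative-offset copy has its exception at index $x + t$ rather than at $x$, which yields the extra equality $P_x = P_{x - t}$ (when $x \geq t$). I would then set up the shift graph on $[0,\,k-1]$ whose edges are the surviving $t$-shifts and $s_2$-shifts (with the two excluded edges $(x + t,\,x)$ and $(x,\,x - s_2)$), and use the bound $t + s_2 \leq k - 1$ (which always holds in the mixed case since $t \leq k - 1 - x$ and $s_2 \leq x$) together with a short connectivity argument to show that $x$ and $x - s_2$ remain in the same component, forcing $P_x = P_{x - s_2}$ and again contradicting the defect.

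I expect the main obstacle to be the propagation argument in the same-sign case: the alternation of $s_1$- and $(s_2-s_1)$-shifts must be carried out without ever invoking the forbidden $s_1$-edge at index $x$, and making this rigorous will likely require a small case analysis based on whether $x$ lies inside the suffix $[k - s_1,\,k - 1]$ and on the arithmetic relationships between $s_1, s_2, x$, and $k$. Once this chaining step is handled, the opposite-sign case reduces to a routine connectivity check and should not present real difficulty.
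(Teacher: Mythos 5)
Your case decomposition (same-sign vs.\ opposite-sign offsets, with the both-negative case handled by reflection) matches the paper's division into ``intersect $P$ from the same side'' and ``intersect $P$ from different sides,'' and your derivation of the partial-period and defect constraints from each new copy is correct. However, you have the difficulty of the two cases reversed, and this leads to a real weakness in the same-sign case. For $0 < s_1 < s_2 \le x$, the paper does \emph{not} use any suffix-periodicity propagation: it observes that the single string position $\ell + s_1 + x$ lies in both new copies (and is not the flipped bit), giving $P_x = P_{x-(s_2-s_1)}$, and that the string position $\ell + x - (s_2-s_1)$ lies in both the original copy and the first new copy (again not the flipped bit), giving $P_{x-(s_2-s_1)} = P_{x-s_2}$; combined with the defect $P_x \neq P_{x-s_2}$, this is an immediate contradiction in three lines with no iteration. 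Your proposed alternation of $(s_2-s_1)$-shifts and $s_1$-shifts through the suffix $[k-s_1,k-1]$ is not only unnecessary, it runs into trouble: the downward $s_1$-chain from the suffix back toward $x - s_2$ passes through $x$ precisely when $s_1 \mid s_2$, at which point it hits the forbidden $s_1$-edge $\{x-s_1,x\}$, and ``carefully avoiding'' it forces you to re-derive the direct argument anyway. You correctly flag this as the main obstacle, but it is an obstacle of your own making.

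Conversely, you underestimate the opposite-sign case $-t = s_1 < 0 < s_2$. This is the case that genuinely requires a traversal argument: there is no single shared string position that closes the contradiction in two steps (the three copies' values at the flipped bit are mutually consistent, so a local count does not work). The paper builds an explicit alternating path starting at $x$, making $t$-jumps to the left as long as it does not pass $x - s_2$, then one $s_2$-jump to the right, and repeating; the path stays strictly inside $(x-s_2,\, x+t)$ (so it never touches the two forbidden edges, both of which are incident to $x$), and termination at $x-s_2$ is shown via the subgroup of $\mathbb{Z}_{t}$ generated by $s_2$ containing the identity. Your identification of the bound $t + s_2 \le k-1$ as the enabling constraint is exactly right (it is what keeps the path inside $[0,k-1]$), but calling the reachability argument a ``routine connectivity check'' glosses over the termination proof, which is the substantive content of this half of the lemma in the paper.
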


\begin{proof}
The proof goes by contradiction, assuming that a bit flip in $P$ has created two new $P$-copies $P^1$ and $P^2$, and will analyze separately the two possible cases:

\ourParagraph{case 1: `$P^1$ and $P^2$ intersect $P$ from different sides'}
In this case, flipping the bit location $x$ of $P$ creates a $P$-copy $P^1$ shifted $t_1$ locations to the left and a $P$-copy $P^2$ shifted $t_2$ locations to the right, where we assume w.l.o.g. that $t_1<t_2$. One can verify that $P_{x-t_2}=P^2_{x-t_2}\neq P_x$ (and similarly that $P_x\neq P^1_{x+t_1}=P_{x+t_1}$). We will assume that $P_x=0$ and hence $P_{x-t_2}=1$. We refer the reader to Figure~\ref{fig.2copies_diff_side} and its caption for the intuition of the proof.

\begin{figure} [h!] \vspace{18pt}
\begin{center}
\small\addtolength{\tabcolsep}{0pt}
\hspace{-12pt}
\begin{tabular}{p{0.47\textwidth}p{0.41\textwidth}}
\vspace{-122pt}
\caption{\textbf{Illustration for case 1:} Our proof is based on 'skipping' along a 'path' from location $x$ to location $x-t_2$ in $P$, while each skip is done between entries with equal values. A complete path from $x$ to $x-t_2$ will give a contradiction, since $P_{x-t_2}\neq P_x$. The path starts at $x$ and makes skips of size $t_1$ to the left as long as it does not pass $x-t_2$, then it makes a single skip to the right of size $t_2$. It repeats this traversal until reaching $x-t_2$.}\label{fig.2copies_diff_side}
&
\includegraphics[width = 0.45\textwidth]{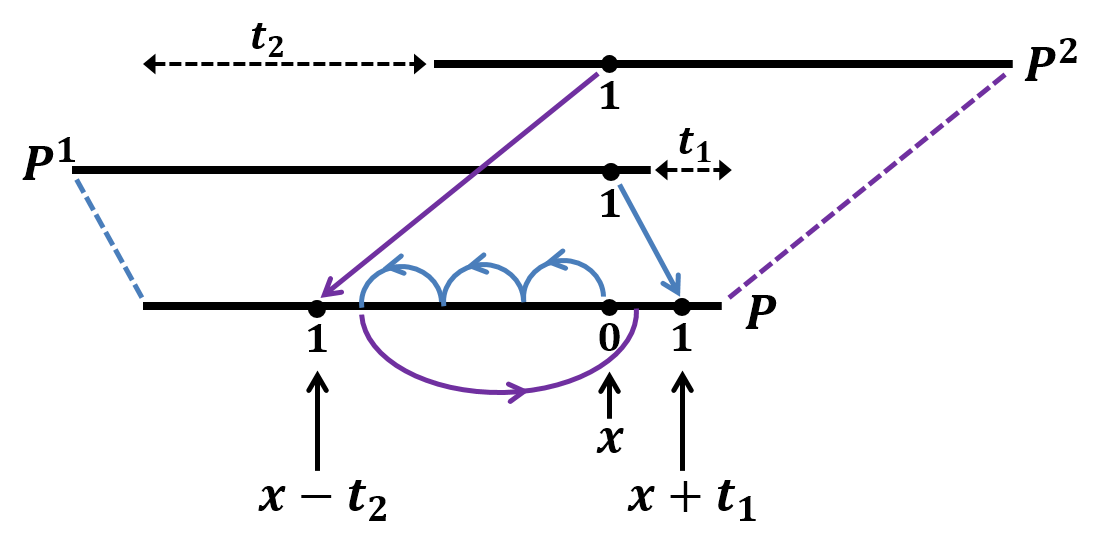}
\end{tabular}
\end{center}
\vspace{-24pt}
\end{figure}

Since the $P$-copy $P^1$ was created from $P$ at a left offset of $t_1$ by the flipping at location $x$, we can infer that $P_{y}=P_{y+t_1}$ for any $y\in[k-t_1]$ , $y\neq x$ (or informally that "$P$ is $t_1$-cyclic except at $x$ from the right" ). Similarly, we know that "$P$ is $t_2$-cyclic except at $x$ from the left".

We define a 'path' of skips that starts from location $x$, makes skips of size $t_1$ to the left as long as it does not pass $x-t_2$, then it makes a single skip to the right of size $t_2$. Call this short path a traversal. The path repeats this traversal until reaching $x-t_2$. It is easy to verify that the path is always within the open range $(x-t_2,x+t_1)$ (except for the last step that reaches $x-t_2$). This implies in particular that the path does not go from $x+t_1$ to $x$ or from $x$ to $x-t_2$ (i.e. through the two only "value switching skips"), and hence the value of $P$ along the path must be 0.

It remains to prove that the path eventually reaches $x-t_2$ and does not continue in some infinite loop. For each location $y$ that the path goes through we can look at the value $y \Mod{t_1}$. Assume w.l.o.g. that for the 'target' location $x-t_2$ we have that $x-t_2 = 0 \Mod{t_1}$. This implies for the 'starting' location $x$ that $x = t_2 \Mod{t_1} = 0$. Now, each skip by $t_1$ does not change the location $\Mod{t_1}$, while a skip by $t_2$ to the right increases the value by $t_2 \Mod{t_1}$. In other words, the sub-sequence of locations at the beginning of each traversal (before the first left skip) is of the form $\ell\cdot t_2 \Mod{t_1}$, for $\ell = 1,2,3,\ldots$ . This is exactly the subgroup of $\mathbb{Z}_{t_1}$ (the additive group of integers modulo $t_1$) generated by the element $t_2$ and hence must contain the identity element $0 \Mod{t_1}$. This proves that the location $x-t_2$ will be reached.

\ourParagraph{case 2: `$P^1$ and $P^2$ intersect $P$ from one (the same) side'}
Flipping a location $x$ in a $P$-copy $P$ creates two new $P$-copies $P^i$ ($i=1,2$) that intersect $P$ from the same side, w.l.o.g. right, at a shift of $t_i$, where $t_1<t_2$. Refer to Figure~\ref{fig.2copies_same_side} and its caption for the intuition of the proof. We 'follow' the two disjoint 'arrow paths' shown in the figure that lead from $x$ in $P$ to $x':=x-t_2$ in $P^1$ to reach a contradiction. Formally:
$$P_x=P^1_x=P^2_{x-t_2+t_1}=P_{x-t_2+t_1}=P^1_{x-t_2}$$
$$P_x\neq P^2_{x-t_2}=P_{x-t_2}=P^1_{x-t_2}$$

\begin{figure} [h!] \vspace{4pt}
\begin{center}
\small\addtolength{\tabcolsep}{8pt}
\hspace{-18pt}
\begin{tabular}{p{0.41\textwidth}p{0.4\textwidth}}
\vspace{-93pt}
\caption{\textbf{Illustration for case 2:} All arrows (ignoring directions) except the red one represent equality, while the red arrow represents inequality. The two disjoint 'arrow paths' from $x$ in $P$ to $x'$ in $P^1$ imply that both $P_x=P_{x'}$ and $P_x\neq P_{x'}$, leading to contradiction.}\label{fig.2copies_same_side}
&
\includegraphics[width = 0.45\textwidth]{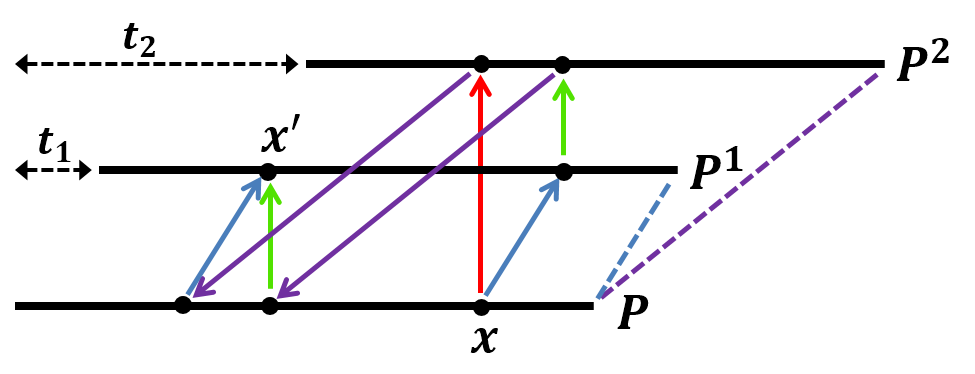}
\end{tabular}
\end{center}
\vspace{-24pt}
\end{figure}
\end{proof}
\end{proof}

\begin{proof}[Proof of Lemma~\ref{lem:hitting_set_2D}]
The first inequality follows from the fact that one needs to modify at least one entry in any $P$-copy in $A$.
For the third inequality, note that the set $\{(x_1, \ldots, x_d) \in [k]^d : \forall 1 \leq i \leq d,\  x_i \equiv k-1 (\text{mod }{k})\}$ is a set of size $[n/k]^d$ that hits all $k \times \ldots \times k$ consecutive subarrays of $A$, and in particular all $P$-copies.
It remains to prove that $d_P(A) \leq \alpha_d h_P(A)$.
We may assume that the alphabet $\Gamma$ is binary by applying the standard reduction from non-binary to binary alphabets presented in Section \ref{sec.modification.lemmas}.

We present a procedure on the array $A$ that makes it $P$-free by sequentially flipping bits in it.
In what follows, we will say that the \emph{center} of a $(k,d)$ matrix lies in location $(\lfloor k/2 \rfloor, \ldots, \lfloor k/2 \rfloor)$ in the matrix.
Let $\cal \mathcal{P}$ be the set of all $P$-copies \emph{before} $A$ is modified.
In Phase 1, the procedure "destroys" all $P$-copies in $\mathcal{P}$ by flipping central bits of a subset of the original $P$-copies in $A$, which is chosen in a greedy manner. However, these bit flips might create new $P$-copies in $M$, which are removed in Phase 2 using the modification lemma.
The procedure maintains sets ${\cal A}, {\cal B}$ that contain the bits flipped in phases 1,2 respectively.

\begin{itemize}\vspace{-5pt}
  \item Let ${\cal P}$ be the set of all $P$-copies in $A$, ${\cal N} \leftarrow \phi$ ${\cal A} \leftarrow\phi$, ${\cal B}\leftarrow\phi$.
  \item \textbf{Phase 1: }While ${\cal P} \neq \phi$
  \begin{itemize}
    \item Pick $Q \in {\cal P}$ arbitrarily.
    \item Flip $A_x$ where $x$ is the center of $Q$.
    \item Add $Q$ to ${\cal A}$ and remove all $P$-copies containing $x$ from ${\cal P}$.
    \item Add all $P$-copies \emph{created} by flipping $A_x$ to $\cal N$.
    \end{itemize}
  \item \textbf{Phase 2: }While ${\cal N} \neq \phi$
  \begin{itemize}
    \item Pick $Q \in {\cal N}$ arbitrarily.
    \item Pick a location $x$ in $Q$ whose flipping does not create new $P$-copies in $A$ (exists by modification lemma).
    \item Flip the bit $A_x$ and add $x$ to ${\cal B}$.
  \end{itemize}
\end{itemize}
For the analysis of the procedure, we say that two $P$-copies $Q,Q'$ in $A$ whose starting points are $x = (x_1, \ldots, x_d), y = (y_1, \ldots, y_d) \in [n]^d$ respectively are \emph{$1/2$-independent} if $|x_i - y_i| \geq k/2$ for some $1 \leq i \leq d$.
Note that $1/2$-independence is a symmetric relation.
A set of $P$-copies is $1/2$-independent if all pairs of copies in it are $1/2$-independent. Denote by $i_{P}(A)$ the maximal size of a $1/2$-indpendent set in $A$, divided by $n^d$.

For $Q$ and $Q'$ as above, if $Q'$ does not contain the center of $Q$ then $Q,Q'$ are $1/2$-independent, as
there is some $1 \leq i \leq d$ for which either $y_i < x_i + \lfloor k/2 \rfloor - (k-1) \leq x_i - k/2 + 1$ or $y_i > x_i + \lfloor k / 2 \rfloor \geq x_i + k/2 -1$.
In both cases $|y_i - x_i| \geq k/2$, implying the $1/2$-independence.
Therefore the set ${\cal A}$ generated by the procedure is $1/2$-independent: if $Q, Q' \in {\cal A}$ are two different $P$-copies and $Q$ was added to $\cal A$ before $Q'$, then $Q'$ does not contain the center of $Q'$, so $Q$ and $Q'$ are $1/2$-independent.
Using the following claims, it is not hard finish the proof of the lemma.
\begin{claim}
\label{clm:independece_hitting}
$i_P(A) \leq 2^d h_P(A)$.
\end{claim}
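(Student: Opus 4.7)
The plan is a direct charging argument between a minimum hitting set and a maximum $1/2$-independent family. I would fix $H\subseteq [n]^d$ to be a minimum hitting set of the $P$-copies in $A$ (so $|H|$ equals the hitting number) and let $\mathcal{I}$ be a maximum $1/2$-independent family of $P$-copies. For each $Q\in\mathcal{I}$, since $H$ hits $Q$ there is some element $\phi(Q)\in Q\cap H$; this defines a map $\phi:\mathcal{I}\to H$. The desired inequality $|\mathcal{I}|\le 2^d|H|$ then reduces to bounding each fiber $|\phi^{-1}(x)|$ by $2^d$.

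To bound the fiber at a point $x=(x_1,\ldots,x_d)\in H$, I would observe that any $P$-copy containing $x$ has its starting point $y=(y_1,\ldots,y_d)$ inside the axis-aligned box $B_x=\prod_{i=1}^d\{x_i-k+1,\ldots,x_i\}$. The key step is to partition $B_x$ into $2^d$ sub-boxes by splitting each one-dimensional interval $\{x_i-k+1,\ldots,x_i\}$ at $\lceil k/2\rceil$ into its lower half $\{x_i-k+1,\ldots,x_i-\lceil k/2\rceil\}$ and its upper half $\{x_i-\lceil k/2\rceil+1,\ldots,x_i\}$. A short calculation (for $k$ even and $k$ odd separately) shows each half has diameter strictly less than $k/2$, so every sub-box has coordinate-wise diameter $<k/2$ in every direction.

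With this partition in hand, the fiber bound becomes immediate: if two distinct copies $Q,Q'\in\phi^{-1}(x)$ had starting points $y,y'$ lying in the same sub-box, then $|y_i-y'_i|<k/2$ in every coordinate $i$, directly contradicting $1/2$-independence, which forces at least one coordinate in which the starting points differ by at least $k/2$. Hence each of the $2^d$ sub-boxes contains at most one starting point from $\phi^{-1}(x)$, giving $|\phi^{-1}(x)|\le 2^d$. Summing over $x\in H$ yields $|\mathcal{I}|\le 2^d|H|$, and the claim follows after the appropriate normalization by $n^d$.

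The main technical point, and essentially the only one that requires any care, is the partition step: one must verify that splitting at $\lceil k/2\rceil$ makes both halves of diameter strictly less than $k/2$ uniformly in the parity of $k$. This is a routine case check, but it is exactly what makes the factor come out to $2^d$ rather than something like $3^d$, so it is worth stating explicitly in the writeup.
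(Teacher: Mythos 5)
Your proof is correct and uses essentially the same argument as the paper: both show that any fixed point $x$ can be hit by at most $2^d$ copies from a $1/2$-independent family, by classifying the starting points of copies containing $x$ into $2^d$ classes (the paper via an ``$i$-lower''/``$i$-higher'' labeling per coordinate, you via a $2^d$-way box partition) and invoking pigeonhole together with the $1/2$-independence definition. The charging map $\phi$ from copies to hitting-set elements is just a more explicit phrasing of the paper's ``to hit all copies in $\mathcal{S}$ one needs at least $|\mathcal{S}|/2^d$ entries.''
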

The proof of Claim \ref{clm:independece_hitting} will be given later.
For what follows, we say that
$P$ has a \emph{cycle} of size $t=(t_1,\ldots,t_d)\in\mathbb{Z}^d$, if $P_x=P_y$ for every pair of locations $x=(x_1,\ldots,x_d),y=(y_1,\ldots,j_d)\in[k]^d$ such that $x_i\equiv y_i \Mod{|t_i|}$ $\forall i\in[d]$.
The following claim is straightforward to verify.
\begin{claim}\label{obs:overlap_to_cyclic_2D}
\textnormal{{[Shifted occurrences imply a cyclic pattern]}}
If $M$ contains two overlapping occurrences of $A$, at a relative offset of $t\in\mathbb{Z}^d$, then $P$ has a cycle of size $t$.
\end{claim}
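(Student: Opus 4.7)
The plan is to translate the overlap hypothesis into a pointwise shift identity on $P$ and then iterate to derive the cycle property. Let the two overlapping pattern-occurrences be anchored at positions $u$ and $v = u + t$ in the ambient array, so that the first copy covers $u + [k]^d$ and the second covers $v + [k]^d$. The overlap condition forces $|t_i| < k$ for every coordinate $i$, so the set of local pattern coordinates $z$ lying in \emph{both} copies, in the first copy's frame, is the sub-box
\[
B \;=\; \{z \in [k]^d : z - t \in [k]^d\},
\]
where subtraction is coordinatewise and read with the appropriate sign of each $t_i$.

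At every $z \in B$, the same ambient-array entry must equal both $P_z$ (from the first copy) and $P_{z-t}$ (from the second), yielding the local shift identity $P_z = P_{z-t}$ for all $z \in B$. The domain $B$ is exactly the set of $z \in [k]^d$ for which the shift $z \mapsto z - t$ is defined inside $[k]^d$, so this identity already holds wherever both sides make sense. This extraction is immediate from the definition of a pattern-occurrence and the overlap hypothesis, and requires no further input.

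From here the plan is to bootstrap the single shift identity to the stated cyclicity $P_x = P_y$ whenever $x_i \equiv y_i \pmod{|t_i|}$ for every $i$. The idea is to walk $x$ to a canonical representative in the fundamental domain $[|t_1|] \times \cdots \times [|t_d|]$ via a sequence of $\pm t$ steps, applying the shift identity (and its symmetric reverse, obtained by swapping the roles of the two copies) at each stage. Because $|t_i| < k$, the sub-box $B$ and its translates cover $[k]^d$ densely enough to leave room for a step in either direction of $t$ from any starting point.

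The main obstacle, and really the only part that needs care, is organizing this iteration so that (a) the walk never leaves $[k]^d$ and (b) every coordinatewise residue class is actually reached, since a single diagonal shift by the vector $t$ does not by itself decouple into independent cyclicity in each axis. Combining forward and backward $t$-steps from a suitably chosen interior point of the overlap region — using the breadth guaranteed by $|t_i| < k$ — is what closes this gap. Once the bookkeeping is in place, the cycle property as defined falls out, and all other ingredients of the proof are mechanical.
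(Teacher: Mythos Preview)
The paper gives no proof of this claim --- it is stated as ``straightforward to verify'' --- so there is no argument to compare against. Your derivation of the shift identity $P_z = P_{z-t}$ on the overlap box is correct and is really the only content one should expect here. The gap is in your bootstrapping step.

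You correctly flag that shifts by $\pm t$ do not by themselves decouple into independent per-coordinate cyclicity, but you then assert that ``combining forward and backward $t$-steps from a suitably chosen interior point'' closes this gap. It cannot: the set of points reachable from $x$ by $\pm t$-steps lies in $x + \mathbb{Z}t$, so from $P_z = P_{z \pm t}$ alone you will never deduce, for instance, $P_{(0,\ldots,0)} = P_{(|t_1|,0,\ldots,0)}$. In fact the claim, read literally with the paper's definition of ``cycle of size $t$'' (namely $P_x = P_y$ whenever $x_i \equiv y_i \pmod{|t_i|}$ for every $i$), is false: take $d = 2$, any $k \ge 2$, $t = (1,1)$, and the checkerboard $P_{(i,j)} = (i+j) \bmod 2$; two copies at relative offset $(1,1)$ agree on their overlap, yet a cycle of size $(1,1)$ would force $P$ to be constant. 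What the paper actually \emph{uses} downstream (in the central-bit-flip claim) is only the periodicity $P_x = P_{x+t}$ wherever both sides are defined, and that is exactly your first step. So the right fix is not to repair the bootstrap but to recognise that the stated coordinatewise cycle definition is stronger than what is provable from two overlapping copies, and stronger than what is needed.
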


\begin{claim}\label{clm:Bit_flip_no_new_occurrences_2D}
\textnormal{{[Central bit flip creates few new occurrences]}}
Flipping the central bit of a $P$-occurrence in $A$ creates at most $2^d$ new occurrences of $P$ in $A$.
\end{claim}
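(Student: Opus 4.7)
The plan is to associate to each new $P$-copy $Q_i$ a sign vector in $\{0,1\}^d$ and to show that distinct new copies receive distinct sign vectors; this immediately yields the bound of $2^d$ on the number of new occurrences.

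I would begin by characterizing the structure of new copies. Any new $P$-copy $Q$ created by flipping $A_x$ must contain $x$, so writing its start as $y=(x-c)+\alpha$ where $c=(\lfloor k/2\rfloor,\ldots,\lfloor k/2\rfloor)$ is the center of the original copy $P^0$, we have $\alpha\in\mathbb{Z}^d$ with $\alpha\neq 0$ (since $P^0$ is not a $P$-copy after the flip) and $P_{c-\alpha}\neq P_c$ (since the flipped bit must equal $Q$'s pattern value at the corresponding internal position, which disagrees with $P^0$). Comparing $Q$ with $P^0$ on the overlap of their ranges in the pre-flip array $A$ yields a near-periodicity identity: $P_z=P_{z+\alpha}$ for all $z,z+\alpha\in[k]^d$, with the sole exception $z=c-\alpha$.

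Next, for each new copy $Q_i$ with offset $\alpha^i$, I would assign the sign vector $\sigma^i\in\{0,1\}^d$ by $\sigma^i_j=1$ iff $\alpha^i_j\geq 0$. There are only $2^d$ sign vectors, so it suffices to show this labeling is injective. Suppose toward contradiction that two distinct new copies $Q_1\neq Q_2$ share a sign vector, so $\alpha^1\neq\alpha^2$ lie in the same octant; let $t=\alpha^2-\alpha^1\neq 0$. I would chain three applications of the near-periodicity identity: the $\alpha^2$-identity applied at $z=c-\alpha^1$ (not the $\alpha^2$-exception, since $\alpha^1\neq\alpha^2$) gives $P_{c-\alpha^1}=P_{c+t}$; the $\alpha^2$-identity at $z=c$ gives $P_c=P_{c+\alpha^2}$; and the $\alpha^1$-identity at $z=c+t$ (not the $\alpha^1$-exception, since $\alpha^2\neq 0$) gives $P_{c+t}=P_{c+\alpha^2}$. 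Chaining yields $P_{c-\alpha^1}=P_{c+t}=P_{c+\alpha^2}=P_c$, contradicting $P_{c-\alpha^1}\neq P_c$.

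The main subtlety I expect to encounter is ensuring that the four points $c-\alpha^1$, $c$, $c+t$, and $c+\alpha^2$ all lie in $[k]^d$ so that the near-periodicity identities are applicable. The same-octant hypothesis bounds $|t_j|$ by roughly $\lfloor k/2\rfloor$ in each coordinate, which is exactly what is needed to keep $c+t\in[k]^d$; a brief case analysis over the octants (with a minor boundary adjustment when $k$ is even, where $c$ is slightly off-center) handles the remaining range conditions. The exclusions of the exception points reduce to $\alpha^1,\alpha^2\neq 0$ and $\alpha^1\neq\alpha^2$, both of which hold by assumption.
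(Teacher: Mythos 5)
Your overall strategy---pigeonhole on a sign vector to produce two new copies whose offsets are close in every coordinate, then use a periodicity identity to derive a contradiction---is the same as the paper's. But there is a genuine gap in the contradiction step that cannot be patched within the framework you set up. You only ever compare each new copy $Q_i$ with the original template $P^0$, so every identity you have carries an exception: the $\alpha^i$-identity $P_z = P_{z+\alpha^i}$ fails precisely at $z = c-\alpha^i$. Your three-step chain needs all of $c-\alpha^1$, $c+t$, and $c+\alpha^2$ to lie in $[k]^d$. The last one fails whenever $k$ is even and some coordinate of $\alpha^2$ equals $\lfloor k/2 \rfloor$, and this is not a removable technicality. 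Take $d=2$, $k=4$, $c=(2,2)$, $\alpha^1=(2,1)$, $\alpha^2=(2,2)$, $t=(0,1)$: here $c+\alpha^1=(4,3)$ and $c+\alpha^2=(4,4)$ are both out of range, so you cannot reach $P_c$ via a forward step, and the two backward steps that would reach $P_c$ land exactly on the exceptions $z=c-\alpha^1$ and $z=c-\alpha^2$. The value $P_c$ is genuinely unreachable from the near-periodicity identities alone, so the contradiction does not go through. (Swapping the roles of $Q_1$ and $Q_2$ does not help since both $\alpha^i$ hit the boundary.) There is a second, smaller issue: with $\sigma_j=1$ iff $\alpha_j\geq 0$, the class $\{0,\dots,\lfloor k/2\rfloor\}$ has diameter $\lfloor k/2\rfloor$, so $|t_j|=k/2$ is possible for even $k$ and then even $c+t$ can leave $[k]^d$; a balanced half-open split (e.g.\ classifying by $0\leq x_i-y_i<k/2$ versus $k/2\leq x_i-y_i<k$, as the paper does in Claim~\ref{clm:independece_hitting}) gives $|t_j|<k/2$ strictly.

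The paper closes the gap with one extra idea you are missing: compare $Q_1$ with $Q_2$ directly. Both live in the \emph{post-flip} array, so their overlap yields the \emph{exact} (exception-free) periodicity $P_z = P_{z+t}$ for all $z,z+t\in[k]^d$. With this in hand the chain collapses to two applications: $P_{c-\alpha^1} = P_{c+t}$ by the $\alpha^2$-near-periodicity at $z=c-\alpha^1$ (this point is not the exception since $\alpha^1\neq\alpha^2$), and $P_{c+t}=P_c$ by $t$-periodicity at $z=c$. The only range condition is $c+t\in[k]^d$, which $|t_j|<k/2$ guarantees, and you already have $P_{c-\alpha^1}\neq P_c$ for the contradiction. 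So the argument you need is the paper's: first establish $|t_j|<k/2$ strictly via the balanced labeling, then use the unconditional $t$-periodicity from the $Q_1$--$Q_2$ overlap rather than chaining two exceptional identities.
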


We first show how to use these claims to finish the proof.
Consider the sets $\mathcal{A}, \mathcal{B}$ after the procedure ends.
The procedure flips $|{\cal A}| + |{\cal B}|$ bits in $A$, so $|{\cal A}| + |{\cal B}| \geq d_P(A) n^d$.
On the other hand, $|\mathcal{A}| \leq i_P(A) n^d \leq (2n)^d h_P(A)$ as $\cal A$ is $1/2$-independent.
Claim \ref{clm:Bit_flip_no_new_occurrences_2D} now implies that $|{\cal B}| \leq 2^d|{\cal A}|$, and we get that
\[
n^d d_P(A) \leq |{\cal A}| + |{\cal B}| \leq (2^d+1)|\mathcal{A}| \leq \alpha_d  n^d h_P(A)
\]
Dividing by $n^d$ yields the desired inequality. We now prove the claims.

\begin{proof}[Proof of Claim \ref{clm:independece_hitting}]
Let $\cal S$ be a $1/2$-independent set of $P$-copies in $A$, which is of size $i_P(A)$.
We will show that no point in $[n]^d$ is contained in more than $2^d$ copies from $S$, implying that to hit all copies of $P$ in $A$
(and in particular, all copies of $P$ in $\cal S$) we will need at least $|{\cal S}| / 2^d = i_P(A) / 2^d$ entries.

Suppose to the contrary that there are $2^d + 1$ copies from $\cal S$ that contain the point $x = (x_1, \ldots, x_d) \in [n]^d$.
we will say that a copy from $\cal S$ containing $x$ is $i$-lower if $k/2 \leq x_i - y_i < k$
and $i$-higher if $0 \leq x_i - y_i < k/2$ (note that $0 \leq x_i - y_i < k$ must hold).
Therefore, there exist two copies $Q, Q' \in {\cal S}$ containing $x$, starting at $(y_1, \ldots, y_d)$ and $(y'_1, \ldots, y'_d)$ respectively, such that for any $1 \leq i \leq d$, $Q$ is $i$-higher ($i$-lower)
\emph{if and only if} $Q'$ is $i$-higher ($i$-lower respectively).
But then, for any $i$, either $0 \leq y_i, y'_i < k/2$ or $k/2 \leq y_i, y'_i < k$, implying that $|y_i - y'_i| < k/2$, thus contradicting the fact that $\cal S$ is $1/2$-independent.
\end{proof}

\begin{proof}[Proof of Claim \ref{clm:Bit_flip_no_new_occurrences_2D}]
Assume that more than $2^d$ new occurrences are created. Since these occurrences overlap (at the bit flip location),the same argument as in Claim \ref{clm:independece_hitting} implies that there must be two of them, $P_1$ and $P_2$, that are shifted (one from the other) by some vector $t\in\mathbb{Z}^d$, where $|t_i|<k/2$ $\forall i\in[d]$.

By Claim \ref{obs:overlap_to_cyclic_2D}, $P$ (and hence also $P_1$ and $P_2$) has a cycle of size $t$.
Let $x$ be the point in $M$ of the (central) flipped bit in $P_0$ and consider the point $x'=x+t$, which is also in $P_0$, since $|t_i|< k/2$ $\forall 1\le i \le k$. The occurrence $P_2$ overlaps both locations $x$ and $x'$ (since both new occurrences $P_1$ and $P_2$ overlap the bit flip location $x$ and $P_2$ is shifted by $t$ from $P_1$, which overlaps $x$).

On one hand we have $M_x=M_{x'}$ (before the bit flip), since both locations belong to $P_0$, which has a cycle of size $t$. On the other hand, $M_x\neq M_{x'}$, since these locations both belong to $P_2$ and must be equal \emph{after} flipping $M_x$ as $P_2$ has a cycle length of $t$. This leads to a contradiction.
\end{proof}

\end{proof}

\section{Testers for Pattern Freeness: Proofs}\label{sec.testing.pattern.freeness.proofs}

\begin{theorem}
\label{thm:test_1D}
Let $P$ be a removable string of length $k$ and let $0 \leq \epsilon_1 < \epsilon_2 \leq 1$. There exists an $(\epsilon_1, \epsilon_2)$-tolerant tester
whose number of queries and running time are $O(\epsilon_2^2 / (\epsilon_2 - \epsilon_1)^3)$ where the constant term does not depend on $k$.
\end{theorem}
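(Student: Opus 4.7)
My plan is to reduce the tolerant tester directly to the distance-approximation algorithm of Theorem~\ref{thm:approx1D}. Given tolerance parameters $\epsilon_1 < \epsilon_2$, the tester will invoke that algorithm with carefully chosen accuracy parameters $\tau$ and $\delta$, obtain an estimate $\hat\delta$ of $\delta_P(S)$, and accept iff $\hat\delta \leq T$ for an appropriate threshold $T$.

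The design principle is that the output guarantees $\hat\delta \leq \delta_P(S) + \delta$ on close inputs and $\hat\delta \geq (1-\tau)\delta_P(S) - \delta$ on far inputs must yield disjoint value ranges, which requires $\tau\epsilon_2 + 2\delta < \epsilon_2 - \epsilon_1$. I would set $\tau := (\epsilon_2-\epsilon_1)/(4\epsilon_2)$ and $\delta := (\epsilon_2-\epsilon_1)/4$, and pick $T$ in the resulting gap, e.g.\ $T := (\epsilon_1+\epsilon_2)/2 - (\epsilon_2-\epsilon_1)/8$. Correctness then follows immediately: with probability at least $2/3$, close inputs yield $\hat\delta \leq \epsilon_1 + \delta < T$ and the tester accepts, while far inputs yield $\hat\delta \geq (1-\tau)\epsilon_2 - \delta > T$ and the tester rejects.

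The query complexity from Theorem~\ref{thm:approx1D} is $O(1/(k\tau\delta^2))$, which under the above choices simplifies to $O\bigl(\epsilon_2 \,/\, k(\epsilon_2-\epsilon_1)^3\bigr)$. To eliminate the dependence on $k$, I would use the trivial bound $\delta_P(S) \leq 1/k$ that holds for every string: whenever $\epsilon_2 > 1/k$, no input can be $\epsilon_2$-far, and the tester may accept without querying anything. This restricts attention to the regime $k \leq 1/\epsilon_2$, in which the factor $1/k$ can be absorbed together with a slight retuning of $\tau$ and $\delta$ to yield the claimed bound of $O\bigl(\epsilon_2^2 / (\epsilon_2-\epsilon_1)^3\bigr)$ independent of $k$. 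One must also check that $\delta < 1/k$ (the validity constraint in Theorem~\ref{thm:approx1D}), which follows since $\delta \leq \epsilon_2/4 \leq 1/(4k)$ in the nontrivial regime. The running time matches the query complexity because the approximation algorithm of Theorem~\ref{thm:approx1D} does.

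The main obstacle is really encapsulated inside Theorem~\ref{thm:approx1D}, which we are taking as given; the construction above is essentially a clean wrapper around it. The only delicate step is the case split on $k\epsilon_2 \leq 1$ and the corresponding bookkeeping to hit exactly the stated query bound, rather than a slightly weaker $O\bigl(\epsilon_2/(\epsilon_2-\epsilon_1)^3\bigr)$ one would get by naive substitution.
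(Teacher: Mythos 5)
Your wrapper construction (choice of $\tau$, $\delta$, and threshold) is sound and matches the paper's exactly, but the query-complexity bookkeeping contains a genuine gap, and it runs in precisely the direction you glossed over. Plugging your parameters into the black-box bound $O(1/(k\tau\delta^2))$ from Theorem~\ref{thm:approx1D} gives $O\bigl(\epsilon_2/(k(\epsilon_2-\epsilon_1)^3)\bigr)$, as you say. But your case reduction restricts to $\epsilon_2 \leq 1/k$, and in that regime $\epsilon_2/k \geq \epsilon_2^2$: the black-box bound is therefore \emph{at least} the target $\epsilon_2^2/(\epsilon_2-\epsilon_1)^3$, larger by a factor of $1/(k\epsilon_2) \geq 1$ that can be unbounded. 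No retuning of $\tau$ and $\delta$ rescues this --- the correctness constraint $\tau\epsilon_2 + 2\delta < \epsilon_2-\epsilon_1$ forces $\tau\delta^2 = O((\epsilon_2-\epsilon_1)^3/\epsilon_2)$ up to constants, so the black-box complexity is pinned at $\Theta(\epsilon_2/(k(\epsilon_2-\epsilon_1)^3))$.

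The paper avoids this by \emph{not} treating Theorem~\ref{thm:approx1D} as a black box. Inside its proof, the per-sample variance is $\sigma^2 \leq \mu(1/k-\mu) \leq \epsilon/k$, where $\epsilon$ is the true distance; the approximation algorithm must assume the pessimistic $\epsilon \leq 1/k$, giving $t = \Theta(1/(k^2\delta^2))$ sampled blocks, but the tolerant tester may take $\epsilon \approx \epsilon_2$ and use only $t = \Theta(\epsilon_2/(k\delta^2))$ blocks. Since each block has size $\beta k = \Theta(k/\tau)$, this gives $\beta k t = \Theta(\epsilon_2/(\tau\delta^2)) = \Theta(\epsilon_2^2/(\epsilon_2-\epsilon_1)^3)$ queries --- a savings of exactly the missing $k\epsilon_2$ factor. (The paper also invokes a monotonicity argument to justify analyzing only distances exactly $\epsilon_1$ or $\epsilon_2$.) To repair your proof you must import this refined variance/sample-count argument rather than cite the theorem's stated $O(1/(k\tau\delta^2))$ bound.
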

It is not clear whether this upper bound is tight in general.
However, for the important special case of tolerant testers with multiplicative tolerance of $1 + \tau$,
where $\tau > 0$ is a constant, the above tester is optimal (up to a multiplicative constant that depends on $\tau$), as is
shown by taking $\epsilon_2 = \epsilon$ and $\epsilon_1 = (1-\tau)\epsilon$ in Theorem \ref{thm:test_1D}, leading to the multiplicative tester given in Corollary~\ref{cor:test_1D.multiplicative}.

\begin{proof}[Proof of Theorems \ref{thm:approx1D} and \ref{thm:test_1D}]
Let $S$ be a string of length $n \geq \beta k$, where $\beta = 3 / \tau$. Write $\epsilon = \delta_P(S)$
and let $H \subseteq [n]$ be a hitting set for $P$ in $S$ whose size is $\epsilon n$. That is, $H$ is a minimal set of locations that satisfies the following: if $S$ contains a copy of $P$ starting at location $l$, then
$\{l, \ldots, l+k-1\} \cap H \neq \phi$.

For $i \in [n]$ let $I_i$ denote the ``cyclic interval'' of length $\beta k$ starting at $i$. That is, if $i + \beta k > n$ then $I_i = \{i, \ldots, n\} \cup \{0, \ldots, i + \beta k - n - 1\}$ and otherwise $I_i = \{i, \ldots, i + \beta_k - 1\}$. 

Let the random variable $X$ denote the
size of the minimal hitting set $H_i$ for $P$ in the interval $I_i$, divided by $\beta k$, where $i \in [n]$ is chosen uniformly at random.
Note that $X$ is computable in time $O(\beta k)$, by Theorem \ref{thm.1D.minDeleting.equals.minHitting.Plus.Exact}.
Let $\mu$ and $\sigma^2$ denote the expectation and the variance of $X$, respectively.
By the minimality of $H_i$, we have that $|H_i| \leq |H \cap I_i|$ since the set in the RHS is a hitting set for $P$ with respect to the interval $I_i$.
Thus, $\mu \leq \mathbb{E}[|H \cap I_i|] / \beta k = \epsilon$.

Next we bound $\mu$ from below.
Since $H_i$ hits all $P$-copies that lie exclusively inside $I_i$, and by the minimality of $H$, we must have
$|H_i| > |H \cap I'_i|$ where $I'_i$ is the cyclic interval that starts in $i+k$ and ends in $(i + (\beta-1)k - 1) \mod{n}$.
Therefore, $\mu \geq \mathbb{E}[H \cap I'_i] / \beta k = (1-2/\beta) \epsilon \geq (1-\tau)\epsilon$.
To conclude, we have seen that $(1 - \tau)\epsilon \leq \mu \leq \epsilon$.


To compute the variance of $X$, note that $0 \leq X \leq 1/k$, as there exist $\beta$ entries in $I_i$ such that any subinterval of length $k$ in $I_i$ contains at least one of them. By convexity, the variance satisfies $\sigma^2 \leq k\mu (1/k - \mu)^2 + (1 - k\mu) (0 - \mu)^2 = \mu (1/k-\mu) \leq \epsilon / k$.

Now let $Y = \frac{1}{t}\sum_{j=1}^{t} X_j$ where the $X_j$ are independent copies of $X$ and $t$ will be determined later. Then $\mathbb{E}[Y] = \mu$ and $\text{Var}(Y) = \sigma^2 / t  \leq \epsilon / k t$.

Recall that $\mathbb{E}[Y] = \mu$, where $(1-\tau) \epsilon \leq \mu \leq \epsilon$, so to get the desired approximation, it suffices to estimate $Y$
with an additive error of no more than $\delta$ with constant probability. Chebyshev inequality implies that it suffices to have $Var(Y) = \Theta(\delta^2)$. In other words, it will be enough to sample
$t = \Theta(\epsilon /k \delta^2)$ blocks, each of size $\beta k = \Theta(k / \tau)$. In total, it is enough to make $\Theta(k\epsilon (1/k-\epsilon) / \tau \delta^2) = O(\epsilon/\tau \delta^2)$ queries.

In the setting of approximation, $\epsilon$ is not known in advance, but $\epsilon \leq 1/k$ always holds,
so sampling $t = \Theta(1/k^2 \delta^2)$ blocks would suffice to get the desired additive error.
The return value of the approximation algorithm will be its estimate of $Y$.
The query complexity and running time are $\beta t k = \Theta(1 / k \tau \delta^2)$.
This finishes the proof of Theorem \ref{thm:approx1D}.

Now consider the setting of $(\epsilon_1, \epsilon_2)$-tolerant testing.
By monotonicity of the tester, we can assume that we are given a string whose distance from $P$-freeness is either exactly $\epsilon_2$ or
exactly $\epsilon_1$.
Pick $\epsilon = \epsilon_2$, $\delta = (\epsilon_2 - \epsilon_1)/4$, $\tau = (\epsilon_2 - \epsilon_1) / 4 \epsilon_2$,
and sample $t = \Theta(\epsilon /k \delta^2)$ blocks, with query complexity and running time
$\Theta(\epsilon /\tau \delta^2) = \Theta(\epsilon_2^2 / {\epsilon_2 - \epsilon_1}^3)$, as was stated above.
If the given string $S$ is $\epsilon_2$-far from $P$-freeness, then with probability at least $2/3$,
after sampling $t = \Theta(\epsilon /\tau \delta^2)$ samples, the value of $Y$ will be bigger than $(\epsilon_2)*(1 - \tau) - \delta = (\epsilon_2 + \epsilon_1)/2$.
On the other hand, if $S$ is $\epsilon$-close then with probability at least $2/3$, $Y \leq \epsilon_1 + \delta < (\epsilon_2 + \epsilon_1)/2$
Therefore, the tester will answer that the input is $\epsilon_2$-far if and only if $Y \geq (\epsilon_2 + \epsilon_1)/2$.
This finishes the proof of Theorem \ref{thm:test_1D}.

%
%
\end{proof}

\begin{proof}[Proof sketch for Theorems \ref{thm:approx_highD} and \ref{thm:test_highD}]
Take $\beta = 2 / \tau$.
Let $A$ be an $(n,d)$-array where we may assume that $n \geq \beta k$ for a suitable choice of $C$.
Again, the strategy is to take $t$ (to be determined) independent samples of blocks of size $\beta k \times \ldots \times \beta k$
and compute the hitting number of each sampled block.
Note that (as opposed to the one-dimensional case), computing the minimal hitting set is generally an $NP$-complete problem, but
since the hitting number of each of these blocks is at most $\beta^d = \Theta(\tau^{-d})$, here we may compute it
with running time that depends only on $\tau$ and $d$.
As in the 1D case, the expected relative hitting number $\mu$ of a sampled block satisfies $(1-\tau)^d h_P(A) = (1-2/\beta)^d h_P(A) \leq \mu \leq h_P(A)$.
The variance of the hitting number for a single sample is no bigger than $k^d (1/k^d - \mu)^2 + (1 - k^d \mu)\mu^2 = \mu(1/k^d-\mu) \leq \mu / k^d$, so for $t$ samples it is $O(h_P(A) / k^d t)$.
To get additive error of at most $\delta$ with constant probability, we may have (by Chebyshev inequality) $h_P(A) / k^d t = \Theta(\delta^2)$,
or $t = \Theta(h_P(A) / k^d \delta^2)$.

Therefore, for an approximation algorithm (in which we don't know $h_P(A)$ in advance, though we have an upper bound of $h_P(A) \leq 1 / k^d$), $t = \Theta(k^{-2d} \delta^{-2})$ sampled blocks are enough,
and the total number of samples is $O(1 / k^d \tau^d \delta^2)$.
For a $((1-\tau)^d\epsilon, \epsilon)$-tolerant tester for the hitting number (which translates to a $((1-\tau)^d \alpha_d^{-1} \epsilon, \epsilon)$-tolerant tester for the deletion number), as observed in the 1D case, when deciding on the number of samples we may assume that $h_P(A) = \epsilon$ and pick $\delta  = \Theta((1-(1-\tau)^d) \epsilon)$ , so $t = \Theta(\epsilon / k^d \delta^2) = \Theta(1 / k^d (1 - (1-\tau)^d)^2 \epsilon)$ sampled block suffice. Since each block is of size $\Theta(k^d / \tau^d)$, the total number of queries is $O(C_\tau \epsilon^{-1})$ where $C_\tau = 1 / \tau^d (1 - (1 - \tau)^d)^2$, while the running time is $C'_\tau \epsilon^{-1}$, where $C'_\tau$ depends on the time required to compute the hitting number in a single sampled block.
\end{proof} 


\section{Almost Homogeneous Patterns}
\label{sec:almost_homo}
The testers discussed above only consider removable patterns.
However, Theorem \ref{thm:modification} states that almost homogeneous patterns
over a binary alphabet are not removable. As the above testers are not applicable
for this case, it is natural to ask whether there exist efficient testers (and in particular tolerant testers)
for pattern freeness when the pattern is almost homogeneous.

In this subsection we partially answer this question, addressing the one-dimensional case; we do so by
describing a simple yet powerful characterization of the distance to $P$-freeness when $P$ is almost homogeneous.
The characterization is then utilized to
get both an optimal $1/(16+c)$-tolerant tester for $P$-freeness in the one-dimensional case and an efficient \emph{exact}
algorithm for computing the distance to $P$-freeness.

Unfortunately, this characterization does not hold in higher dimensions, and the question of
testing and approximating the distance to $P$-freeness in this case is left open.
It is not clear whether there exist efficient algorithms for computing the exact distance to $P$-freeness in high dimensions.
It will be interesting to either find such an algorithm or show that the problem is NP-hard.

%
Denote the absolute distance of a string $S$ to $P$-freeness by $d_P(S)$.
Our main results here are as follows.
\begin{theorem}
\label{thm:compute_almost_homo}
There exists an algorithm that, given an almost homogeneous binary string $P$ and a binary string $S$ of length $n$, compute $d_P(S)$ in time $O(n)$ and space $O(1)$ where the constants do not depend on the length of $P$.
\end{theorem}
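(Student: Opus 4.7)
The plan is to reduce to the canonical pattern $P = 1 0^{k-1}$: the other three orientations of an almost homogeneous binary pattern reduce to this by reversing and/or complementing the input, encoded by a single orientation flag so that only $O(1)$ extra space is needed. With this reduction, the combinatorial description of $P$-freeness is clean: a binary string is $P$-free iff every $1$ in it either lies in the \emph{tail} $(n-k, n-1]$ (where no copy of $P$ can start) or is followed by another $1$ within the next $k-1$ positions. Equivalently, $d_P(S) = \min_B |A \triangle B|$ over all subsets $B \subseteq [n]$ satisfying this partner condition, where $A$ denotes the $1$-positions of $S$.

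Decompose $A$ into its maximal $1$-blocks $B_1, \ldots, B_m$ with trailing positions $q_1 < \cdots < q_m$, lengths $L_1, \ldots, L_m$, and following $0$-run lengths $R_1, \ldots, R_m$. Call $B_j$ a \emph{problem block} when $R_j \ge k-1$ and $q_j \le n-k$; these are exactly the blocks whose trailing $1$ starts a copy of $P$. For each problem $j$ set the \emph{chain cost}
\[ c_j \;=\; \Bigl\lceil \tfrac{R_j - k + 2}{k-1} \Bigr\rceil \text{ if } B_{j+1} \text{ exists}, \qquad c_j \;=\; \Bigl\lceil \tfrac{n - k + 1 - q_j}{k-1} \Bigr\rceil \text{ otherwise}. \]
The structural claim I plan to prove is
\[ d_P(S) \;=\; \bigl[B_1 \text{ is a problem block}\bigr]\cdot\min(c_1, L_1) \;+\; \sum_{\substack{j \ge 2 \\ B_j \text{ problem}}} c_j. \]
The upper bound is constructive: for each problem $B_j$ insert $c_j$ new $1$s at $q_j + (k-1), q_j + 2(k-1), \ldots$ to form a chain that reaches either $B_{j+1}$'s leftmost $1$ (already in $A$, free as anchor) or a fresh $1$ in the tail; for $B_1$ one may alternatively delete the entire block, costing $L_1$. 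The resulting $B$ satisfies the partner condition and the flip count matches the formula.

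For the lower bound I will argue by a local exchange. Three basic inequalities suffice. (i) When $B_{j+1}$ exists, reaching its leftmost $1$ is never strictly harder than reaching a tail position, so chain-to-$B_{j+1}$ is always at most chain-to-tail. (ii) Any hybrid that deletes $s$ trailing $1$s of a problem $B_j$ but leaves the block nonempty must then chain from the new trailing $1$, whose distance to the next anchor has only grown; the ceiling is monotone, so the chain cost does not decrease, and the $s$ deletions are pure overhead. (iii) Fully deleting $B_j$ for $j \ge 2$ costs $L_j \ge 1$ and extends $B_{j-1}$'s following $0$-run by $L_j + R_j \ge k$; monotonicity of the ceiling gives that the resulting growth $\Delta$ in $B_{j-1}$'s chain cost satisfies $L_j + \Delta \ge c_j$, so the net effect is non-negative. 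Together with the trivial observation that modifications inside disjoint $0$-runs (separated by an existing $1$-block) cannot interact, these rule out every deviation from the prescribed chain-plus-possible-$B_1$-deletion strategy. The only exception is $j = 1$, where deletion has no block to cascade into, hence the $\min(c_1, L_1)$ term.

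The algorithm is a single left-to-right scan maintaining only $O(1)$ integer counters: the current position, a flag for being inside a $1$-block or $0$-run, the length of the current block or run, the length $L_1$ of the first $1$-block (locked in once that block closes), a flag that the first block has been seen, and the running total. Whenever we close a $0$-run that followed a problem $1$-block and then either see the next $1$ (selecting the "$B_{j+1}$ exists" branch of $c_j$) or reach the end of $S$ (selecting the chain-to-tail branch), we add $c_j$ to the total, or $\min(c_j, L_1)$ if we are still processing $B_1$. Each character is read once, all arithmetic fits in $O(\log n)$-bit registers, and neither the working space nor the per-character work depends on $k$, giving $O(n)$ time and $O(1)$ space as stated. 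The main obstacle is step (iii) of the exchange argument: one has to check that even across long runs of consecutive problem blocks a cascading deletion never saves, and handle with care the boundary case where chain-to-$B_{j+1}$ and chain-to-tail agree in cost.
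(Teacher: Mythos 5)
Your proposed structural formula is incorrect. Take $P = 1\,0^{k-1}$ and $S = 1\,0^{k-1}\,1\,0^{3(k-1)}$, so $n = 4k-2$. The two maximal $1$-blocks are $B_1 = \{0\}$ and $B_2 = \{k\}$; both are problem blocks with $L_1 = L_2 = 1$, $R_1 = k-1$, and $R_2 = 3k-3$. Your chain costs evaluate to $c_1 = \lceil 1/(k-1) \rceil = 1$ and $c_2 = \lceil (2k-1)/(k-1) \rceil = 3$ for $k \geq 2$, so the formula returns $\min(c_1, L_1) + c_2 = 4$. But flipping both $1$s yields $0^n$, which is $P$-free, so $d_P(S) \leq 2$; since the two $P$-copies at positions $0$ and $k$ are disjoint, $d_P(S) = 2$. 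Worse, for $S = (1\,0^{k-1})^m\,0^{10k}$ the true answer is $m$ (delete every $1$) while your formula returns approximately $m + 10$.

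The defect is in step (iii) of the exchange argument. You bound the net effect of deleting $B_j$ (for $j \geq 2$) by $L_j + \Delta - c_j$ with $\Delta$ the growth in $B_{j-1}$'s \emph{chain} cost, and show this is nonnegative. But you grant $B_1$ the alternative of outright deletion. When a cascade of deletions reaches $B_1$, what matters is the growth of $\min(c_1', L_1)$, not of $c_1'$, and that quantity can remain $0$ even as $c_1'$ grows arbitrarily. In the first example, deleting $B_2$ raises $B_1$'s chain cost from $1$ to $4$ but leaves $\min(\cdot, L_1) = 1$ fixed, so the true net change is $1 + 0 - 3 = -2$, a strict saving that your inequality $L_j + \Delta \geq c_j$ never sees. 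Any repair must allow the delete option for every block and optimize over which subset of blocks to delete, which no longer reduces to your per-block sum.

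The paper circumvents this with a min-max characterization: it defines a $P$-evidence as a pair $(i,j)$ with $S_i = 1$ and $S_j = \cdots = S_{j+k-2} = 0$, declares two evidences non-overlapping when their $i$-coordinates differ and their $j$-coordinates are at distance at least $k-1$, and proves (Lemma~\ref{lem:many_disjoint_almost_homo}) that $d_P(S)$ equals the maximum size of a pairwise non-overlapping set of evidences. This packing quantity automatically resolves the delete-versus-chain trade-off globally, and it is computed by a single left-to-right pass maintaining only three $O(1)$-size counters. You should adopt a characterization of this type rather than a block-local formula.
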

The above algorithm is clearly optimal (up to a multiplicative constant that does not depend on $k$) with respect to time and space complexity.
Our second result is a $1/(16+c)$-tolerant tester for $P$-freeness when $P$ is almost homogeneous.
\begin{theorem}
\label{thm:test_almost_homo}
Let $P$ be an almost homogeneous binary string and fix a constant $c > 0$. Then there exists a $1/(16+c)$-tolerant $\epsilon$-tester for $P$-freeness
making at most $\alpha_c \epsilon^{-1}$ queries, where $\alpha_c$ depends only on $c$.
\end{theorem}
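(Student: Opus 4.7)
The plan is to reduce to the case $P = 10^{k-1}$ by mirror symmetry (the other almost-homogeneous binary pattern is handled analogously), and then combine a combinatorial characterization of $d_P(S)$ with a uniform-sampling tester.

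Representing $S$ by its 1-positions $X = \{x_1 < \dots < x_m\}$ with virtual boundary $x_{m+1} := n$, set the gaps $g_i := x_{i+1} - x_i$. Then $S$ is $P$-free iff $g_i \le k-1$ for every $i$. Two natural strategies give upper bounds on $d_P(S)$: deleting every 1 (cost $|X|$), and keeping every 1 while inserting new 1s to split each long gap (cost $q_a(S) := \sum_{i \,:\, g_i \ge k}(\lceil g_i/(k-1)\rceil - 1)$). Using the structural ideas behind the linear-time exact algorithm of Theorem~\ref{thm:compute_almost_homo}, I would establish that these two bounds are tight up to a small constant: for an absolute $c_1$,
\[
d_P(S) \;\le\; \min\bigl(|X|,\,q_a(S)\bigr) \;\le\; c_1 \cdot d_P(S).
\]

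The tester samples $t = \Theta(\alpha_c/\epsilon)$ uniformly random positions in $S$. At each sampled $i$ it queries $S_i$ (feeding an estimator $\hat p_1$ of $|X|/n$), and for a $\Theta(1/(k-1))$-fraction of the samples it additionally reads the window of length $k$ to the left, testing whether $S_{i-1}\,S_i \cdots S_{i+k-2}$ equals $1 \cdot 0^{k-1}$, i.e.\ whether a $P$-copy starts at $i-1$. A standard run-length argument shows that this weighted indicator estimates $q_a(S)/n$ up to a factor $2$; call this estimator $\hat p_2$. The total query cost is $O(\alpha_c/\epsilon)$. The tester outputs \emph{far} iff both $\hat p_1 \ge \epsilon/C$ and $\hat p_2 \ge \epsilon/C$ for an absolute constant $C$. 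By the $\min$-characterization, $d_P(S) \ge \epsilon n$ forces both true densities to be $\ge \epsilon/c_1$, while $d_P(S) \le \epsilon n/(16+c)$ forces at least one of them to be $\le \epsilon/(16+c)$; Chebyshev concentration with $\alpha_c$ taken sufficiently large in terms of $c$ then yields correctness with probability at least $2/3$.

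The main obstacle is the lower-bound half of the characterization, $\min(|X|,\,q_a(S)) \le c_1 \cdot d_P(S)$, because an optimal solution may both remove some 1s and insert others -- a hybrid that can at first sight beat either pure strategy. One must carefully track how removing a 1 merges two adjacent gaps into a longer one and bound the resulting insertion cost from below. I would carry out this bookkeeping by decomposing an optimal solution into ``remove-then-insert'' phases and showing that each hybrid saving is offset by an equal additional cost elsewhere, paralleling the sweep behind Theorem~\ref{thm:compute_almost_homo}. The final tolerance factor $16+c$ then arises by composing $c_1$ with the factor $2$ in the $q_a$-to-$\hat p_2$ approximation and the constant slack needed for two-sided Chebyshev concentration across two estimators.
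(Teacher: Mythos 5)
The key lemma your argument rests on, $\min(|X|,\,q_a(S)) \le c_1\cdot d_P(S)$ for an absolute constant $c_1$, is false, and this breaks both the characterization and the tester built on it. The $\min$-characterization ignores the crucial left-to-right ordering: for $P = 10^{k-1}$, a $P$-copy needs a $1$ sitting to the \emph{left} of a long run of $0$s, so a $1$ located to the right of every long gap can never contribute to a $P$-copy, yet it still contributes to $|X|$. Concretely take $S = 1^m\,0^{Lk}\,1^M$ with $m \ll L \ll M$. Then $|X| = m+M$ and $q_a(S) = \Theta(L)$, so $\min(|X|, q_a(S)) = \Theta(L)$; but the only $P$-copy is the one starting at position $m-1$, and deleting the first $m$ ones makes the string $P$-free while none of the $M$ trailing ones can ever start a copy, so $d_P(S) = m$. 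The ratio $\Theta(L)/m$ is unbounded. This is fatal for the proposed tester: for $\epsilon = (16+c)m/n$ the string is $\epsilon/(16+c)$-close to $P$-free, yet both $|X|/n$ and $q_a(S)/n$ sit far above $\epsilon/C$, so your ``reject iff both densities are high'' rule rejects a string it must accept.

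The paper sidesteps this by proving an \emph{exact} identity (Lemma~\ref{lem:many_disjoint_almost_homo}): $d_P(S) = d_E(S)$, the maximum number of pairwise non-overlapping $P$-evidences, where a $P$-evidence is a pair $(i,j)$ with $i<j$, $S_i=1$, and $S_j = \cdots = S_{j+k-2} = 0$, and $(i,j),(i',j')$ are non-overlapping when $i\neq i'$ and $|j-j'|\ge k-1$. The constraint $i<j$ is exactly what your $\min$-characterization drops. The tester then looks for a $P$-evidence directly: it samples both single positions (to catch $1$s) and blocks of length $2k-1$ (to catch all-zero windows of length $k-1$), and rejects only when a sampled $1$ lies to the \emph{left} of a sampled all-zero window; it never tests the two densities in isolation. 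As a secondary issue, your $\hat p_2$ as described (an indicator of a $P$-copy starting at $i-1$) estimates the number of long gaps $L$, not $q_a(S)$; when a single gap is very long these differ by a factor $\Theta(k)$, so the claimed ``factor-$2$ run-length'' estimate of $q_a(S)/n$ does not hold either.
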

Again, the running time of the above tester is optimal (up to a multiplicative constant).
In particular, it does not depend on the length of the forbidden pattern $P$ or the tested string.
We did not try to optimize the tolerance factor in the above statement.

In the rest of this subsection we will prove the above two theorems. We will start by describing a characterization
of the distance to $P$-freeness which is easier to work with.
Without loss of generality, in the rest of the subsection assume that $P = 10^{k-1}$ for some $k > 0$ and that $S$ is a string of length $n$.
A pair $(i,j)$ with $0 \leq i < j \leq n-k+1$ is said to be a \emph{$P$-evidence} in $S$ if $P_i = 1$ and $P_l = 0$ for any $j \leq l \leq j+k-2$.
\begin{observation}
\label{obs:distance_evidence}
$S$ is $P$-free \emph{if and only if} there are no $P$-evidences in $S$.
\end{observation}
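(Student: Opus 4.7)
The plan is a direct equivalence argument that exploits the very rigid structure $P = 10^{k-1}$, with the two directions handled by explicitly constructing a witness in each case.

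For the direction ``$S$ contains a $P$-copy $\Rightarrow$ a $P$-evidence exists,'' suppose $P$ occurs in $S$ starting at position $l$, so $S_l = 1$ and $S_{l+1} = S_{l+2} = \cdots = S_{l+k-1} = 0$, and moreover $l \le n-k$ for the copy to fit. I would simply take $(i,j) = (l, l+1)$: then $i < j$, the bound $j = l+1 \le n-k+1$ is immediate, $S_i = 1$, and the required zero-run $S_j, \ldots, S_{j+k-2}$ is exactly $S_{l+1}, \ldots, S_{l+k-1}$, which is all zeros. So $(l,l+1)$ is a $P$-evidence.

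For the converse ``$P$-evidence $\Rightarrow$ $S$ not $P$-free,'' given a $P$-evidence $(i,j)$, I would produce a $P$-copy by sliding the leftmost index rightward as far as possible while preserving the ``$1$'' witness. Concretely, let $l$ be the largest index in $\{i, i+1, \ldots, j-1\}$ with $S_l = 1$; such an $l$ exists because $i$ itself qualifies. By maximality of $l$, all entries $S_{l+1}, \ldots, S_{j-1}$ equal $0$; concatenating with the zero-run $S_j, \ldots, S_{j+k-2}$ guaranteed by the evidence yields $S_{l+1} = S_{l+2} = \cdots = S_{l+k-1} = 0$, where the chain reaches index $l+k-1$ because $l \le j-1$ implies $l+k-1 \le j+k-2$. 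Together with $S_l = 1$ this exhibits a $P$-copy starting at position $l$.

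The only mild obstacle is routine index bookkeeping, in particular checking that the constructed $P$-copy fits inside $S$: from $l \le j-1$ and $j \le n-k+1$ we get $l+k-1 \le j+k-2 \le n-1$, so the window $[l, l+k-1]$ lies in $[0,n-1]$. Since the pattern has the special form ``a single $1$ followed by $k-1$ zeros,'' both directions reduce to choosing the right witness, and no combinatorial machinery beyond the definition of a $P$-evidence is needed.
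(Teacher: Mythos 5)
Your proof is correct and follows essentially the same route as the paper: for the forward direction you extract the evidence $(l, l+1)$ directly from a $P$-copy at $l$, and for the converse you take the largest $l < j$ with $S_l = 1$ and splice the two zero-runs together to exhibit a $P$-copy at $l$, which is exactly the paper's argument. The additional index bookkeeping you spell out is fine and only makes explicit what the paper leaves implicit.
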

\begin{proof}
If $S$ contains a $P$-copy at location $i$ then the pair $(i,i+1)$ is a $P$-evidence.
On the other hand, suppose that $(i,j)$ is a $P$-evidence and let $i' < j$ be the maximal integer such that
$P_{i'} = 1$. Then $P_{l} = 0$ for any $i' < l \leq j+k-2$, implying that $S$ contains a $P$-copy starting at $i'$.
\end{proof}
The main technical result of this subsection relates the distance to $P$-freeness to the maximal number of non-overlapping $P$-evidences, where
two $P$-evidences $(i,j)$ and $(i',j')$ are non-overlapping if $i \neq i'$ and $|j - j'| \geq k-1$.
Denote the maximal number of non-overlapping $P$-evidences in $S$ by $d_E(S)$.
\begin{lemma}
\label{lem:many_disjoint_almost_homo}
$d_P(S) = d_E(S)$.
\end{lemma}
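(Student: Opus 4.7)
The plan is to prove the two inequalities $d_P(S) \geq d_E(S)$ and $d_P(S) \leq d_E(S)$ separately. The lower bound admits a clean disjoint-witness argument, while the upper bound requires an explicit construction whose correctness proof is more delicate.

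For $d_P(S) \geq d_E(S)$: I would fix an arbitrary deletion set $F$ and a maximum non-overlap evidence set $E = \{(i_s, j_s)\}_{s=1}^{t}$, and assign to each $e_s$ the witness set $K_s := \{i_s\} \cup [j_s, j_s + k - 2]$. The argument then reduces to showing two facts: (a) $F \cap K_s \neq \emptyset$ for every $s$, and (b) the sets $K_1, \ldots, K_t$ are pairwise disjoint; these together give $|F| \geq t = d_E(S)$. For (a), if $i_s \notin F$ then $S'_{i_s} = 1$ in the modified string, and since $i_s < j_s \leq n-k+1$ we have $i_s \leq n - k$; applying $P$-freeness of $S'$ at the largest 1 of $S'$ in $[0, j_s - 1]$ and exploiting its maximality forces a 1 of $S'$ into $[j_s, j_s + k - 2]$, and because $S$ is zero on this interval that position must have been flipped, hence lies in $F$. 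For (b), the intervals $[j_s, j_s + k - 2]$ are pairwise disjoint because $|j_s - j_{s'}| \geq k - 1$, the $i_s$'s are pairwise distinct by the non-overlap definition, and no $i_s$ lies in any $[j_{s'}, j_{s'}+k-2]$ because $S_{i_s} = 1$ but $S$ vanishes on every such interval.

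For $d_P(S) \leq d_E(S)$: I would build an explicit deletion set of size $d_E(S)$. Working with the run decomposition $S = B_0 W_0 B_1 W_1 \cdots B_m W_m$, the only $P$-copies arise from ``problematic'' 0-runs $W_b$ with $w_b \geq k - 1$ and $b_b \geq 1$; each such run admits up to $\lfloor w_b / (k-1) \rfloor$ pairwise $(k-1)$-separated candidate $j$-positions, and an evidence with $j$ in $W_b$ can pair with any 1-position lying in $B_0 \cup \cdots \cup B_b$. This realizes $d_E(S)$ as the maximum matching in the natural bipartite graph between 1-positions (left) and these $j$-slots (right), and by K\"onig's theorem there is a vertex cover of the same size. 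I would use this cover -- whose 1-position part prescribes deletions of 1's and whose slot part prescribes insertions of 1's inside long 0-runs -- as the candidate flip set and verify that it is a valid deletion set.

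The upper bound is the main obstacle. Both naive strategies -- always flipping $i_s$, or always flipping the last 0 of the streak $[j_s, j_s + k - 2]$ -- fail in corner cases: for $S = 1\,0^{2k-2}$ the cheapest fix is to remove the lone 1 whereas pure insertion costs two flips, while for $S = 1\,1\,0^{k-1}$ removing the rightmost 1 produces a new $P$-copy starting at position $0$. Verifying that the K\"onig-cover flip set destroys every $P$-copy without creating new ones will need a careful local exchange argument to handle very long 0-runs at the boundary of $S$ and the cascade effects when entire 1-runs are deleted and adjacent 0-runs merge; this is the step most likely to require the bulk of the work.
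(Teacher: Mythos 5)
Your lower bound $d_P(S)\geq d_E(S)$ via pairwise-disjoint witness sets $K_s = \{i_s\}\cup[j_s,j_s+k-2]$ is correct and matches the paper's argument in spirit: disjointness is exactly what the non-overlap condition encodes ($|j_s-j_{s'}|\geq k-1$ separates the $0$-intervals, and $S_{i_s}=1$ keeps the $i$'s out of them), and each $K_s$ must meet any deletion set by essentially re-deriving Observation~\ref{obs:distance_evidence} inline, which is what your ``largest $1$ in $[0,j_s-1]$'' argument does.

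The upper bound $d_P(S)\leq d_E(S)$ is where your proposal diverges from the paper and where the gap lies. The paper constructs a specific set $T$ of non-overlapping evidences by a greedy left-to-right marking process, establishes ordering properties of $T$, identifies a threshold index $m$ (the largest still-unmarked $1$), and then flips $S_i\to 0$ for pairs in $T$ with $j<m$ and $S_{j+k-2}\to 1$ for pairs with $i>m$; the monotonicity of the $j_s$'s and the role of $m$ are precisely what rule out the cascades you worry about (deleting a $1$ merging $0$-runs into a new $P$-copy, or inserting a $1$ exposing a $P$-copy to its right). You instead propose to realize $d_E(S)$ as a maximum matching in a bipartite graph on $1$-positions and pre-selected $j$-slots, take a K\"onig vertex cover of the same size, and read off a flip set. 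Two steps remain genuinely open: (i) it is not shown that restricting to a fixed $(k-1)$-separated family of $j$-slots per $0$-run preserves the optimum --- the constraint ``$j$'s pairwise $\geq k-1$ apart'' is not itself a matching constraint, so you would need a normalization lemma pushing any optimal evidence set onto the chosen slots without loss; and (ii) the step you yourself flag as the bulk of the work --- that the cover, turned into flips, yields a $P$-free string. An arbitrary minimum vertex cover carries no built-in left/right coherence analogous to the paper's $m$-threshold, so nothing in the sketch prevents the very corner cases you list. As written, the upper bound is a plan rather than a proof.
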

\begin{proof}[Proof of Lemma \ref{lem:many_disjoint_almost_homo}]
It is clear that $d_P(S) \geq d_E(S)$. Indeed, take a set of non-overlapping $P$-evidences which is of size $d_E(S)$, then by Observation \ref{obs:distance_evidence}
we will need to modify at least one bit corresponding to any evidence, and in total at least $d_E(S)$ bits in $S$, to make it $P$-free.
Next we show that $d_P(S) \leq d_E(S)$ by showing that there exists a set $T$ of non-overlapping $P$-evidences in $S$ with the following property:
to make $S$ $P$-free, it is enough to modify exactly one bit in any evidence in $T$.

The construction of $T$ is carried by ``marking'' entries corresponding to $P$-evidences in $S$ in the following manner.
Initially, $T$ is empty and no entry of $S$ is marked. We repeat the following procedure as long as $S$ contains a $P$-evidence whose entries are unamrked:
in each round, we pick the smallest $0 < j \leq n-k+1$ for which both $S_l$ is unmarked for any $j \leq l \leq j+k-2$ and the set $A_j = \{i < j: \text{$S_i$ is unmarked and equal to $1$}\}$ is not empty. Pick $i = \max{A_j}$. Now add $(i,j)$ to $T$ and mark the entries of $S$ in the $k$ locations $i, j, j+1, \ldots, j+k-2$.
Consider the set $T = \{(i_1, j_1), \ldots, (i_t, j_t)\}$ in the end of the process, where $(i_s, j_s)$ was chosen in round $s$ of the process, and let $m$ be the largest integer such that $S_m$ equals $1$ and is not marked (if no such entry exists, we take $m = n$).
Then $T$ satisfies the following:
\begin{itemize}
\item $j_1 < \ldots < j_t$. This is true by the minimality of $j$ picked in any round. In particular, if $S_i$ is marked before $S_j$ and both entries are zero then $i < j$.
\item Let $0 < s \leq t$. If $i_{s} < i_{s-1}$ then we must have $j_{s} = j_{s-1} + k - 1$: by the maximality of $i_s$, it must be true that $P_l = 0$ for any $j_{s-1} + k - 1 \leq l < j_{s}$. Thus, by the minimality of $j_{s}$, we must have that $j_{s} = j_s + k -1$, as desired.
\item On the other hand, if $i_{s} > i_{s-1}$ then $j_{s} = i_{s} + 1$. Again, this is true by the minimality of $j_{s}$ and the maximality of $i_{s}$ with respect to $j_{s}$.
\item For any $1 \leq s \leq t$, either $j_s < m$ or $i_s > m$. This is true by the maximality of $i_s$ with respect to $j_s$.
\end{itemize}
We now show how one can modify exactly $t$ bits in $S$ to make it $P$-free. For any $(i,j) \in T$ such that $j < m$, we modify $P_i$ to zero.
After the modification, $P_l = 0$ for any $ l < m$, so there cannot be $P$-copies in $S$ starting at $l < m$.

On the other hand, for any $(i,j) \in T$ such that $ i > m$, we modify $P_{j+k-2}$ to one. It remains to show that after the modification, $S$ will not contain a $P$-copy starting at some $l \geq m$. By the third observation and the minimality of $j$ in any round of the process, we cannot have a $P$-copy starting in an unmarked entry $S_l = 1$ - since otherwise, at some point in the above process, we would have to add the evidence $(l, l+1)$ to $T$, while marking $S_l$, a contradiction. To complete the proof, suppose that $S_l$ is marked and equals one after the modification; that is, $l = j_s+k-2$ for some $(i_s,j_s) \in T$ with $i_s > m$. There are two cases to consider.
\begin{itemize}
\item If $i_{s+1} < i_s$ then, by the second observation above, $j_{s+1} = j_s + k - 1$. Therefore, the entry in location $j_{s+1} + k - 2 = l + k - 1$ was flipped to one during the modification, so it is not possible to have a $P$-copy starting in location $l$ after the modification.
\item On the other hand, if $i_{s+1} > i_s$ then it is not possible to have $P_{l+1} = P_{l+2} = \ldots = P_{l+k-1} = 0$ before the modification,
since otherwise we would choose the pair $(i'_{s+1}, l+1)$ satisfying $i'_{s+1} \geq m$ in round $s+1$ of the process instead of choosing
$(i_{s+1}, j_{s+1})$, as we have $l+1 < i_{s+1} < j_{s+1}$ and by the first observation above.
\end{itemize}
We conclude that after the modification, $S$ is $P$-free. Hence $d_P(S) \leq |T|$. Since obviously $|T| \leq d_E(S)$, we get that $d_P(S) = d_E(S) = |T|$, as desired.
\end{proof}

\begin{proof}[Proof of Theorem \ref{thm:compute_almost_homo}]
Consider the following algorithm to compute the distance of a string $S$ of length $n$ to $P$-freeness, where $P = 10^{k-1}$.
The idea of the proof is to read $S$ from left to right, marking $P$-evidences along the way and keeping track of the following quantities.
\begin{itemize}
\item The number of $1$'s that were already observed but not yet marked, denoted by $a$.
\item The length of the longest streak of unmarked $0$'s lying immediately to our left, denoted by $b$.
\item The number of non-overlapping $P$-evidences found until now, denoted by $c$.
\end{itemize}
We initialize $a,b,c = 0$. For $i$ running from $0$ to $n-1$ we do as follows:
\begin{itemize}
\item If $S_i = 1$ then $a$ is increased by one.
\item If $S_i = 0$ then
\begin{itemize}
\item If $a = 0$ then we do nothing.
\item if $a > 0$ then $b$ is increased by one. If we now have $b = k-1$ then a new $P$-evidence was found - and we update $a \leftarrow a-1$, $b \leftarrow 0$, $c \leftarrow c+1$.
\end{itemize}
\end{itemize}
The algorithm returns the value of $c$ after the above loop stops.
It is not hard to see that the algorithm indeed calculates $d_E(S) = d_P(S)$. Each round of the loop takes $O(1)$ times, so the total running time is $O(n)$ (with no dependence on $k$). The space complexity is clearly $O(1)$, as we only need to keep track of $a,b,c$.
\end{proof}
%
\begin{proof}[Proof of Theorem \ref{thm:test_almost_homo}]
Let $S$ be a string of length $n$ and suppose that $P = 10^{k-1}$. Given $\epsilon > \delta > 0$ where $\delta \leq \epsilon / (16 + c)$, we will use the following simple tester to distinguish between the case that $S$ is $\epsilon$-far and the case that $S$ is $\delta$-close to $P$-freeness.
Since any string $S$ must be $1/k$-close to $P$-freeness, we may assume that $\epsilon \leq 1/k \leq 1/2$.
The tester makes at most $m$ queries, where $m$ will be determined later.
First it chooses uniformly at random and independently (allowing repetitions) $i_1, \ldots, i_{m/3}, j_1, \ldots, j_{m/3k}$.
Then, the tester samples single entries $S_{i_l}$ for any $1 \leq l \leq m/3$ and blocks $S_{j_r}, \ldots, S_{j_r+2k-2}$ for any $1 \leq r \leq {m/3k}$.
The tester accepts the input if and only if there are no $i_l < j_r$ such that $S_{i_l} = 1$ and $S_{j} = \ldots = S_{j + k-1} = 0$ for some $j_r \leq j \leq j_{r} + k - 1$.

Let $T = \{(x_1, y_1), \ldots, (x_t, y_t)\}$ be a maximal set of non-overlapping $P$-evidences in $S$.
Suppose first that $S$ is $\epsilon$-far from $P$-freeness; then $t \geq \epsilon n$. The probability that none of the locations $x_1, \ldots, x_{t/2}$ was sampled by the tester as a single entry is at most $(1-\epsilon/2)^{m/3} \leq e^{-\epsilon m / 6}$. On the other hand, the probability that none of the $0^{k-1}$-copies starting at locations $y_{t/2 + 1}, \ldots, y_t$ is fully contained in one of the sampled blocks is at most $(1 - \epsilon k / 2)^{m/3k} \leq e^{-\epsilon m / 6}$. In total, the acceptance probability in this case, i.e.\@ the probability that no $P$-evidence was found is at most $2e^{-\epsilon m / 6}$.

On the other hand, if $S$ is $\delta$-close to $P$-freeness then $t \leq \delta n$. Since $T$ is a maximal set of non-overlapping $P$-evidences, to find a $P$-evidence, the tester must either sample at least one of $x_1, \ldots, x_t$ as a single entry or one of the entries of the form $y_i + \delta$, with $1 \leq i \leq t$ and $0 \leq \delta k-1$ as part of one of the sampled blocks.
The probability that the first event does not occur is at least $(1-\delta)^{m/3} > e^{-2 \delta m / 3}$. The probability that the second event does not occur is at least $(1-3k\delta)^{m/3k} > e^{-2\delta m}$. Hence the acceptance probability here, which is the probability that both events do not occur is at least $e^{- 8 \delta m / 3} \geq e^{- \epsilon m / (6 + c/8)}$. Clearly, if we pick $m = c' / \epsilon$ for a large enough constant $c'$ that depends only on $c$, we will get that the quantity $e^{- \epsilon m / (6 + c/8)} - 2e^{-\epsilon m / 6}$ is positive and bounded away from zero; Therefore, with this choice of $m$ the tester accepts $\delta$-close inputs with probability that is bigger by an absolute constant than the acceptance probability for any $\epsilon$-far input. This finishes the proof of the theorem, since one can build a tester that has the same query complexity (up to a multiplicative constant), in which $\delta$-close inputs are accepted with probability at least $2/3$ while $\epsilon$-far inputs are rejected with probability at least $2/3$, as desired.
%
%
%
%
\end{proof}


\section{A Lower Bound for Testing Pattern Freeness}\label{sec.lower.bound}
Here we show that any tester that makes $o_{\epsilon}(\frac{1}{\epsilon})$ queries in an attempt to test if a fixed pattern appears in a string, must err with probability greater than $1/3$. While the lower bound of \cite{Alon} already entails a lower bound of $\Omega(1/\epsilon)$ for testing pattern-freeness (as for any fixed pattern $J$ the language consisting of all strings not containing $J$ is regular), we give here a self-contained proof for two reasons. First, our proof extends to the case where $k$, the length of $J$, is allowed to depend on $n$. Second, it is not hard to adapt the proof to the 2D (or higher dimensions) and demonstrate a lower bound of $\Omega(1/\epsilon)$ on the query complexity of testing pattern freeness of multidimensional arrays.

\begin{theorem}\label{thm:LB}
\textnormal{\textbf{[Lower bound of $\Omega(1/\epsilon)$ queries]}}
Suppose $k$ is even and consider the pattern $J:=0^{k/2-1}10^{k/2}$. Any tester that distinguishes with probability at least $2/3$ between the case that a string $I$ is $J$-free and the case that $I$ is $\epsilon$-far from being $J$-free, makes at least $\frac{1}{13 \epsilon}$ queries to $I$.
\end{theorem}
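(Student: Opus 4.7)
The plan is to apply Yao's minimax principle: it suffices to exhibit a pair of distributions $\mathcal{D}_0$, supported on $J$-free strings, and $\mathcal{D}_1$, supported on strings that are $\epsilon$-far from $J$-freeness, such that no deterministic adaptive algorithm making $q < \frac{1}{13\epsilon}$ queries can distinguish them with advantage at least $1/3$.

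I would take $\mathcal{D}_0$ to be the point mass on $0^n$, which is trivially $J$-free. For $\mathcal{D}_1$, set $\ell=\lfloor 1/\epsilon\rfloor$, partition $[n]$ into $\epsilon n$ consecutive blocks $B_1,\ldots,B_{\epsilon n}$ of length $\ell$, and independently for each $B_j$ sample a position $p_j$ uniformly from the interior $\{(j-1)\ell+k/2,\ldots,j\ell-k/2-1\}$ of $B_j$, placing a $1$ at $p_j$ and $0$s everywhere else. (We may assume $\epsilon\le 1/(13k)$, since otherwise the claimed bound is an $O(1)$ and follows trivially.) By construction, consecutive $1$s are separated by at least $k+1$ positions, so each $p_j$ has at least $k/2-1$ zeros to its left and $k/2$ zeros to its right; thus $S$ contains a $J$-copy whose unique $1$-entry lies at $p_j$. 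These $\epsilon n$ copies occupy pairwise disjoint length-$k$ windows, so any deletion set has size at least $\epsilon n$, i.e.\ $S$ is $\epsilon$-far from $J$-freeness.

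The core step is the query bound. Fix a deterministic adaptive tester $\mathcal{T}$ making $q<\frac{1}{13\epsilon}$ queries. Under $\mathcal{D}_0$ every answer is $0$, so the query sequence $q_1,\ldots,q_q$ along the all-zero transcript is fixed in advance; under $\mathcal{D}_1$ the tester produces the identical transcript and output unless some $q_i$ hits a hidden $1$. Because the hidden $1$s in distinct blocks are independent and the prior zero-answers in a given block only rule out specific interior positions, I would show by induction that
\[
\Pr_{\mathcal{D}_1}\!\big[q_i=1 \,\big|\, q_1=\cdots=q_{i-1}=0\big] \;\le\; \frac{1}{\ell-k-(i-1)},
\]
and therefore
\[
\Pr_{\mathcal{D}_1}\!\big[\exists i\le q:\, q_i=1\big] \;\le\; \sum_{i=1}^{q}\frac{1}{\ell-k-i+1} \;\le\; \frac{q}{\ell-k-q}.
\]
A direct calculation using $q<1/(13\epsilon)$, $\ell=\lfloor 1/\epsilon\rfloor$, and $\epsilon\le 1/(13k)$ gives $\ell-k-q\ge (11/13)/\epsilon$, so the right-hand side is at most $1/11$. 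Consequently, the distributions of $\mathcal{T}$'s output under $\mathcal{D}_0$ and $\mathcal{D}_1$ differ in total variation by less than $1/3$, ruling out a valid tester.

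The main obstacle I expect is the careful treatment of the adaptive case, since $\mathcal{T}$ may concentrate its queries arbitrarily: the induction has to account for the possibility that many queries land in the same block, which is precisely where the $(\ell-k-(i-1))$ denominator comes in. The theorem's remark about the multidimensional extension should follow by the same scheme: partition $[n]^d$ into sub-cubes of side $\lfloor\epsilon^{-1/d}\rfloor$, place a single $1$ in the interior of each, and repeat the argument essentially verbatim.
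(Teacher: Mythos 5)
Your proof is correct and follows essentially the same strategy as the paper: Yao's principle applied to $\mathcal{D}_0=\delta_{0^n}$ versus a hard distribution of sparsely placed $1$s, then bounding the probability that a short query sequence along the all-zero transcript ever hits a $1$. The differences are only in the details: the paper scatters $2\epsilon n$ ones over a uniformly random subset of the $n/k$ length-$k$ intervals and applies a direct union bound over intervals (yielding $4\epsilon|X|\le 4/13<1/3$), while you place exactly one $1$ per length-$\lfloor 1/\epsilon\rfloor$ block and bound the per-query conditional hit probability by $1/(\ell-k-(i-1))$, which makes the handling of adaptivity somewhat more explicit; aside from the harmless dropped $-1$ in your estimate $\ell-k-q\ge(11/13)/\epsilon$ (which still yields $q/(\ell-k-q)\le 1/10 < 1/3$), the argument is sound.
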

Note: we show that the lower bound above applies for wide ranges of possible values for $\epsilon$: ranging from $\epsilon=O_n(1/n)$ to $\epsilon=\Omega_k(1/k)$. We did not attempt to optimize the constant $\frac{1}{13}$ in the theorem above. Furthermore, to ease readability we avoid using floor/ceiling signs.
\begin{proof}
By Yao's Principle \cite{Yao}, it suffices to construct two distributions $B$ and $C$ over length $n$ strings where all strings in $B$ are $J$-free and all strings in $C$ are $\epsilon$-far from being $J$-free, with the property that any \emph{deterministic} algorithm $D$ making less than $\frac{1}{13\epsilon}$ queries to $I$ sampled from $A=\frac{1}{2}B+\frac{1}{2}C$ (namely we sample from $B$ with probability $1/2$ and from $C$ otherwise), must err with probability greater than $1/3$. We now explain how to construct these distributions.

The distribution $B$ is just the single string $0^n$ sampled with probability $1$. For the distribution $C$ we take the string $I=0^n$ and divide it to $n/k$ disjoint intervals of length $k$: $I_1,\ldots I_{n/k}$.
Then, we sample randomly a subset of $2\epsilon n$ of these intervals. For each length $k$ interval in the subset, we choose one of the last (right side) $k/2$ locations and flip it to be a 1.

Clearly any string from $B$ is $J$-free with probability $1$, while every string in $C$ contains exactly $2\epsilon n$ occurrences of $J$. Furthermore, as changing a single location in a string $I$ sampled from $C$ can remove at most $2$ occurrences of $J$, such a string $I$ is $\epsilon$-far from being $J$-free. 

Consider a deterministic tester $D$ that performs at most $\frac{1}{13\epsilon}$ queries to the string $I$. Since $D$ is deterministic, in the case that it encounters only zeros during its queries, it has to either declare the string $I$ as being $J$-free or not. 
The first option is that $D$ rejects (declares '$\epsilon$-far from $J$-free') when it sees only zeros. In this case $D$ errs with probability $1$ if $I$ was chosen from $B$, hence it errs with overall probability of at least $1/2$.
We next handle the second case, where $D$ accepts (declares 'free') when it sees only zeros. We show that in this case $D$ errs with probability greater than 1/3, by focusing on the distribution C (i.e. the $\epsilon$-far case).

Being deterministic, when $D$ sees only zeros it queries a \emph{fixed} set $X$ of locations, where we chose $|X|=\frac{1}{13\epsilon}$. All we need to show is that under the random process of the distribution $C$, the fixed set of locations $X$ does not contain a 1 w.p. $> 2/3$. This will suffice, since all strings from $C$ (which is chosen w.p. 1/2) are $\epsilon$-far from being $J$-free, and therefore $D$ will fail w.p. $> 1/2\cdot2/3=1/3$.

Let $X_i=X\cap I_i$ be the set of  locations from $X$ that are in interval $I_i$.
An interval $I_i$ has $k/2$ locations where a 1 could be placed (the right half of the interval) and we can assume that the set $X_i$ is contained in the right half of $I_i$ (since the left half is zero w.p. 1 hence would be pointless to query by any algorithm).
Clearly,
$
Pr[X_i \text{ contains a 1}] = Pr[\text{$I_i$ contains a 1}]\cdot Pr[\text{the 1 is within $X_i$}] =
\frac{2\epsilon n}{n/k}\cdot \frac{|X_i|}{k/2}=4\epsilon|X_i|
$,
 leading to
$
Pr[X \text{ contains a 1}] \le \sum_{i=1}^{n/k} 4\epsilon|X_i| = 4\epsilon|X| < 1/3
$, by union bound.
\end{proof}
As noted, Theorem~\ref{thm:LB} can be generalized to the higher-dimensional settings (details omitted).

%
%




\begin{thebibliography}{11}

\bibitem{Bipartite}
N. Alon (2002).
\newblock Testing subgraphs in large graphs,
\newblock {\em Random structures and algorithms}, 21(3–4):359--370.
\bibitem{ABen}
N. Alon and O. Ben Eliezer (2016).
\newblock{\em Removal lemmas for matrices,}
\newblock arXiv preprint 1609.04235.

\bibitem{Alon}
N. Alon, M. Krivelevich, I. Newman and M. Szegedy (2001).
\newblock Regular languages are testable with a constant number of queries,
\newblock {\em SIAM Journal on Computing}, 30, 1842--1862.

\bibitem{Large}
 N. Alon, E. Fischer, M. Krivelevich and M. Szegedy (2000).
 \newblock Efficient testing of large graphs,
 \newblock {\em Combinatorica}, 20, 451--476.


\bibitem{Induced}
 N. Alon, E. Fischer and I. Newman (2007).
 \newblock Efficient testing of bipartite graphs for forbidden induced subgraphs,
 \newblock {\em SIAM Journal on Computing}, 37.3, 959--976.


\bibitem{AS}
N.~Alon and J.~Spencer (2008).
\newblock {\em The Probablistic Method}.
\newblock Wiley.


\bibitem{Amir0}
A. Amir, G. Benson (1998).
\newblock Two-Dimensional Periodicity in Rectangular Arrays,
\newblock {\em SIAM Journal on Computing}, 27, 90-106.

\bibitem{Amir1}
A. Amir, G. Benson, M. Farach (1994).
\newblock An Alphabet Independent Approach to Two-Dimensional Pattern Matching,
\newblock  {\em SIAM Journal on Computing}, 23, 313-323.

\bibitem{Awasthi}
P. Awasthi, M. Jha, M. Molinaro and S. Raskhodnikova (2016).
\newblock Testing Lipschitz functions on hypergrid domains.
\newblock {\em Algorithmica}, 74(3), 1055--1081.

\bibitem{Bar}
Y. Bar-Hillel, M. Perles, and E. Shamir (1964).
\newblock On formal properties of simple phrase structure grammars,
\newblock In Y. Bar-Hillel, editor, {\em Language and Information: Selected Essays on Their Theory and Application}, 116--150. Addison-Wesley, Reading, Massachusetts.

\bibitem{Berman}
P. Berman, M. Murzabulatov, S. Raskhodnikova (2015).
\newblock Constant-Time Testing and Learning of Image Properties,
\newblock arXiv prepreint 1503.01363.

\bibitem{Berman2}
P. Berman, M. Murzabulatov and Sofya Raskhodnikova (2016).
\newblock Tolerant Testers of Image Properties.
\newblock {\em ICALP}, 1--90:14.

\bibitem{Boyer}
R.S. Boyer and J.S. Moore (1977).
\newblock A fast string searching algorithm,
\newblock {\em Comm. ACM}, 20(10), 762--772.

\bibitem{Cole}
R. Cole (1991).
\newblock Tight Bounds on the Complexity of the Boyer-Moore String Matching Algorithm,
\newblock {\em SODA}, 224--233.

\bibitem{Czumaj}
C. Maxime, A. Czumaj, L. Gasieniec, S. Jarominek, T. Lecroq, W. Plandowski, and W. Rytter (1994).
\newblock Speeding up two string-matching algorithms,
\newblock {\em Algorithmica}, 12, 247--267.

\bibitem{NewmanStoc}
E. Fischer, and I. Newman (2001).
\newblock Testing of matrix properties,
\newblock {\em STOC}, 286--295

\bibitem{Newman}
E. Fischer and I. Newman (2007).
\newblock Testing of matrix-poset properties,
\newblock {\em Combinatorica}, 27(3), 293--327.


\bibitem{Rozenberg}
E. Fischer, E. Rozenberg,
\newblock Lower bounds for testing forbidden induced substructures in
bipartite-graph-like combinatorial objects,
\newblock  Proc. RANDOM 2007, 464-478.

\bibitem{Fowler}
R. Fowler, M. S. Paterson and S. L. Tanimoto (1981).
\newblock Optimal packing and covering in the plane are NP-complete.
\newblock {\em Information processing letters}, 12(3), 133--137.

\bibitem{Galil3D}
Z. Galil, J. G. Park and K. Park. (2004).
\newblock Three-dimensional periodicity and its application to pattern matching.
\newblock {\em SIAM Journal on Discrete Mathematics}, 18(2), 362--381.

\bibitem{Galil}
Z. Galil and J. I. Seiferas (1983).
\newblock Time-Space-Optimal String Matching,
\newblock {\em J. Comput. Syst. Sci}, 26(3), 280--294.

\bibitem{Goldreich}
O. Goldreich, S. Goldwasser and D. Ron (1998).
\newblock Property testing and its connection to learning and approximation,
\newblock {\em JACM}, 45, 653--750.

\bibitem{Kark}
J. Kärkkäinen and E. Ukkonen (2008).
\newblock Multidimensional string matching.
\newblock In {\em Encyclopedia of Algorithms}, 559--562.

\bibitem{Kleiner}
I. Kleiner, D. Keren, I. Newman, O. Ben-Zwi (2011).
\newblock Applying Property Testing to an Image Partitioning Problem,
\newblock {\em IEEE Trans. Pattern Anal. Mach. Intell}, 33(2), 256--265.

\bibitem{Korman}
S. Korman, D. Reichman, G. Tsur and S. Avidan.
\newblock Fast-Match: Fast Affine Template Matching,
\newblock {International Journal of Computer Vision}, to appear.

\bibitem{knuth}
D. E. Knuth, J. H. Morris Jr. and V. R. Pratt (1977).
\newblock Fast Pattern Matching in Strings,
\newblock {\em SIAM J. Comput}. 6(2): 323--350.

\bibitem{Lecroq}
T. Lecroq  (2007).
\newblock Fast exact string matching algorithms,
\newblock {\em Information Processing Letters}, 102(6), 229-235.

\bibitem{Navarro}
G. Navarro and M. Raffinot (2000).
\newblock Fast and flexible string matching by combining bit-parallelism and suffix automata,
\newblock {\em Journal of Experimental Algorithmics (JEA)} 5: 4.

\bibitem{Parnas}
M. Parnas, D. Ron and R. Rubinfeld (2006).
\newblock Tolerant property testing and distance approximation.
\newblock {\em Journal of Computer and System Sciences}, 72(6), 1012--1042.

\bibitem{sofya}
S. Raskhodnikova (2003).
\newblock Approximate testing of visual properties,
\newblock{\em RANDOM}, 370--381.

\bibitem{Rivest}
R. L. Rivest (1977).
\newblock On the Worst-Case Behavior of String-Searching Algorithms,
\newblock {\em SIAM J. Comput.} 6(4): 669--674.

\bibitem{Rubinfeld}
R. Rubinfeld and M. Sudan (1996).
\newblock Robust characterization of polynomials with applications to program testing,
{\em SIAM J. Comput.} 25, 252--271.

\bibitem{Ron}
G. Tsur and D. Ron (2014).
\newblock Testing properties of sparse images,
\newblock {\em ACM Transactions on Algorithms} 4.

\bibitem{Yao}
A. C. Yao (1977).
\newblock {Probabilistic computation, towards a unified measure of complexity,}
\newblock {\em FOCS}, 222--227.

\end{thebibliography}


\end{document}